%% file: main.tex
\pdfoutput=1
\documentclass[11pt]{article}
\def\SplitSupplement{}
\usepackage[a4paper,margin=1in]{geometry}
\usepackage[T1]{fontenc}
\usepackage[tt=false]{libertine}
\usepackage{amsmath,amsthm,mathtools}
\usepackage[libertine]{newtxmath}
\usepackage{booktabs,tabularx,multirow,makecell,array}
\usepackage{graphicx,xcolor}
\usepackage{tikz}
\usetikzlibrary{arrows.meta,positioning,shapes.geometric,fit,calc,patterns}
\usepackage{pgfplots}
\pgfplotsset{compat=1.18}
\usepackage[numbers,sort&compress]{natbib}
\usepackage{enumitem}
\usepackage{algorithm,algpseudocode}
\usepackage[hidelinks]{hyperref}
\usepackage{orcidlink}
\usepackage{aliascnt}
\usepackage{placeins}
\usepackage[capitalise]{cleveref}

\input{macros.tex}
\input{figstyle.tex}

\newtheorem{theorem}{Theorem}
\newaliascnt{lemma}{theorem}\newtheorem{lemma}[lemma]{Lemma}\aliascntresetthe{lemma}
\newaliascnt{proposition}{theorem}\newtheorem{proposition}[proposition]{Proposition}\aliascntresetthe{proposition}
\newaliascnt{corollary}{theorem}\newtheorem{corollary}[corollary]{Corollary}\aliascntresetthe{corollary}
\theoremstyle{definition}
\newaliascnt{definition}{theorem}\newtheorem{definition}[definition]{Definition}\aliascntresetthe{definition}
\newtheorem{example}{Example}
\theoremstyle{remark}
\newaliascnt{remark}{theorem}\newtheorem{remark}[remark]{Remark}\aliascntresetthe{remark}
\newtheorem*{implication}{Implication}
\crefname{lemma}{Lemma}{Lemmas}\crefname{proposition}{Proposition}{Propositions}
\crefname{corollary}{Corollary}{Corollaries}\crefname{definition}{Definition}{Definitions}
\crefname{remark}{Remark}{Remarks}\crefname{example}{Example}{Examples}
\crefname{figure}{Figure}{Figures}\crefname{table}{Table}{Tables}

\title{Exact Trade-Level Attribution of\\ FRTB-IMA Capital}
\author{Yuhe Sui\,\orcidlink{0009-0002-7456-0804}\\[3pt]
  \small Quantitative Research Society, Singapore\\
  \small Nanyang Technological University, Singapore}
\date{September 2026}
\hypersetup{pdftitle={Exact Trade-Level Attribution of FRTB-IMA Capital},pdfauthor={Yuhe Sui}}

\begin{document}
\maketitle

\begin{abstract}
The internal-models approach of the Fundamental Review of the Trading Book (FRTB-IMA) determines market-risk capital at the aggregate level of a bank's approved trading desks, but it does not say how that charge should be attributed to individual trades. Risk managers, capital planners and validators often need one, and it is hard to obtain: capital passes through expected-shortfall, stress-scaling, non-modellable, default-risk, history and standardised-approach layers that are nonsmooth and depend on past positions. We present a formula-complete attribution of FRTB-IMA capital to (observation date, trade) positions that reconciles to computed capital. The calculation is written as a computational graph; after one forward evaluation, a single reverse pass propagates allocation vectors through every node by local Euler rules (expected-shortfall dual weights, active-branch weights at floors and maxima, the default-quantile scenario), avoiding one capital re-evaluation per trade. Under stated homogeneity assumptions the ledger provably sums to capital and to any desk or product grouping, and equals the Euler (gradient) allocation at smooth points; it is exact relative to the implemented calculation and this rule, not a regulatorily prescribed attribution. Its past-date entries capture capital allocated to earlier observation dates, which today's-book sensitivities miss by an exact identity at smooth points. At regulatory ties, where the marginal is not unique, we report the realisable marginal vertices, identified by exact linear-programming tests, instead of one selection. On 14 smooth 96-trade synthetic benchmark books, the median share of capital allocated to past observation dates is \frozenMed{} (range \frozenMin{}--\frozenMax{}), and ledgers reconcile to within $\reconS$ relative error.
\end{abstract}

\tableofcontents
\newpage

\section{Introduction}\label{sec:intro}

A bank's market-risk capital under the FRTB internal-models approach \citep{bcbs2019mar,bcbs2020mar33} is a single number, and many users need it broken down by trade: a desk head deciding which positions to cut; a finance function that must charge capital to desks so that the charges add up to the reported figure; a model validator who must be able to check every step \citep{frb2011sr117,pra2023ss123,bcbs2013riskdata}; and, increasingly, automated decision systems that propose hedges and must report their capital consequence. We attribute the \emph{full} FRTB-IMA capital number back to trades, exactly, and show which of the resulting numbers can be trusted for which purpose.

\paragraph{Why this is hard.} Capital allocation has a mature theory built around one attribution vector per portfolio: Euler or gradient allocation \citep{tasche2008euler,buch2008coherent,kalkbrener2005axiomatic}, incremental and activity-based rules \citep{balog2017properties}, Shapley allocations \citep{denault2001coherent,shapley1953value,scaringi2025sharpening}, and FRTB-specific constructions \citep{li2018capital,schulze2018capital}. These constructions are natural when the risk measure is smooth and positively homogeneous in today's book. Full FRTB-IMA capital is neither. Its expected-shortfall layers have tail ties, its stress ratio is floored, and its default charge is a quantile. Its history terms take the maximum of the latest charge and a multiplied 60-day average, and its final layer takes a minimum against the standardised approach, with an endogenous surcharge. At such a function, ``responsibility'' stops being one number: \cref{sec:toy} shows, with two trades, that removal effects can sum to zero while capital is one, and that the local marginal can be a set rather than a vector.

\paragraph{Key idea: one decomposition, several allocation measures.} Allocation practice uses several measures that coincide for a smooth, positively homogeneous risk measure but separate on FRTB-IMA capital:
\begin{itemize}[leftmargin=*,itemsep=1pt]
\item the \emph{Euler (local marginal) allocation}: which small changes to today's trades move capital;
\item the \emph{incremental (removal) effect}: how capital changes if a trade is removed;
\item the \emph{accounting ledger}: how today's capital is charged to today's trades;
\item the \emph{time decomposition}, or origin ledger: which dated positions generated today's capital;
\item the \emph{capital-minimising hedge} (hedge choice): which candidate action lowers capital most.
\end{itemize}
All of them are read off one exact decomposition of capital over (date, trade) positions, and each gets its own typed answer. Wherever a measure is defined, our engine, \tool{}, computes it exactly through the whole capital computation, and it never merges different measures into a single ``attribution vector''.

\paragraph{Contributions.}
\begin{enumerate}[label=\textbf{C\arabic*},leftmargin=*,itemsep=2pt]
\item \emph{Exact decomposition and typed measures.} An exact additive decomposition of capital over (date, trade) positions, with the standard allocation measures typed on top of it, including set-valued marginals at ties and an explicit ``no answer'' outcome (\cref{sec:toy,sec:typed}).
\item \emph{An exact attribution algorithm.} We write FRTB-IMA capital as a computational graph and propagate typed allocations through it by node-local rules in a single reverse pass. Exact linear-programming tests decide which tie combinations are realisable (\cref{sec:graph,sec:engine}).
\item \emph{Complete theory.} We prove the following, with all proofs in \cref{app:theory}: the origin and accounting ledgers reconcile to capital and re-sum to any grouping; at smooth points the ledger is the gradient allocation; today's-book sensitivity misses exactly a ``frozen remainder'' on past dates; removal effects do not add up; realisable tie combinations are valid marginal vertices, whereas independent choices over-cover; and reconciliation alone cannot certify a marginal.
\item \emph{Certified selective repricing.} A pricing algorithm with a soundness theorem: it is exact whenever it certifies, and in the worst case it reprices everything. A matching information bound shows why tail rows cannot be attributed from their totals alone (\cref{sec:cert}).
\item \emph{Benchmark evidence.} A pre-registered synthetic benchmark shows that the measures come apart materially: on the 14 smooth 96-trade main books, 21--99\% of capital sits on past observation dates. The ledgers reconcile to machine precision (within $\reconS$), and each established method covers at most two of the measures, for structural rather than numerical reasons (\cref{sec:evidence}).
\end{enumerate}

\paragraph{Scope.} All books are synthetic, and the standardised-approach inputs are supplied as given data rather than computed by a standardised-approach engine. Certified repricing is proved and tested only on a declared class: single-underlying options whose price is convex in spot. We make no speed or cost claim; no timing was measured on a guaranteed-idle machine. \Cref{sec:limits} states the remaining boundaries.

\section{A two-trade example}\label{sec:toy}

Before any regulatory notation, two small examples show why an attribution must say which measure it computes. Both are illustrations, not benchmark evidence.

\begin{example}[A tie]\label{ex:max}
Let capital be $K(w)=\max(w_1,w_2)$ at $w=(1,1)$, so $K=1$. This maximum is the smallest instance of the maxima, floors and minima found throughout FRTB-IMA, and it is itself an expected shortfall: two scenarios with losses $w_1$ and $w_2$, equally weighted, at the 97.5\% level.
\begin{itemize}[leftmargin=*,itemsep=1pt]
\item \emph{Removal effect.} Removing trade $i$ gives $\Delta_i=K(w)-K(w-w_ie_i)=1-1=0$ for both trades. The removal effects sum to $0\neq K$.
\item \emph{Local marginal.} $K$ is not differentiable at $w$. Its generalized gradient \citep{clarke1983optimization} is the segment $\operatorname{conv}\{(1,0),(0,1)\}$, and so is the set of Euler allocations $w\odot g$. Every point of the segment sums to $K=1$, but none is ``the'' marginal.
\item \emph{Accounting ledger.} A declared tie weight of $\tfrac12$ selects $(\tfrac12,\tfrac12)$. The split reconciles, but it is a convention, not a derivative.
\item \emph{Hedge choice.} ``Which single trade should be removed to lower capital?'' has no winner, because neither removal lowers $K$.
\end{itemize}
\end{example}

\begin{example}[Frozen history]\label{ex:hist}
FRTB charges combine today's value with a multiplied average of past values. Take one position per date and the simplified charge
\[C=\max\bigl\{I_1,\; m\cdot\tfrac12(I_0+I_1)\bigr\},\qquad m=1.5,\]
with $I_0=3$ (yesterday's book) and $I_1=1$ (today's book). The average branch is active, and $C=1.5\cdot2=3$.
\begin{itemize}[leftmargin=*,itemsep=1pt]
\item The \emph{historical-origin ledger} is $(2.25\mid0.75)$ for (yesterday $\mid$ today). It sums to $3$.
\item The \emph{local marginal} of today's trade, with history fixed, is $I_1\,\partial C/\partial I_1=0.75$. It leaves a \emph{frozen remainder} of $2.25$, or 75\% of capital: while the average branch is active, no reduction of today's book can take $C$ below $2.25$.
\item A declared \emph{accounting ledger} that treats yesterday's position as today's trade assigns all $3$ to today's trade. The assignment is an accounting convention, not a sensitivity.
\item The \emph{removal effect} of today's trade is $3-2.25=0.75$.
\end{itemize}
\end{example}

\begin{table}[t]
\centering\small
\caption{A two-trade example (illustration). Each row is a well-defined measure with its own exact answer in \cref{ex:max} (tie) and \cref{ex:hist} (history); ``--'' marks a measure the example does not use.}\label{tab:toy}
\begin{tabularx}{\linewidth}{@{}>{\raggedright\arraybackslash}p{3.2cm}>{\raggedright\arraybackslash}X>{\raggedright\arraybackslash}p{2.6cm}>{\raggedright\arraybackslash}p{3.3cm}@{}}
\toprule
Question & Tie example ($K=1$) & History example ($C=3$) & Adds up to capital? \\
\midrule
Local marginal (today's trades) & segment $[(1,0),(0,1)]$ & $0.75$ & tie: each vertex yes; history: no \\
Removal effect & $(0,0)$ & $0.75$ & no \\
Accounting ledger & $(\tfrac12,\tfrac12)$ (convention) & $3$ (convention) & yes \\
Historical origin & -- & $(2.25\mid0.75)$ & yes \\
Hedge choice & tie: no single removal lowers $K$ & -- & (a ranking) \\
\bottomrule
\end{tabularx}
\end{table}

\Cref{tab:toy} collects the answers. A likely prior is that different methods approximate one underlying truth. The prior fails even here: the questions have different true answers. \Cref{sec:frozen} shows that the history effect is not an artefact of the toy: on the 14 smooth 96-trade main books, the frozen remainder is 21--99\% of capital.

\section{The FRTB-IMA capital graph}\label{sec:graph}

Attribution must trace where ties and history enter capital, so we state the formula in full. Following MAR33 \citep{bcbs2020mar33}, we write capital as a directed acyclic graph of primitive nodes (\cref{fig:dag}).

\paragraph{Expected shortfall.} Let $L$ be a scenario-by-trade matrix of unit-notional losses (losses positive) and $w\in\R^N$ the signed notionals, so the portfolio loss is $z=Lw$. Let $p$ be the scenario weights, $\alpha=0.975$, $\beta=1-\alpha$ and $c_s=p_s/\beta$. Expected shortfall (ES) has a primal and a dual form \citep{rockafellar2000optimization,rockafellar2002conditional}, whose dual weights $q$ later allocate ES to trades (\cref{sec:rules}):
\begin{equation}\label{eq:es}
\rho(z)=\min_\tau\Bigl\{\tau+\tfrac1\beta\textstyle\sum_sp_s(z_s-\tau)_+\Bigr\}=\max\{q^\top z:\;0\le q\le c,\;\mathbf 1^\top q=1\}.
\end{equation}
On the optimal dual face, $q=c$ on scenarios above the optimal threshold $\tau$, $q=0$ on scenarios below it, and any split of the remaining mass is allowed on the tied scenarios. The face is a single point unless the tail has a genuine tie.

\paragraph{Liquidity horizons, stress ratio and internal-models charge.} ES is evaluated on 90 blocks $e_{c,v,j}$, one for each combination of risk class $c\in\{0,\dots,5\}$ (all classes together, then interest rate, equity, foreign exchange, commodity and credit spread), factor set and period $v\in\{\mathrm{FC},\mathrm{RC},\mathrm{RS}\}$ (full/current, reduced/current and reduced/stressed) and liquidity horizon $j$ (10, 20, 40, 60 and 120 days). Within each class and factor set, the horizons aggregate as
\begin{equation}\label{eq:lh}
E_{c,v}=\Bigl(\textstyle\sum_ja_je_{c,v,j}^2\Bigr)^{1/2},\qquad a=(1,1,2,2,6).
\end{equation}
The reduced/stressed value $S_c$ is scaled by the ratio of the full to the reduced current value, floored at one, and the six class values $G_c$ are mixed into the internal-models charge $I$ (IMCC):
\begin{equation}\label{eq:ratio}
G_c=S_c\max\bigl(1,F_c/R_c\bigr),\quad S_c=E_{c,\mathrm{RS}},\;F_c=E_{c,\mathrm{FC}},\;R_c=E_{c,\mathrm{RC}}>0;\qquad I=\tfrac12G_0+\tfrac12\textstyle\sum_{c=1}^5G_c .
\end{equation}

\paragraph{Non-modellable risk and default risk.} Let $\xi(w),\eta(w),\zeta(w)$ collect the stress charges for idiosyncratic credit, idiosyncratic equity and the remaining non-modellable factors. Each charge is either a fixed stress vector or a finite support function $\max_kB_k^\top w$. The non-modellable charge is
\begin{equation}\label{eq:ses}
N(w)=\|\xi(w)\|_2+\|\eta(w)\|_2+\sqrt{\zeta(w)^\top Q\zeta(w)},\qquad Q=(1-r^2)\,\mathrm{Id}+r^2\mathbf 1\mathbf 1^\top,\;r=0.6.
\end{equation}
Here $\mathrm{Id}$ is the identity matrix and $r$ a correlation parameter. The default charge is the lower $0.999$-quantile of a finite law $D$ of default losses:
\begin{equation}\label{eq:drc}
d(w)=\inf\{u:\Pr(Dw\le u)\ge0.999\}.
\end{equation}

\paragraph{History.} Each observation date $s$ carries its own book, and the charges $I_s,N_s,d_s$ are computed on that book and never recomputed. With multiplier $m_c=1.5$, 60 daily observations of $I$ and $N$, and 12 weekly default observations,
\begin{equation}\label{eq:hist}
C_A=\max\bigl\{I_{t-1}+N_{t-1},\;m_c\overline I_{60}+\overline N_{60}\bigr\},\quad
C_D=\max\bigl\{d_{\rm latest},\;\overline d_{12w}\bigr\},\quad J=C_A+C_D .
\end{equation}
Each history charge is the larger of its latest value and its window average (the multiplier applies only to the averaged internal-models charge), so while an average branch is active, the internal-models charge including history, $J$, depends on past books as well as on today's.

\paragraph{Standardised-approach boundary and final capital.} The standardised approach enters through five quantities computed on today's book: the standardised capitals $B$ of the desks that use internal models, $C_U$ of the other desks and $Z$ of all desks, and the desk-level sums $U$ over amber desks and $V$ over green-plus-amber desks. They determine the amber coefficient $k$, the amber surcharge $H$, capital $K$ and risk-weighted assets (RWA):
\begin{equation}\label{eq:final}
k=\frac{U}{2V},\qquad H=k\,(B-J)_+,\qquad K=\min\{J+H+C_U,\;Z\}+(J-B)_+,\qquad\mathrm{RWA}=12.5\,K .
\end{equation}
Capital thus caps $J+H+C_U$ at the all-desk standardised capital $Z$ and adds any excess of $J$ over $B$.

\begin{figure}[t]
\centering
\raisebox{-0.0pt}{\includegraphics{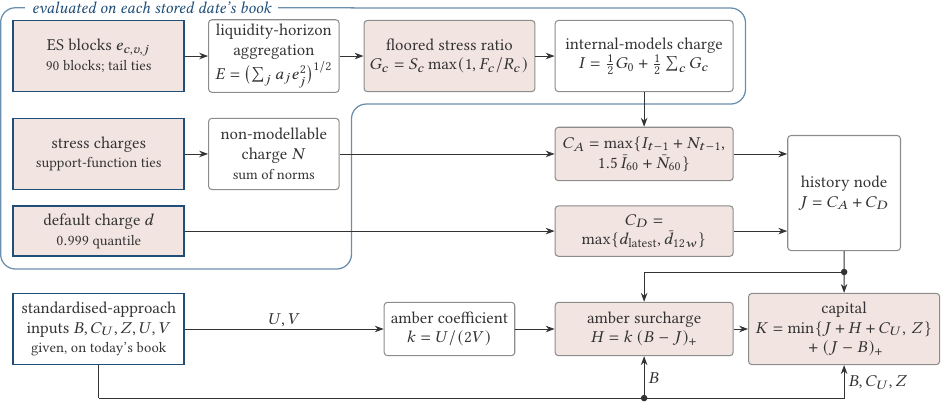}}
\caption{The FRTB-IMA capital graph, from its leaves (left) to capital $K$. Square-cornered nodes are inputs and rounded nodes are computed; shaded nodes can be nondifferentiable, and each needs an explicit rule at its kink (\cref{sec:rules}). The outlined nodes are evaluated on each stored date's book and enter the history maxima.}\label{fig:dag}
\end{figure}

\paragraph{Assumptions and benchmark instantiation.} All results hold under assumptions (A1)--(A5) of \cref{app:th:setting}: losses are linear in notionals scenario by scenario (A1); every primitive is positively homogeneous (A2); the stress-ratio denominators are positive (A3); the standardised-approach inputs are homogeneous and come with reconciling allocations (A4); and each date's charges depend only on that date's book (A5).

In the benchmark (\cref{sec:protocol}), trades are priced by closed forms, or by a lattice for designated option trades \citep{cox1979option}; the non-modellable and default charges come from declared synthetic models over the same trades; past dates apply seeded perturbations of today's notionals; and the standardised-approach inputs are supplied as given data.

\section{Allocation measures, formally}\label{sec:typed}

Because the history node depends on every stored book, allocations live on \emph{(date, trade) positions}. With $T$ observation dates and $N$ trades, the position array is $x\in\R^{T\times N}$, whose row $x_s$ is the book held on date $s$; today's book is the last date, $w=x_{T-1}$. \Cref{tab:questions} summarises the allocation measures and their typed answers. Every allocation the engine returns is typed.

\begin{definition}[Typed allocation]\label{def:typed}
A \emph{typed allocation} of a node value $f$ records four things:
\begin{itemize}[leftmargin=*,itemsep=1pt]
\item the amounts $A$;
\item the question it answers: local marginal, removal effect, accounting ledger, historical origin, or hedge choice;
\item whether it is a \emph{true marginal}, meaning a derivative realisable by one small move of the whole book, or only an \emph{accounting} split;
\item its \emph{reconciliation gap} $|\mathbf 1^\top A-f|$.
\end{itemize}
A non-reconciling allocation is reported as such and never patched.
\end{definition}

\begin{definition}[Realisable marginal set]\label{def:set}
At a tie, the local marginal is a set: the convex hull of the vertex allocations produced by \emph{realisable} branch combinations. A combination is realisable if a single small move of the whole book activates all of its branch choices at once. Choosing each tied child's branch independently is not allowed. Such choices can produce combinations that no move realises, and they can strictly over-cover the true set (\cref{prop:B:overcover}).
\end{definition}

\begin{definition}[Hedge choice]\label{def:decision}
Given a finite library of candidate actions $h$, the hedge-choice answer is the exact ranking of $K(w+h)$. It is summarised by the exact winner and by whether the winner is unique.
\end{definition}

\paragraph{Outcomes.} Every answer carries one of six outcomes. Keeping them apart lets \cref{sec:baselines} report an established method's ``not applicable'' as a structural gate, not a numerical failure.
\begin{itemize}[leftmargin=*,itemsep=1pt]
\item \emph{answered};
\item \emph{no answer on this book}: the question itself is undefined, for example Shapley values when some sub-book capitals do not exist;
\item \emph{not applicable}: outside the method's stated scope;
\item \emph{declines}: the method cannot certify an answer;
\item \emph{fails};
\item \emph{not run}.
\end{itemize}
Approximate answers also state whether they carry a guaranteed error bound.

\begin{table}[t]
\centering\small
\caption{Allocation measures and their typed answers. ``Adds up'' is proved in \cref{sec:reconcile}.}\label{tab:questions}
\begin{tabularx}{\linewidth}{@{}>{\raggedright\arraybackslash}p{2.3cm}>{\raggedright\arraybackslash}X>{\raggedright\arraybackslash}p{2.6cm}>{\raggedright\arraybackslash}X@{}}
\toprule
Measure & Typed answer & Adds up to $K$? & Typical user \\
\midrule
Local marginal & $w\odot\nabla_wK$ with history fixed; a realisable set at ties & no in general: misses the frozen remainder & risk manager: which trades move capital now \\
Removal effect & $K(w)-K(w-w_ie_i)$, history fixed & no in general & trader: what-if screening \\
Accounting ledger & declared remap $P\,A^K$ of the origin ledger & yes & finance: desk charges \\
Historical origin & $A^K$ on (date, trade) positions (the gradient allocation at smooth points) & yes & validator: where capital came from \\
Hedge choice & exact ranking over a finite library & (a ranking) & trader or automated system: choose an action \\
\bottomrule
\end{tabularx}
\end{table}

\section{Exact attribution through the graph}\label{sec:engine}

To reach trades, an allocation of capital has to pass back through optimisers, norms, quantiles, maxima and a degree-zero coefficient. The engine propagates typed allocations through the graph node by node, with an explicit rule at every kink (\cref{sec:rules}), in one reverse pass (\cref{sec:alg}). The result is an exact value for every defined allocation measure and a proof of which answers add up (\cref{sec:reconcile}).

\subsection{Node-local rules}\label{sec:rules}

Each rule maps reconciling child allocations to a reconciling parent allocation. Leaf allocations reconcile directly (\cref{lem:B:leaves}). Every other rule is a linear combination $A^b=\sum_jc_{bj}A^j$ with coefficients that depend on node values only, and \cref{lem:B:nodes} proves that it satisfies the Euler condition $\sum_jc_{bj}\theta_jf_j=\theta_bf_b$, where $f_b$ is the value of node $b$ and $\theta_b\in\{0,1\}$ its homogeneity degree.

\begin{itemize}[leftmargin=*,itemsep=2pt]
\item \emph{Expected shortfall.} $A_i=w_i(L^\top q)_i$ for $q$ on the optimal dual face of \eqref{eq:es}. At a genuine tail tie, the vertices are the extreme dual points, obtained by filling the tied scenarios greedily in every order.
\item \emph{Liquidity horizons.} $A^E=\sum_j(a_je_j/E)A^{e_j}$. If $E=0$, the node is an explicit zero.
\item \emph{Stress ratio.} There are three cases:
  \begin{itemize}[itemsep=0pt]
  \item $F<R$ (floor active): $A^G=A^S$;
  \item $F>R$: $A^G=\tfrac FRA^S+\tfrac SRA^F-\tfrac{SF}{R^2}A^R$;
  \item $F=R$ (exact tie): $A^G=A^S+\lambda\tfrac SR(A^F-A^R)$, with a declared tie weight $\lambda\in[0,1]$.
  \end{itemize}
  $R\le0$ is a domain error, never regularised.
\item \emph{Internal-models charge.} $A^I=\tfrac12A^{G_0}+\tfrac12\sum_cA^{G_c}$.
\item \emph{Non-modellable charge.} $A_i=\sum_ac_aw_ig_{a,i}$, where $g_a$ is an active stress vector and the outer coefficients $c_a$ are $\xi_a/\|\xi\|$, $\eta_b/\|\eta\|$ and $(Q\zeta)_\kappa/\sqrt{\zeta^\top Q\zeta}$. At support-function ties, only realisable row combinations are vertices.
\item \emph{Default quantile.} $A_i=w_iD_{s^\ast,i}$ for a scenario $s^\ast$ attaining the quantile. At ties, only rows reachable from a neighbouring ordering are vertices. A finite-difference derivative is not an allocation here, because it can fail to reconcile.
\item \emph{History.} Each date's allocation is placed on its own block of (date, trade) positions. Averages weight each date by $1/60$ or $1/12$. Maxima use the active branch, or a declared tie weight at a tie.
\item \emph{Amber surcharge.} The degree-zero coefficient $k=U/(2V)$ has allocation $a^k=A^U/(2V)-UA^V/(2V^2)$, with $\mathbf 1^\top a^k=0$. When $B>J$,
\begin{equation}\label{eq:amber}
A^H=k(A^B-A^J)+(B-J)\,a^k .
\end{equation}
\item \emph{Final layer.} The minimum $m=\min\{J+H+C_U,Z\}$ and the positive part $(J-B)_+$ use the active branch, or a tie weight at a tie, and $A^K=A^m+A^{(J-B)_+}$.
\end{itemize}

\subsection{The attribution algorithm}\label{sec:alg}

\Cref{alg:attr} applies these rules in one reverse pass, after a forward pass that records every node's value and branch state.

\begin{algorithm}[t]
\caption{Exact typed attribution of FRTB-IMA capital}\label{alg:attr}
\begin{algorithmic}[1]
\Require book history $x\in\R^{T\times N}$; scenario matrices; default law; stress primitives; standardised-approach inputs with allocations
\State \textbf{Forward pass:} evaluate every node of \cref{fig:dag} in topological order. Record each node value and its branch state: active branch, or tie with margin.
\For{each node $b$ in reverse topological order}
  \State combine the children's allocations with the rule of \cref{sec:rules}, giving $A^b=\sum_jc_{bj}A^j$
  \State if a tie is present, enumerate only realisable branch combinations (exact LP feasibility test)
  \State check $|\mathbf 1^\top A^b-\theta_bf_b|$ against tolerance; on failure, report and stop (never patch)
\EndFor
\State \textbf{Typed views:}
\Statex \quad historical origin $A^K$;
\Statex \quad local marginal = today's block of $A^K$, reporting the frozen remainder $K-\mathbf 1^\top A^K_{T-1}$;
\Statex \quad accounting ledger $PA^K$ for a declared column-stochastic $P$, re-checked;
\Statex \quad removal effects and the hedge choice by exact re-evaluation of $K$.
\end{algorithmic}
\end{algorithm}

Apart from pricing, the work is one vector operation per graph edge on the $T\times N$ position array; this is an arithmetic count, not a timing claim. The realisability tests at ties are exact phase-one linear programs solved with Bland's rule, and a positive answer returns its own witness direction.

\subsection{Which answers add up}\label{sec:reconcile}

\emph{Intuition.} Every node is positively homogeneous in the (date, trade) positions, of degree one for monetary nodes and degree zero for the amber coefficient, and each rule is a local Euler identity. If every child's allocation sums to its degree times its value, and the parent combines the children with coefficients that satisfy the Euler condition, then the parent's allocation also sums to its degree times its value. Induction from the leaves gives reconciliation at the top.

\begin{theorem}[Recursive reconciliation; proved as \cref{thm:B:recon}]\label{thm:recon}
Under (A1)--(A5), the capital allocation produced by the rules of \cref{sec:rules} satisfies $\mathbf 1^\top A^K=K$. Every intermediate node reconciles as well.
\end{theorem}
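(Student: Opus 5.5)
The plan is a structural induction over the capital graph of \cref{fig:dag}, processed in topological order from the leaves to $K$. The invariant is
\[
\mathbf 1^\top A^b=\theta_b f_b,
\]
with $\theta_b=1$ for monetary nodes and $\theta_b=0$ for the amber coefficient $k$.

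\emph{Base case.} The leaves are the scenario losses, the stress primitives, the default rows and the standardised-approach inputs. Here the invariant is \cref{lem:B:leaves}:
\begin{itemize}[leftmargin=*,itemsep=1pt]
\item under (A1), a scenario loss $z_s=\sum_i L_{si}w_i$ is linear, so $A_i=w_iL_{si}$ sums to $z_s$;
\item under (A4), the standardised inputs come with reconciling allocations by assumption;
\item under (A5), each date's leaves live on that date's block of the $T\times N$ array.
\end{itemize}

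\emph{Inductive step.} Every non-leaf rule of \cref{sec:rules} has the form $A^b=\sum_j c_{bj}A^j$, with coefficients $c_{bj}$ that are fixed numbers once the forward pass has run. Summing and using the hypothesis on the children gives
\[
\mathbf 1^\top A^b=\sum_j c_{bj}\,\theta_j f_j .
\]
The step therefore reduces to the local Euler condition $\sum_j c_{bj}\theta_jf_j=\theta_bf_b$, which is \cref{lem:B:nodes}. I would nonetheless check it rule by rule, since each check is a one-line identity.

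\begin{itemize}[leftmargin=*,itemsep=1pt]
\item \emph{Expected shortfall.} $\sum_i w_i(L^\top q)_i=q^\top z=\rho(z)$ for any $q$ on the optimal dual face of \eqref{eq:es}, including every greedy vertex at a tie.
\item \emph{Liquidity horizons.} $\sum_j a_je_j^2/E=E$.
\item \emph{Stress ratio.} If $F>R$, then $\tfrac FR S+\tfrac SR F-\tfrac{SF}{R^2}R=SF/R=G$. If $F<R$, the identity is trivial. If $F=R$, the tie term $\lambda\tfrac SR(F-R)$ vanishes.
\item \emph{Non-modellable charge.} Euler's identity holds for each norm and for $\sqrt{\zeta^\top Q\zeta}$. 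An active support row $B_k$ satisfies $\xi_a=B_k^\top w$ exactly.
\item \emph{Default quantile.} $\sum_iw_iD_{s^*,i}=(Dw)_{s^*}=d$, because $s^*$ attains the quantile.
\item \emph{Averages and maxima.} The averages are linear. For the maxima and the minimum, the active branch equals the node value, and at a tie both branches equal it, so any convex tie weight works.
\end{itemize}

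\emph{Amber layer.} The one place where $\theta=0$ enters is the amber coefficient. Its rule gives
\[
\mathbf 1^\top a^k=U/(2V)-UV/(2V^2)=0=\theta_kk .
\]
For the surcharge $H=k(B-J)$ with $B>J$, the allocation \eqref{eq:amber} sums to $k(B-J)+(B-J)\cdot 0=H$. If $B<J$, then $H=0$ with a zero allocation. If $B=J$, the tie-weighted branch again multiplies $(B-J)=0$. The final layer $K=m+(J-B)_+$ then follows from the min and positive-part cases.

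\emph{Main obstacle.} I expect the difficulty to lie not in the sums but in the fact that coefficients are evaluated at a single forward state. The point to make explicit is that the Euler condition only uses node values, not derivatives. Reconciliation therefore holds at kinks for every declared tie weight and every realisable vertex, with no differentiability assumption. This is exactly why the theorem holds for each vertex separately.

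The domain conditions must also be stated:
\begin{itemize}[leftmargin=*,itemsep=1pt]
\item (A3) gives $R>0$, so the stress-ratio coefficients are defined;
\item $V>0$ is needed for $k$ to be defined;
\item $E=0$ is handled by the explicit zero allocation, which reconciles trivially.
\end{itemize}

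Finally, because the graph is a DAG and every node's children precede it, the induction terminates at $K$ and yields $\mathbf 1^\top A^K=K$. Every intermediate node reconciles along the way.
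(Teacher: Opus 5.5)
Your proposal is correct and follows the paper's proof. The paper inducts in topological order, takes \cref{lem:B:leaves} as the base case and the Euler condition of \cref{lem:B:nodes} as the inductive step, and that step uses only node values, so it holds at ties for any declared weights. The only difference is cosmetic: you refine the graph by treating ES and the default quantile as internal nodes over linear scenario-loss leaves, with coefficients $q_s$ and the indicator of $s^\ast$, which is equivalent to the paper's treatment of them as reconciling leaves.
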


\begin{corollary}[Re-summing; \cref{cor:B:resum}]\label{cor:resum}
The same holds for any declared accounting ledger $PA^K$ with $P\ge0$ and $\mathbf 1^\top P=\mathbf 1^\top$, and for any desk, product or risk-class grouping of either ledger.
\end{corollary}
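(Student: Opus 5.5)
The corollary follows from \cref{thm:recon} by column sums, with no further induction through the graph. Write the origin ledger as $A^K\in\R^{T\times N}$, or as a vector over the $TN$ (date, trade) positions. A declared accounting ledger is $PA^K$, where $P$ maps positions to accounting units and satisfies $P\ge0$ and $\mathbf 1^\top P=\mathbf 1^\top$. Since $\mathbf 1^\top(PA^K)=(\mathbf 1^\top P)A^K=\mathbf 1^\top A^K$, \cref{thm:recon} gives $\mathbf 1^\top PA^K=K$ directly.

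The nonnegativity of $P$ plays no role in the sum identity. It only guarantees that each accounting unit receives a nonnegative mixture of position amounts, so that the ledger is an honest remap rather than a signed recombination. We will say so explicitly, so the reader does not think reconciliation depends on it.

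For groupings, we model a desk, product or risk-class grouping as a partition $\{G_1,\dots,G_k\}$ of the index set of the ledger. The index set is either the positions (for $A^K$) or the accounting units (for $PA^K$). The grouped ledger is $\Pi A$, where $\Pi\in\{0,1\}^{k\times n}$ is the membership matrix. Every column of $\Pi$ contains exactly one $1$, so $\mathbf 1^\top\Pi=\mathbf 1^\top$ and $\Pi$ is itself a column-stochastic remap. The first step then applies again, giving $\mathbf 1^\top\Pi A^K=K$ and $\mathbf 1^\top\Pi PA^K=K$. Composite remaps are handled the same way: a product of column-stochastic matrices is column-stochastic, so any chain of a declared ledger followed by groupings re-sums to $K$.

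The same argument works at every intermediate node. Here we invoke the "every intermediate node reconciles" clause of \cref{thm:recon} in place of its top-level statement, giving $\mathbf 1^\top\Pi PA^b=\theta_bf_b$.

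The only point needing care is the one we expect to be the main obstacle: what "grouping" means when a trade belongs to several groups. Examples are a trade split across desks, or risk-class views where the class-$0$ block overlaps classes $1$--$5$. Our plan is to admit only partitions, or fractional memberships whose weights sum to one per position. Fractional memberships are again column-stochastic and fall under the first step. Overlapping risk-class views that double count positions are excluded, because they fail $\mathbf 1^\top\Pi=\mathbf 1^\top$.

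A final practical remark concerns the check in \cref{alg:attr}. Floating-point re-summation changes the reconciliation gap by at most a rounding term. That is why $PA^K$ is re-checked rather than assumed to reconcile.
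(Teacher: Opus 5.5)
Your proposal is correct and is essentially the paper's proof: apply the identity $\mathbf 1^\top(PA^K)=(\mathbf 1^\top P)A^K=\mathbf 1^\top A^K=K$ from \cref{thm:recon}, then apply the same identity to any grouping matrix whose columns sum to one. Your side points also match the paper's own remarks after \cref{cor:B:resum}: $P\ge0$ is an engine requirement rather than something the sum needs, and a risk-class grouping qualifies only when each trade's class weights sum to one.
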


\noindent The corollary needs no evidence beyond the theorem: for a grouping matrix $P_G$ with $\mathbf 1^\top P_G=\mathbf 1^\top$, it is the identity $\mathbf 1^\top P_Ga=\mathbf 1^\top a$. The engine implements the remap for any nonnegative column-stochastic matrix (columns summing to one) and checks it at runtime. In the evidence reported here, only the identity remap (``yesterday's position is today's trade'') has been exercised.

The theorem is an accounting statement. \Cref{prop:smooth,prop:set} say when the reconciling ledger is also a marginal, at smooth points and at ties, and \cref{prop:frozen,prop:removal} say which other answers do not add up.

\begin{proposition}[The ledger is the gradient at smooth points; \cref{prop:B:smooth}]\label{prop:smooth}
If no tie is active and the standardised-approach inputs come with gradient allocations, then $K$ is differentiable at $x$ and $A^K=x\odot\nabla K(x)$.
\end{proposition}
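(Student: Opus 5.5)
The plan is to prove differentiability of $K$ and the identity $A^K=x\odot\nabla K(x)$ together, by induction over the graph of \cref{fig:dag} in topological order. The claim I will carry up the graph is this: for every node $b$ with value $f_b(x)$, the function $f_b$ is differentiable at $x$ and $A^b=x\odot\nabla f_b(x)$. This is an entrywise identity on the $T\times N$ position array, and it is stronger than reconciliation. Differentiability of $K$ is then the claim at the root.

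\emph{Leaves.} For the leaves I need two facts.
\begin{itemize}[leftmargin=*,itemsep=1pt]
\item The standardised-approach inputs $B,C_U,Z,U,V$ come with gradient allocations by hypothesis, so they satisfy the claim.
\item For a primitive charge on date $s$, (A5) says it depends only on row $x_s$, so its gradient vanishes off that block.
\end{itemize}
For each primitive I use ``no tie active'' to get local smoothness.
\begin{itemize}[leftmargin=*,itemsep=1pt]
\item ES: by (A1), $z=Lx_s$ is linear. With no tail tie the optimal dual face of \eqref{eq:es} is a singleton $\{q\}$. Danskin's theorem, or the fact that the ordering of the $z_s$ is locally constant, then gives $\nabla_{x_s}\rho=L^\top q$, so $A_i=w_i(L^\top q)_i$ equals $x\odot\nabla\rho$.
\item Support functions $\max_kB_k^\top w$: with a unique active row $k$, the function is locally linear with gradient $B_k$.
\item Default quantile: with no tie, the attaining scenario $s^\ast$ is strictly separated from its neighbours in the sorted order. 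The quantile is therefore locally $D_{s^\ast}^\top w$, with gradient $D_{s^\ast}$.
\end{itemize}

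\emph{Inductive step.} Each non-leaf rule in \cref{sec:rules} has the form $A^b=\sum_jc_{bj}A^j$. The key observation is that away from kinks $c_{bj}=\partial\phi_b/\partial f_j$, where $\phi_b$ is the local node function of its children's values. I will check this rule by rule.
\begin{itemize}[leftmargin=*,itemsep=1pt]
\item Liquidity horizons: $\partial E/\partial e_j=a_je_j/E$, valid when $E>0$. The case $E=0$ needs a separate argument: either it is excluded as a tie, or, if all $e_j\equiv0$ nearby, the node is identically zero with zero gradient.
\item Stress ratio: strict $F\neq R$ gives either $G=S$ or $G=SF/R$, and $R>0$ by (A3). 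The listed coefficients are exactly the partial derivatives.
\item IMCC and history averages: these are linear.
\item History and final-layer maxima and minima, and $(J-B)_+$: strict inequalities make the active branch locally fixed, so each is locally the identity on one branch.
\item Non-modellable charge: the coefficients are the gradients of the norms and of the $Q$-quadratic root, which requires nonzero arguments. I will treat zero arguments as degenerate ties and assume them excluded.
\item Amber coefficient: $a^k$ is the gradient allocation of $U/(2V)$, and \eqref{eq:amber} is the product rule applied to $k\,(B-J)$ when $B>J$.
\end{itemize}
With these in hand, the chain rule gives $x\odot\nabla f_b=\sum_j c_{bj}\,x\odot\nabla f_j=\sum_jc_{bj}A^j=A^b$. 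Differentiability propagates because compositions of differentiable maps are differentiable.

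\emph{Main obstacle.} The hard step is pinning down precisely what ``no tie active'' must exclude so that every local selection is constant on a neighbourhood of $x$. I need all of the following to be strict:
\begin{itemize}[leftmargin=*,itemsep=1pt]
\item ES tail thresholds, with no scenario exactly at $\tau$ that carries fractional weight;
\item $F_c$ versus $R_c$;
\item support-function argmaxes;
\item quantile order statistics;
\item the history maxima;
\item $B$ versus $J$, and $J+H+C_U$ versus $Z$;
\item nonvanishing norms.
\end{itemize}
There is one subtle case: when $B\le J$, $H\equiv0$ only if $B<J$ strictly, and then $A^H=0$ matches the zero gradient. I would fix this list as the definition of a smooth point, mirroring the branch states recorded in the forward pass of \cref{alg:attr}. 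I would then argue by continuity of all node values in $x$ that strict inequalities persist on a neighbourhood, so each node agrees locally with a single smooth branch. Consistency with \cref{thm:recon} follows from Euler's theorem and serves as a check: summing the entries of $x\odot\nabla K$ gives $K$, by degree-one homogeneity (A2).
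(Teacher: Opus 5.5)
Your overall route is the paper's: induction in topological order carrying $A^b=x\odot\nabla f_b(x)$, with leaves made differentiable by a singleton ES dual face, a unique maximising row and a strictly separated quantile scenario, node coefficients identified with the partial derivatives of the local maps $\phi_b$, and the chain rule. The one concrete error is in the step you flag as the main obstacle, at the ES leaf. The condition ``no scenario exactly at $\tau$ that carries fractional weight'' does not describe a tail tie. A single boundary scenario $b$ at $\tau$ with fractional weight is the \emph{generic} smooth case. Let $c_{\rm above}$ be the total cap of the scenarios with loss above $z_b$, and suppose $b$ is the only scenario with loss $z_b$ and $c_{\rm above}<1<c_{\rm above}+c_b$. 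Then the primal objective in \eqref{eq:es} has slope $1-c_{\rm above}-c_b<0$ just left of $z_b$ and slope $1-c_{\rm above}>0$ just right of it. So $\tau=z_b$ is the unique minimiser and $0<q_b<c_b$. With $256$ equally weighted scenarios, for instance, $c_s=0.15625$: six scenarios are filled and a seventh carries $0.0625$ at $\tau$, for every $z$. Taking your list as the definition of a smooth point would therefore make the proposition vacuous for such blocks, even though $\rho$ is differentiable whenever $z_b$ is strictly separated from the other losses. Read vertex by vertex, the condition also misses genuine ties. Take $c_s=\tfrac12$ and $z_{(1)}>z_{(2)}=z_{(3)}$. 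The vertex $q=(\tfrac12,\tfrac12,0,\dots)$ puts no fractional weight at $\tau=z_{(2)}$, yet $t\mapsto\rho(z+te_{(3)})$ has slope $\tfrac12$ for $t>0$ and $0$ for $t<0$.

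The condition you need is the one your leaf step already invokes: the optimal dual face of \eqref{eq:es} is a singleton, which is hypothesis (ii) of \cref{prop:B:smooth}. It fails exactly when two or more scenarios share the boundary loss value and the boundary mass $1-c_{\rm above}$ is strictly less than their total cap. Under it, the support-function (Danskin) argument gives $\nabla_{x_s}\rho(Lx_s)=L^\top q$ directly. Your alternative via a locally constant ordering of all the $z_s$ asks for more than is needed, since ties among fully weighted or zero-weighted scenarios away from the boundary are harmless. Once the ES item is replaced by uniqueness of the optimal dual, the rest of your list matches the paper's conditions (i)--(v), and your induction and chain-rule argument go through exactly as in the paper.
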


\begin{proposition}[Frozen-remainder identity; \cref{prop:B:frozen}]\label{prop:frozen}
If $K$ is differentiable at $x$, then
\[w^\top\nabla_wK=K-\sum_{s<T-1}x_s^\top\nabla_{x_s}K.\]
Today's-book sensitivities therefore add up to capital if and only if this frozen remainder is zero, as happens when capital does not depend on past books near $x$.
\end{proposition}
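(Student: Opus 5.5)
The identity is Euler's theorem for a positively homogeneous function, applied to capital viewed as a function of the whole position array $x\in\R^{T\times N}$ and then split by date. Today's book is $w=x_{T-1}$, and $\nabla_wK$ is the gradient with respect to that last block with history held fixed.

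\emph{Step 1: homogeneity.} First I show that $K$ is positively homogeneous of degree one in $x$, meaning $K(\lambda x)=\lambda K(x)$ for $\lambda>0$. I would get this by induction over the graph of \cref{fig:dag}, as in the intuition for \cref{thm:recon}:
\begin{itemize}[leftmargin=*,itemsep=1pt]
\item By (A1) and (A5), each leaf is positively homogeneous of degree one in its own date's block of positions.
\item By (A2), every primitive node preserves degree one. This covers expected shortfall via \eqref{eq:es}, the liquidity-horizon norm, the stress ratio via (A3), the non-modellable charge, the default quantile, the averages and maxima of \eqref{eq:hist}, and the minimum and positive part of \eqref{eq:final}.
\item The amber coefficient $k=U/(2V)$ has degree zero. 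So $H=k(B-J)_+$ has degree one, using (A4) for $B$, $U$, $V$, $C_U$ and $Z$.
\end{itemize}
A composition of such maps is again degree-one homogeneous, so the induction closes at $K$.

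\emph{Step 2: Euler's identity.} Since $K$ is differentiable at $x$, I differentiate $\lambda\mapsto K(\lambda x)=\lambda K(x)$ at $\lambda=1$ along the ray. This gives $\sum_s x_s^\top\nabla_{x_s}K(x)=K(x)$. Only directional differentiability along the ray is needed here, and full differentiability implies it. Note that $\lambda x$ lies in $x$'s domain because (A3) and positivity are preserved under positive scaling.

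\emph{Step 3: split off today's block.} I separate the $s=T-1$ term, which is $w^\top\nabla_wK$, and rearrange to get the stated identity. For the ``if and only if'', today's sensitivities sum to $K$ exactly when the remainder $\sum_{s<T-1}x_s^\top\nabla_{x_s}K$ vanishes; this is a tautology given the identity. If $K$ does not depend on past books in a neighbourhood of $x$, then $\nabla_{x_s}K=0$ for $s<T-1$, so the remainder is zero.

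\emph{Main obstacle.} The only delicate point is Step 1 at the standardised-approach layer. There I must make sure that (A4) really delivers degree-one homogeneity of $B$, $C_U$ and $Z$, and equal-degree homogeneity of $U$ and $V$, as functions of $x$ (they in fact depend only on $w$). With that, $k$ is exactly degree zero and the product in $H$ has degree one. Everything else is routine once homogeneity is established.
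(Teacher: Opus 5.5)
Your proposal is correct and matches the paper's proof. Both differentiate $K(\alpha x)=\alpha K(x)$ at $\alpha=1$ to get $x^\top\nabla K(x)=K(x)$, split off today's block to obtain the identity, and the ``if and only if'' then follows directly; the only differences are that the paper takes degree-one homogeneity from its standing remark after (A1)--(A5) instead of re-deriving it node by node, and it adds two concrete sufficient conditions for a zero remainder (both history maxima strictly on their latest branch, or $\partial K/\partial J=0$).
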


\begin{proposition}[Removal effects; \cref{prop:B:removal}]\label{prop:removal}
Removal effects do not in general sum to $K$ (\cref{ex:max}).
\end{proposition}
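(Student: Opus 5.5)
The statement is existential: to show that removal effects do not in general sum to $K$, one counterexample suffices. I would exhibit one explicitly and check that it lies inside the class covered by (A1)--(A5), so that the failure cannot be blamed on leaving the paper's setting. The natural witness is \cref{ex:max}: $K(w)=\max(w_1,w_2)$ at $w=(1,1)$. There $K(w)=1$, and each removal gives $K(w-w_ie_i)=\max(0,1)=1$, so $\Delta_1=\Delta_2=0$ and $\sum_i\Delta_i=0\neq 1=K(w)$.

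\textbf{Step 1: the toy function is an instance of the capital graph.} As already noted in \cref{ex:max}, $\max(w_1,w_2)$ is an expected shortfall \eqref{eq:es}. Take two scenarios with losses $z=Lw$ and $L=\mathrm{Id}$. The scenario weights in the remark are stated as equal at $\alpha=0.975$; rather than rely on that, I would pick weights $p$ so that $\beta$ is small enough that $c_s=p_s/\beta\ge1$ for both scenarios. Then the dual feasible set is $\{q\ge0,\mathbf 1^\top q=1\}$ and $\rho(z)=\max_s z_s$. Next I would embed this ES in \eqref{eq:lh}--\eqref{eq:final}:
\begin{itemize}[leftmargin=*,itemsep=1pt]
\item set all other blocks, $N$ and $d$ to zero;
\item choose $F_c=R_c$ scaled so that the stress ratio is $1$ or the floor is active;
\item use a history window in which every stored date carries today's book, so that the latest branch dominates or equals the average;
\item choose standardised-approach inputs with $Z$ large, $C_U=0$ and $B$ large, so that $H=k(B-J)$ can be made zero with $U=0$.
\end{itemize}
Under this embedding $K$ is a positive multiple of $\max(w_1,w_2)$.

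\textbf{Step 2 (the main obstacle): keep the pieces exactly linear-compatible.} The delicate point is that the standardised inputs must be homogeneous (A4) and fixed under removal, and that $(J-B)_+$ must vanish. For the purposes of this proposition I would avoid these details by a simpler route. \Cref{prop:removal} only asserts failure ``in general''. So it suffices to note that each of the following is admissible under (A1)--(A5):
\begin{itemize}[leftmargin=*,itemsep=1pt]
\item a single-ES capital;
\item a family of capital functions including a positively homogeneous, non-additive function such as the ES above.
\end{itemize}

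\textbf{Step 3: an optional structural explanation.} For any positively homogeneous $K$ that is differentiable along the rays, the removal effect is $\Delta_i=\int_0^1 w_i\,\partial_iK(w-(1-t)w_ie_i)\,dt$. The sum of these integrals matches $K$ only when $K$ is additive across trades, which the convexity and nonlinearity of ES rule out.

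I would conclude by recording the sum $0\ne1$.
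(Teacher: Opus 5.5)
Your core argument is the paper's proof. The paper also uses one ES leaf with $L=\mathrm{Id}$ and caps $c_s=p_s/\beta\ge1$ (its equal weights $p=(\tfrac12,\tfrac12)$ with $\beta=0.025$ give $c_s=20$, so you need not avoid them), so that $\rho(Lw)=\max(w_1,w_2)$, $K=1$ and $\Delta_1=\Delta_2=0$ at $w=(1,1)$, and, like your Step~2 fallback, it stops at that single leaf.

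Drop Steps~1 and~3, which contain false claims. With identical stored books, $m_c\overline I_{60}=1.5\,I_{t-1}$ strictly dominates the latest branch, not the reverse. Constant ``large'' $B,Z$ violate (A4). Removal effects can also sum to $K$ without additivity: $\max(w_1,w_2)$ at $w=(1,0)$ gives $\Delta=(1,0)$. If you want a full-graph witness, no branch bookkeeping is needed. Make every today's-book input, standardised ones included ($U=C_U=0$, $B,Z$ large multiples), a multiple of $\max(w_1,w_2)$; since this equals $1$ at $(1,1)$, $(0,1)$ and $(1,0)$, no node value changes under either removal while $K>0$.
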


A piece $f_\ell$ of a piecewise-smooth $f$ is \emph{essentially active} at $x$ if it coincides with $f$ on an open set whose closure contains $x$ (\cref{app:th:ties}).

\begin{proposition}[Ties; \cref{prop:B:set,prop:B:overcover}]\label{prop:set}
At a tie, every generalized-gradient allocation reconciles. Every realisable vertex the engine returns is a valid marginal vertex, so their hull is contained in the generalized gradient, with equality when every essentially active piece passes the realisability test \citep{kuntz1994structural} and \citep[Prop.~4.3.1]{scholtes2012introduction}. Choosing tied children's branches independently can produce allocations that reconcile but are not marginals.
\end{proposition}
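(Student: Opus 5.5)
The plan is to treat $K$ as a piecewise-smooth, positively homogeneous function of the position array $x$, and to prove the three claims of \cref{prop:set} in order.

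\textbf{Step 1: local selection.} Each primitive with a kink is a finite maximum or minimum of smooth pieces. This covers the ES dual vertices, the active rows of the support function, the quantile rows, the history maxima, the stress-ratio floor, and the final minimum and positive part. A fixed choice of branch at every node therefore defines a smooth selection $K_\ell$. Each $K_\ell$ is a composition of positively homogeneous maps, so it is homogeneous of degree one in $x$ by (A2). On a neighbourhood of $x$, $K$ is a continuous selection of the finitely many $K_\ell$. The selections that matter are the essentially active ones.

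\textbf{Step 2: every vertex reconciles.} For a fixed branch combination, the reverse pass with the branch-specific coefficients is the chain rule applied to $K_\ell$. So the returned vertex equals $x\odot\nabla K_\ell(x)$. This is the same argument as \cref{prop:smooth}, applied to the selection instead of $K$. By Euler's identity for $K_\ell$,
\[
\mathbf 1^\top\bigl(x\odot\nabla K_\ell(x)\bigr)=K_\ell(x)=K(x).
\]
The same follows directly from \cref{thm:recon}, because the rules for a fixed branch satisfy the local Euler condition. By linearity, every point of $x\odot\partial K(x)$ reconciles, since it is a convex combination of such vertices.

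\textbf{Step 3: hull containment and equality.} Suppose a combination passes the LP test. The test supplies a witness direction $h$ such that $x+th$ activates every branch choice of the combination strictly, for all small $t>0$. Hence $K=K_\ell$ on an open cone around the ray, and that cone has $x$ in its closure, so $K_\ell$ is essentially active. By the Kuntz--Scholtes characterisation \citep[Prop.~4.3.1]{scholtes2012introduction}, the Clarke generalized gradient equals $\operatorname{conv}\{\nabla K_\ell(x):\ell \text{ essentially active}\}$. So each realisable vertex lies in it, and so does their hull. When every essentially active piece passes the test, the two hulls are formed from the same generators and are therefore equal.

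\textbf{Main obstacle.} The delicate point is the claim that a passed test implies essential activity. The LP certifies strict inequalities for the tie constraints only to first order along $h$. I would first perturb $h$ inside the open feasibility cone so that all strict first-order inequalities persist on a neighbourhood of directions. Then I would use continuity of the non-tied branch margins, which are strictly positive at $x$, to obtain an open set on which $K=K_\ell$. Degenerate ties that are flat to first order would be excluded by requiring strict feasibility in the test.

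\textbf{Step 4: over-covering.} It suffices to give an explicit example, in the style of \cref{prop:B:overcover}. Take $K(w)=\max(w_1,w_2)+\max(w_2,w_1)$ at $w=(1,1)$, written as two distinct nodes that share the same tie. Choosing branches independently allows $(1,0)$ at one node and $(0,1)$ at the other, which gives the allocation $(1,1)$. This sums to $K=2$ by Step 2's Euler argument, so it reconciles. However, $K=2\max(w_1,w_2)$, whose generalized gradient gives Euler allocations in $\operatorname{conv}\{(2,0),(0,2)\}$, and $(1,1)$ is the midpoint of that set. To obtain a point strictly outside the generalized gradient, I would make the second node use $\max(w_1, 2w_2-w_1)$, which also ties at $(1,1)$. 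The mixed choice is then not realisable by any single move, since $w_1>w_2$ and $2w_2-w_1>w_1$ cannot hold simultaneously. The resulting reconciling allocation falls outside the hull, and I would check this directly.
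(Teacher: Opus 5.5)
Steps 1--3 follow the paper's route: a $PC^1$ representation of $K$ by branch composites, the Kuntz--Scholtes characterisation of $\partial_CK$, and a strict first-order witness $h$ perturbed to an open set to show essential activity. Step 1 should also cite (A3) and (A6), since these make the horizon root, ratio, norms and $k$ either $C^1$ or identically zero. The one difference is in how you get reconciliation. You obtain it for all of $x\odot\partial_CK(x)$ from the characterisation plus Euler's identity on each selection. The paper gets it directly from Clarke's gradient formula and homogeneity of $K$ (\cref{lem:B:clarke}), so it does not need the selections themselves to be homogeneous. That part is fine.

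The gap is in Step 4: your repaired example does not produce an allocation outside the generalized gradient. For $K=\max(w_1,w_2)+\max(w_1,2w_2-w_1)$, we have $K=2w_1$ when $w_1>w_2$ and $K=3w_2-w_1$ when $w_2>w_1$. At $(1,1)$ the marginal set is therefore $x\odot\partial_CK(x)=\operatorname{conv}\{(2,0),(-1,3)\}=\{(2-3t,3t):t\in[0,1]\}$. The two non-realisable combinations give $(1,0)+(-1,2)=(0,2)$, at $t=2/3$, and $(0,1)+(1,0)=(1,1)$, at $t=1/3$. Both reconcile, but both are genuine marginals, so the final claim of the proposition is not established.

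This is structural, the same midpoint phenomenon you met in your first attempt. At a tie, homogeneity gives $(\nabla a-\nabla b)^\top x=a(x)-b(x)=0$ for each node's two branches. With two variables, every branch-gradient difference is therefore parallel to $(1,-1)$. When the tied nodes enter capital with positive coefficients, swapping one node's branch moves the allocation only part of the way between the two realisable vertices. Mixed choices thus always interpolate and never over-cover.

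The missing ingredient is that the tied quantity must enter capital with opposite signs along different paths, so that a mixed choice extrapolates. The paper does this with one tied child shared by two parents: $g=\max(x_1,x_2)$ and $f=2g-g$. Using branch $x_1$ in the first parent and $x_2$ in the second gives the gradient $2e_1-e_2$ and the allocation $(2,-1)$. It sums to $1=f(x)$ but has a negative coordinate, so it lies outside $\operatorname{conv}\{e_1,e_2\}$. The FRTB graph has this structure: $J$ enters $J+H+C_U$ and $(J-B)_+$ positively, but enters $H=k(B-J)_+$ negatively.
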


\subsection{Reconciliation is not faithfulness}\label{sec:notfaith}

Summing correctly is necessary for a ledger but not sufficient for a marginal. In the amber rule \eqref{eq:amber}, the second term $(B-J)a^k$ sums to zero. Dropping it leaves the capital total unchanged, so the result passes every reconciliation check; yet whenever $a^k\neq0$ it is no longer the gradient allocation, and its desk subtotals generally shift (\cref{prop:B:amber}). The completeness property of attribution methods \citep{sundararajan2017axiomatic} is thus a property of ledgers, not a certificate of marginal faithfulness. For this reason the engine records whether an allocation is a true marginal separately from its reconciliation gap.

\section{Certified selective repricing}\label{sec:cert}

Exact attribution requires every option trade to be priced on every scenario. Certified selective repricing brackets option prices across scenarios from a few exact anchor prices and reprices only what it needs to certify the expected-shortfall optimiser. This pricing algorithm delivers the \emph{same} exact answer when it can prove it, and says so when it cannot. We make no claim about its speed.

\paragraph{Bracketing prices.} For single-underlying options whose lattice price is convex in spot, such as American options, exact prices at a few anchor spots bracket the price at every other scenario spot (\cref{lem:B:anchor}): the chord through the two neighbouring anchors is an upper bound, and the extended secants of other anchor pairs are lower bounds. The algorithm never reads a price table; it only asks the pricer for prices. Trades outside the class are priced exactly on every scenario.

\paragraph{Certifying the tail.} Interval arithmetic turns the price brackets into aggregate-loss intervals $[\ell_s,u_s]$ for each ES block. Suppose the intervals separate a set $\mathcal H$ (calligraphic, unlike the surcharge $H$) of tail scenarios and a boundary scenario $b$ from the rest. Then the ES value and its trade allocation are exact after repricing only $\mathcal H$ and $b$ (\cref{thm:B:cert}). Formally, the intervals must place every scenario of $\mathcal H$ at or above $z_b$ and every other scenario at or below it, and the caps of $\mathcal H$ must sum to less than one and, with $c_b$ added, to at least one:
\[\min_{s\in\mathcal H}\ell_s\ \ge\ z_b\ \ge\ \max_{r\notin\mathcal H\cup\{b\}}u_r,\qquad\sum_{\mathcal H}c_s<1\le\sum_{\mathcal H}c_s+c_b .\]
Unresolved blocks are repriced, so in the worst case every row is repriced, which is exact.

\paragraph{Failing closed.} An exact price that falls outside its certified bracket raises an error; it is never absorbed. Outside the declared class, the answer is ``not applicable'' and the exact engine is used instead.

\paragraph{A limit on any shortcut.} Without structure linking trades, a scenario row's total does not determine its trade-level split (\cref{prop:B:lower}): if the row has $N_+$ nonzero entries, fewer than $N_+-1$ individual prices always leave the allocation undetermined. Pricing on tail rows is therefore essentially irreducible, and any saving must come from rows that do not enter the tail.

\input{sections_evidence.tex}
\input{sections_rest.tex}

\bibliographystyle{plainnat}
\bibliography{refs}

\ifdefined\SplitSupplement
\clearpage
\setcounter{page}{1}\renewcommand{\thepage}{S\arabic{page}}
\begin{center}{\LARGE\bfseries Supplementary Material}\par\vspace{4pt}{\large Exact Trade-Level Attribution of FRTB-IMA Capital}\end{center}
\vspace{10pt}
\fi
\appendix
\input{appendices.tex}

\end{document}

%% file: macros.tex
\providecommand{\tool}{FRTBTrace}

\providecommand{\eps}{\varepsilon}
\providecommand{\R}{\mathbb{R}}
\providecommand{\reconM}{4.3\times10^{-16}}
\providecommand{\reconS}{5.6\times10^{-16}}
\providecommand{\frozenMin}{21\%}
\providecommand{\frozenMax}{99\%}
\providecommand{\frozenMed}{61\%}

%% file: figstyle.tex
\definecolor{fbBlue}{HTML}{1F4E79}     %
\definecolor{fbBlueMid}{HTML}{6A8EAE}  %
\definecolor{fbBlueLight}{HTML}{DCE6EF}%
\definecolor{fbRed}{HTML}{A23B2A}      %
\definecolor{fbBeige}{HTML}{D8C8A8}    %
\definecolor{fbSand}{HTML}{EFE6D5}     %
\definecolor{fbInk}{HTML}{333333}      %
\definecolor{fbGrey}{HTML}{8C8C8C}     %
\definecolor{fbGreyLight}{HTML}{D9D9D9}%
\pgfplotsset{
  frtb/.style={
    axis line style={fbGrey, line width=0.4pt},
    tick style={fbGrey, line width=0.4pt},
    every tick label/.append style={font=\scriptsize, text=fbInk},
    label style={font=\scriptsize, text=fbInk},
    legend style={font=\scriptsize, draw=none, fill=none, text=fbInk},
    legend cell align=left,
    grid style={fbGreyLight, line width=0.3pt},
    major grid style={fbGreyLight, line width=0.3pt},
    axis background/.style={fill=white},
  },
}

%% file: sections_evidence.tex
\section{Evidence on a pre-registered benchmark}\label{sec:evidence}

After the protocol (\cref{sec:protocol}), each subsection asks one question of the evidence and closes with its implication. The benchmark books, measures and scoring were fixed before any result was seen. Numbers are read from the frozen result files by a script that verifies each file's SHA-256 before reading it (\cref{app:provenance}).

One rule applies throughout. For the removal effect, the historical origin, the hedge choice and the marginal at smooth points, the exact engine's own answer \emph{defines} the correct answer. The engine's agreement with itself there is therefore automatic, not evidence; its contribution on those questions is breadth and typed outcomes. The independent checks are the realisable-set comparison at ties (\cref{sec:frozen}) and the certified-repricing comparison against brute-force revaluation (\cref{sec:o3}).

\subsection{Benchmark, measures and protocol}\label{sec:protocol}

\begin{table}[t]
\centering\small
\caption{The benchmark. The tie and main books are pre-registered; the option books were run under calibration settings (\cref{tab:o3}). The statistical unit is the book; trades, scenarios and hedge candidates are not independent replicates.}\label{tab:bench}
\begin{tabularx}{\linewidth}{@{}l>{\raggedright\arraybackslash}X>{\raggedright\arraybackslash}p{2.6cm}>{\raggedright\arraybackslash}p{2.7cm}@{}}
\toprule
Book set & Composition & Size & Questions with an answer \\
\midrule
Tie books (42) & Six constructions, each built to sit exactly on one kind of kink: none (smooth control), ES tail tie, ratio-floor tie, history-maximum tie, default-quantile tie, final min/max tie. Each at 10 and 12 trades (two also at 14), with three random draws. & 10--14 trades & marginal, removal, accounting, origin on all 42; hedge choice on none \\
Main books (18) & Six portfolio types (ordinary, nonlinear, tail-difficult, regulatory near-tie, history-heavy, held-out) $\times$ three random draws. & 96 trades & removal, accounting, origin, hedge choice on 18; marginal on 14 \\
Option books (12) & Four option types $\times$ three draws: American options with low volatility; American options with high nonlinearity; multi-asset baskets; barrier options. & 96 trades, 12 lattice-priced; 256 current and 256 stressed scenarios & marginal, removal, accounting, origin \\
\bottomrule
\end{tabularx}
\end{table}

\Cref{tab:bench} summarises the benchmark. All books share one substrate: a 24-factor universe over five risk classes; closed-form unit-notional revaluation, with lattice pricing for option trades \citep{cox1979option}; declared synthetic non-modellable and default models; seeded historical perturbations; and standardised-approach inputs supplied as given data.

\paragraph{Questions.} The questions are those of \cref{tab:questions}. The removal effect keeps history fixed, and the hedge choice uses 40 libraries of 8 candidate actions per book. The coalition (Shapley) question has no answer on any book. On every main book, 14--47 of the 96 single-trade capitals are undefined: a single trade with no exposure to a class's reduced factor set gives that class's stress ratio a zero denominator, and one undefined sub-book leaves the coalition game undefined (\cref{rem:B:shap}). The engine also declines a further 35--58 single-trade capitals per main book, where a default-quantile tie exceeds its enumeration bound; that limit belongs to the engine, not to the formula. The benchmark likewise records no Shapley reference on the tie books.

\paragraph{Errors.} Errors are normalised $\infty$-norm distances to the exact answer:
\begin{align*}
e_{\rm marg}&=\operatorname{dist}_\infty(\hat A,\mathcal M^\ast)\big/\max(|K|,\|A_{\rm ref}\|_\infty,\eps), & e_{\rm acc}&=|\mathbf 1^\top\hat A-K|\big/\max(1,|K|),\\
e_{\rm rem}&=\|\hat\Delta-\Delta\|_\infty\big/\max(|K|,\|\Delta\|_\infty,\eps), & e_{\rm orig}&=\|\hat A^{\rm hist}-A^{\rm hist}\|_\infty\big/\max(|K^{\rm hist}|,\|A^{\rm hist}\|_\infty,\eps).
\end{align*}
The marginal target $\mathcal M^\ast$ is the exact point at a verified smooth point, or an \emph{independently constructed} realisable set at a tie. Where no independent construction exists, the question has no reference answer for any method.

\paragraph{Methods.} We compare seven established methods: six published rules, each validated against its source, and a numerical control. The published rules are analytical Euler allocation \citep{tasche2008euler,buch2008coherent}, incremental and activity-based allocation \citep{balog2017properties}, Li--Xing methods I and II \citep{li2018capital} and marginal measures \citep{schulze2018capital}. Numerical (finite-difference) Euler is the control for analytical Euler. Shapley allocation over trades was validated on small published games but is not compared, for the reason given above. The exact engine is that of \cref{sec:engine}.

\subsection{Do the measures come apart on benchmark books?}\label{sec:frozen}

\paragraph{Frozen history.} On each of the 14 main books whose capital point is verified smooth, the analytical-Euler baseline computes today's-book sensitivity allocation, and the benchmark records the gap between capital and the sum of that allocation. By \cref{prop:frozen}, this gap is the frozen remainder: the capital that the origin ledger places on past observation dates because an averaged-history branch of \eqref{eq:hist} is active.

\Cref{fig:frozen} plots the gap as a share of $K$ (per-book values in \cref{tab:tierm}). The share ranges from \frozenMin{} to \frozenMax{} (median \frozenMed{}) and reaches 94--99\% on the history-heavy books. On a typical smooth main book, most FRTB capital therefore lies beyond the reach of today's-book sensitivities until old observations roll off. Finite trades can still move capital through today's terms and branch switches. The other 4 main books (three regulatory near-tie books and one held-out book) sit at ties.

\begin{figure}[t]
\centering
\raisebox{-126.33786pt}{\includegraphics{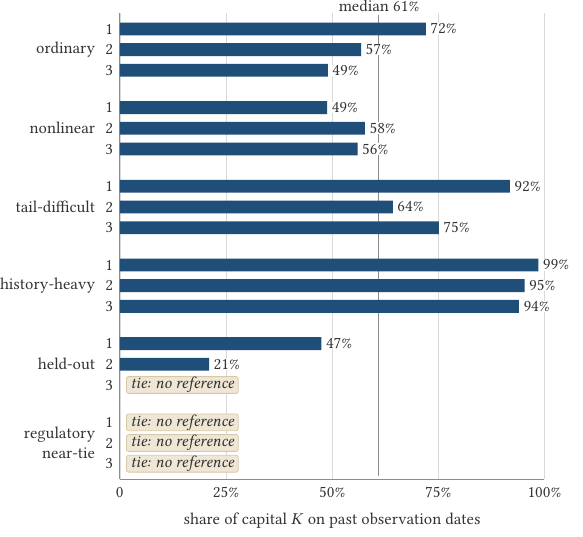}}
\caption{Capital on past observation dates, per main book. The frozen remainder over $K$ is \frozenMin{}--\frozenMax{} on the 14 smooth books (median \frozenMed{}, grey line); the four books at a tie (three regulatory near-tie, one held-out) carry no share (no reference answer, not a zero). Descriptive, valid where an averaged-history branch is active; not a method comparison.}\label{fig:frozen}
\end{figure}

\paragraph{Ties.} Of the 42 tie books, 36 sit exactly on a kink, and their marginal reference is an independently constructed exact realisable set. The other 6 are smooth controls, whose reference is the engine's own smooth-point marginal. On all 42 books, the exact engine's marginal lies within normalised distance $1.55\times10^{-16}$ of the reference. On the 36 kink books, that agreement is a check against an independent construction.

\begin{implication}
Typing is materially necessary on full-size books, not a toy concern. Frozen history dominates capital on most smooth main books, and ties make the marginal set-valued wherever the formula sits on a branch boundary.
\end{implication}

\subsection{Do the additive ledgers reconcile?}\label{sec:reconev}

On every book, the accounting ledger reconciles to the engine's own computed capital, with a maximum relative gap of $\reconM$ on the 18 main books and $\reconS$ on the 42 tie books. These gaps numerically confirm \cref{thm:recon}. The check is internal, against the same engine's capital, because no independently verified ledger exists. \Cref{cor:resum} extends the property to any grouping by a linear identity.

\begin{implication}
Management views built from the origin or accounting ledger add up to reported capital by construction. \Cref{sec:notfaith} explains why adding up does not, on its own, certify a marginal.
\end{implication}

\subsection{What does each established method answer?}\label{sec:baselines}

\begin{table}[t]
\centering\footnotesize
\caption{Established methods on the 18 main books. Books with a valid answer (the marginal is defined on 14) and normalised error to the exact answer; the last column gives the reason for each gap. ``Not applicable'' is a structural gate, not a numerical failure. None carries a deterministic guarantee; the exact engine is the reference, not a competitor.}\label{tab:intended}
\begin{tabularx}{\linewidth}{@{}>{\raggedright\arraybackslash}p{2.1cm}>{\raggedright\arraybackslash}p{2.3cm}ccccc>{\raggedright\arraybackslash}X@{}}
\toprule
Method & Intended object & Marg. & Rem. & Acc. & Orig. & Hedge & Reason / note (error where valid) \\
\midrule
Analytical Euler & gradient allocation & 14/14 & -- & -- & -- & -- & smooth points only; error $\approx2.9\times10^{-18}$. Ledger not applicable: frozen history leaves a gap. \\
Numerical Euler & finite-difference control & 14/14 & -- & -- & -- & -- & error $\le1.9\times10^{-6}$; no ledger claim \\
Incremental & normalised incremental ledger & -- & -- & 18/18 & -- & -- & reconciles ($\approx5.2\times10^{-15}$); does not declare which question it answers \\
Activity-based & ledger scaled by single-trade capital & -- & -- & -- & -- & -- & needs every single-trade capital; 14--47 of 96 are undefined on every main book \\
Li--Xing I & FRTB ES allocation with liquidity horizons & 1/14 & -- & 1/18 & -- & -- & valid only when no class's ratio floor is active. On the one valid book (the same book for both questions): marginal error $0.17$; today's-book split misses $K$ by $0.79K$. \\
Li--Xing II & constrained Aumann--Shapley & 1/14 & -- & 1/18 & -- & -- & as Li--Xing I \\
Marginal measures & FRTB marginal measures & 14/14 & -- & -- & -- & -- & source covers part of the graph; median error $0.083$, max $0.50$ \\
\midrule
Exact engine & all typed answers & ref. & ref. & 18/18 & ref. & ref. & its answer is the reference; ledger reconciles to $\le4.3\times10^{-16}$ \\
\bottomrule
\end{tabularx}
\end{table}

\begin{figure}[t]
\centering
\raisebox{-0.0pt}{\includegraphics{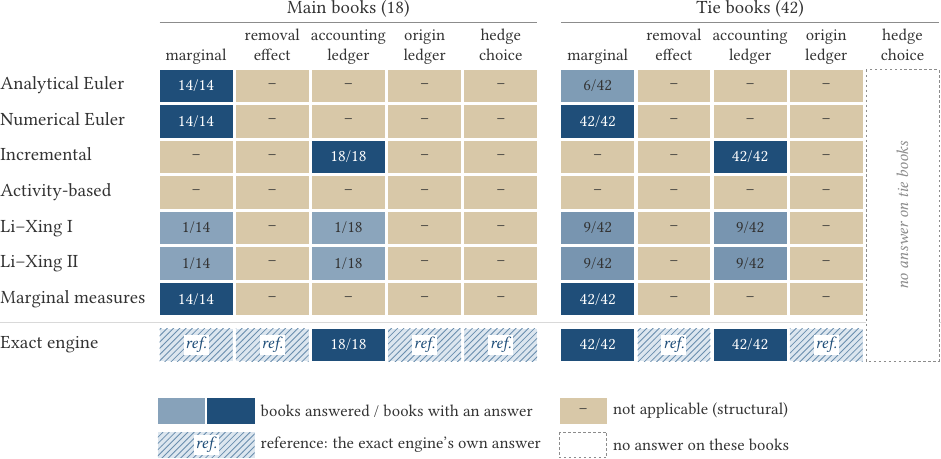}}
\caption{Coverage by method and question. Each cell: books answered over books where the question has an answer; 18 main books (left), 42 tie books (right). Shading shows coverage, not accuracy (errors: \cref{tab:intended}); beige marks structural non-applicability, not numerical failure. Hatched (ref.): the exact engine defines the answer, so agreement there is automatic.}\label{fig:coverage}
\end{figure}

\Cref{tab:intended} and \cref{fig:coverage} give the answer. Each established method covers at most two of the measures. None computes the removal effect, the origin ledger or the hedge choice, and only incremental allocation produces a full-capital ledger.

Where their preconditions hold, the Euler and incremental rules are faithful to their own question. Analytical Euler reproduces the exact smooth marginal, numerical Euler lies within $8.0\times10^{-6}$ of the exact reference on all 42 tie books (36 sets and 6 points), and incremental allocation reconciles. The Li--Xing and marginal-measure constructions target different objects.

The binding limitation is coverage, and its causes are structural. Activity-based allocation needs the single-trade capital of every trade, and on every main book at least 14 of them are undefined. The Li--Xing formulas model only the stress-ratio branch without the floor, and the regulatory near-tie books are built to test exactly that gate. Marginal measures cover the source-supported part of the graph, and unsupported pieces are recorded as source-incomplete.

The tie books show the same pattern (\cref{fig:coverage}, right panel). Analytical Euler is valid only on the 6 smooth books; numerical Euler and marginal measures answer the marginal on all 42; incremental allocation gives a ledger on all 42; Li--Xing gives answers on 9; and activity-based allocation gives answers on none. No established method dominates across questions. Aggregate quality scores are secondary summaries and appear only in \cref{app:euaq}.

\begin{implication}
A user relying on any single established allocator would silently lose the removal, origin and hedge-choice questions and, for most methods, the ledger; no outcome code would tell them so.
\end{implication}

\subsection{Does certified repricing return the exact answer?}\label{sec:o3}

\begin{table}[t]
\centering\small
\caption{Certified repricing versus the exact engine on the 12 option books. Full-size books under calibration settings, not a final pre-registered run; no timing is reported. Guarantees are checked against independent brute-force revaluation, a diagnostic only. Last column: per-block guarantee checks (violated) / brute-force violations.}\label{tab:o3}
\footnotesize\setlength{\tabcolsep}{4pt}
\begin{tabular}{@{}llcccc@{}}
\toprule
Option type & Draw & Outcome & $\max|A^{\rm cert}-A^{\rm exact}|$ & Guaranteed error bound on $K$ & Checks \\
\midrule
\input{tab_o3_rows.tex} 
\bottomrule
\end{tabular}
\end{table}

\Cref{tab:o3} and \cref{fig:o3} give the results. On all six eligible American-option books, certified repricing certified its whole output. Checked against brute-force revaluation, no guarantee was violated at any of the 90 block checks per book, nor at the final level. The maximum allocation difference from the exact engine is $1.4\times10^{-17}$--$2.2\times10^{-16}$ on five books and zero, bit for bit, on the sixth; the guaranteed error bound on capital $K$ is $7.1$--$9.3\times10^{-12}$. The six basket and barrier books lie outside the convex single-factor class and are declared not applicable; that is a declared exclusion, not a loss. Certified repricing answers the marginal, accounting and origin questions; its implementation does not cover removal effects or the hedge choice.

\begin{figure}[t]
\centering
\raisebox{-56.98776pt}{\includegraphics{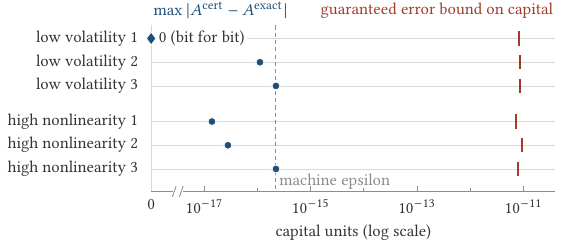}}
\caption{Certified repricing versus the exact engine on the six eligible option books. Log scale, capital units: maximum allocation difference per book (blue; diamond: exactly 0) and guaranteed error bound on capital (red); dashed line: machine epsilon. Basket and barrier books are outside the declared class and not drawn.}\label{fig:o3}
\end{figure}

\paragraph{Scope.} The value shown here is \emph{exactness with a guarantee}, not speed: no timing was measured on a guaranteed-idle machine. An earlier pilot of the same method family found that it pays off only for one-off what-if queries and is slower for repeated query streams, and a faster spline surrogate without a guarantee also exists (\cref{app:timing}).

Two caveats concern the certifier. Its code path does not import the modules in which a violated guarantee was found (checked one import level deep; \cref{sec:o2r}), but the certifier has not itself been adversarially reviewed to the same depth. It also pads for floating-point error without an explicit term for pricer error. The brute-force check found no violation with the benchmarked lattice pricer; that finding is not a proof for other pricers.

\begin{implication}
On a declared class, a guaranteed, non-brute-force route returns the exact typed answer and reports ``not applicable'' elsewhere. The guarantee preserves faithfulness; on present evidence, it does not buy speed.
\end{implication}

\subsection{What did adversarial review of the adaptive certified approximation find?}\label{sec:o2r}

The \emph{adaptive certified approximation} is the more general, cautionary counterpart to certified repricing: it targets the whole capital graph by adaptive interval refinement.

\begin{table}[t]
\centering\footnotesize
\caption{Adversarial review record of the adaptive certified approximation. Every entry comes from internal stress testing, not from a benchmark run.}\label{tab:o2r}
\begin{tabularx}{\linewidth}{@{}>{\raggedright\arraybackslash}p{2.5cm}>{\raggedright\arraybackslash}X>{\raggedright\arraybackslash}X>{\raggedright\arraybackslash}X@{}}
\toprule
Stage & Probe & Finding & Consequence \\
\midrule
First version & independent review & violated guarantees: its safety margin omitted the pricer's discretisation error & guarantee withdrawn; kept only as a negative control; never repaired under the same name \\
Repaired version, review 1 & adversarial test battery & 11 findings, including 3 concrete violated guarantees & repaired: safety margin now adds valuation and gradient error \\
Review 2 & fresh reviewer & passed, for a stated scope & -- \\
Review 3 & fresh independent verifier & a fourth failure mode: a $10^{-9}$ tolerance on interpolation weights with no compensating margin, reproduced on two routes & strict non-negativity imposed; passed only for single-factor, single-cell ES blocks \\
Follow-up checks & confirmation & multi-factor cases decline 59--100\% of the time; a norm mismatch (latent for two or more factors); no rounding margin when summing trades & recorded as blockers for any extension \\
Benchmark & -- & would need a new whole-graph certificate layer & never run on the benchmark; a deliberate decision, not a soundness failure \\
\bottomrule
\end{tabularx}
\end{table}

\Cref{tab:o2r} summarises the record. After the first version's guarantee was withdrawn, three independent adversarial reviews of the repaired version found four distinct failure modes, and after repair the approximation passed only for single-factor, single-cell ES blocks. The final reviewer noted that the discovery of new defects had not visibly levelled off. We therefore read ``passed'' as a statement about the cases exercised, not as a proof of soundness, and we report no benchmark result for this method. The record motivates four fail-closed rules:
\begin{enumerate}[label=(\roman*),leftmargin=*,itemsep=1pt]
\item undeclared numerical error means the method declines, never a silent zero;
\item a violated guarantee withdraws that version's guarantee;
\item a repair is released as a new version;
\item guaranteed answers are never extended beyond the cases that passed review.
\end{enumerate}

\begin{implication}
A numerical guarantee is part of an attribution's faithfulness claim, so it must survive adversarial review. Where it has not, the honest output is to decline or to use the exact route.
\end{implication}

%% file: tab_o3_rows.tex
American, low volatility & 1 & certified & $0$ (bit for bit) & $8.2\times10^{-12}$ & 90 (0) / 0 \\
American, low volatility & 2 & certified & $1.1\times10^{-16}$ & $8.7\times10^{-12}$ & 90 (0) / 0 \\
American, low volatility & 3 & certified & $2.2\times10^{-16}$ & $8.7\times10^{-12}$ & 90 (0) / 0 \\
American, high nonlinearity & 1 & certified & $1.4\times10^{-17}$ & $7.1\times10^{-12}$ & 90 (0) / 0 \\
American, high nonlinearity & 2 & certified & $2.8\times10^{-17}$ & $9.3\times10^{-12}$ & 90 (0) / 0 \\
American, high nonlinearity & 3 & certified & $2.2\times10^{-16}$ & $7.9\times10^{-12}$ & 90 (0) / 0 \\
multi-asset basket & 1 & not applicable & \multicolumn{3}{l}{outside the declared class (single underlying, convex in spot)} \\
multi-asset basket & 2 & not applicable & \multicolumn{3}{l}{outside the declared class (single underlying, convex in spot)} \\
multi-asset basket & 3 & not applicable & \multicolumn{3}{l}{outside the declared class (single underlying, convex in spot)} \\
barrier & 1 & not applicable & \multicolumn{3}{l}{outside the declared class (single underlying, convex in spot)} \\
barrier & 2 & not applicable & \multicolumn{3}{l}{outside the declared class (single underlying, convex in spot)} \\
barrier & 3 & not applicable & \multicolumn{3}{l}{outside the declared class (single underlying, convex in spot)} \\

%% file: sections_rest.tex
\section{Related work}\label{sec:related}

\paragraph{Risk capital allocation.} Euler or gradient allocation is the canonical marginal rule for positively homogeneous risk measures \citep{tasche2008euler,buch2008coherent}. Its axiomatic foundations are developed by Kalkbrener \citep{kalkbrener2005axiomatic} and, through cooperative games, by Denault \citep{denault2001coherent}, following Shapley \citep{shapley1953value}. Balog et al.\ compare incremental, activity-based and other rules \citep{balog2017properties}. Recent work forecasts and backtests ES gradient allocations \citep{koike2025forecasting}, allocates portfolio risk by the Shapley value \citep{hagan2025portfolio}, and derives axiomatic risk attributions for financial models \citep{chen2025explaining}. These works study a single allocation per portfolio or model. Our contribution is complementary: at smooth points the origin ledger is the gradient allocation over (date, trade) positions (\cref{prop:smooth}), but at a nonsmooth, history-dependent regulatory function the allocation must be typed, some useful objects are sets, and some are not additive.

\paragraph{FRTB-specific allocation.} Three works address the FRTB directly. Li and Xing derive allocations for FRTB expected shortfall with liquidity horizons \citep{li2018capital}, Schulze proposes marginal measures for the FRTB regime \citep{schulze2018capital}, and Scaringi and Bianchetti sharpen Shapley allocation from Basel~2.5 to the FRTB \citep{scaringi2025sharpening}. We implement the first two as validated baselines. Their narrower coverage in \cref{sec:baselines} follows from their published targets, not from errors: the Li--Xing formulas model the stress ratio without its floor, and marginal measures cover the source-supported components. The third, like any Shapley allocation, requires a coalition game, and that game is not defined on the full set of sub-books in our benchmark.

\paragraph{Expected shortfall and nonsmooth analysis.} The ES allocation rule rests on the primal--dual form of ES \citep{rockafellar2000optimization,rockafellar2002conditional} and on its coherence \citep{acerbi2002coherence}. The set-valued marginals at ties use Clarke's generalized gradient \citep{clarke1983optimization} and the theory of piecewise-differentiable functions \citep{scholtes2012introduction}.

\paragraph{Post-hoc attribution in machine learning.} Feature-attribution methods such as SHAP \citep{lundberg2017unified} and Integrated Gradients \citep{sundararajan2017axiomatic} produce one vector per prediction; the completeness axiom of Integrated Gradients is the analogue of our reconciliation. Our results translate these methods directly to capital. Integrated Gradients over all (date, trade) positions with a zero baseline returns exactly the origin ledger at smooth points (\cref{prop:B:ig}); over today's trades alone, it is complete only relative to the capital of the history-only book. Occlusion (leave-one-out) returns removal effects, which do not add up. SHAP over trades requires every sub-book's capital, and on every main book at least 14 single-trade capitals are undefined (\cref{rem:B:shap}).

Several lines of work argue that an attribution should be judged against the question it is meant to answer. Kumar et al.\ \citep{kumar2020problems} make this argument for Shapley-value feature importance, Janzing et al.\ \citep{janzing2020feature} for the choice of value function, which they cast as a causal problem, and Jacovi and Goldberg \citep{jacovi2020towards} for the definition and evaluation of faithfulness. Influence functions apply the same standard to removal: they approximate a removal by a local derivative and check the approximation against exact recomputation \citep{koh2017understanding}, whereas the engine computes each removal effect by exact re-evaluation of capital (\cref{alg:attr}). Sanity checks expose explanations that are insensitive to the model \citep{adebayo2018sanity}. In the spirit of Rudin's argument for interpretable models in high-stakes settings \citep{rudin2019stop}, our attribution declares its question and \emph{is} an exact typed evaluation of the regulatory formula, not a surrogate.

\paragraph{Validation and governance.} Supervisory model-risk guidance emphasises conceptual soundness, outcomes analysis and effective challenge \citep{frb2011sr117,pra2023ss123}. Risk-data principles require aggregated figures that reconcile \citep{bcbs2013riskdata}, and \cref{cor:resum} gives the typed ledgers that property under any grouping. The review in \cref{sec:o2r} is modelled on adversarial evaluation of claimed guarantees \citep{athalye2018obfuscated}. Automated decision systems that call external tools \citep{yao2023react,dong2025llm} are one natural consumer of typed, checkable attributions.

\section{Limitations}\label{sec:limits}

\begin{itemize}[leftmargin=*,itemsep=1pt]
\item \emph{Synthetic books.} All books are synthetic. The non-modellable and default charges come from declared models, and the standardised-approach inputs are supplied as given data. The results establish the mathematics and its implementation on this substrate, not behaviour on a production book.
\item \emph{No timing.} No timing was measured on a guaranteed-idle machine, and we make no speed or cost claim.
\item \emph{Certified repricing.} It is proved and tested only for single-underlying options whose price is convex in spot. Its certifier has not been adversarially reviewed to the depth described in \cref{sec:o2r}, and it has no explicit pricer-error term.
\item \emph{Adaptive certified approximation.} It passed review only for single-factor blocks and was never run on the benchmark.
\item \emph{Ties.} Realisability is decided by a sufficient first-order test (\cref{prop:B:set}). Where no independent construction of the full marginal set exists, as on four main books, the marginal question is recorded as having no reference answer.
\item \emph{Accounting remaps.} Only the identity accounting remap has been exercised; no real desk or product grouping has.
\item \emph{Aggregate quality scores.} On the tie books, the scores are depressed by a calibration carried over from the main books (\cref{app:euaq}).
\end{itemize}

\section{Conclusion}\label{sec:conclusion}

We attributed the full FRTB internal-models capital charge back to individual trades. First, the allocation measure must be stated. The Euler marginal, the removal effect, the accounting ledger, the origin ledger and the hedge choice have different exact answers, and on the 14 smooth 96-trade main books the difference is large: 21--99\% of capital sits on past dates, beyond the reach of today's-book sensitivities. Second, each measure has an exact typed answer, and only ledgers add up. The origin and accounting ledgers reconcile and re-sum to any grouping; the local marginal and the removal effect do not add up in general, and for the local marginal an exact identity gives the gap. Established allocators are faithful to their own measures where they apply, but each covers at most two. Third, exactness can be certified on a declared class: certified repricing returns the exact answer with a guaranteed error bound, or declines. A guarantee counts only once it has survived adversarial review; certified repricing's certifier matched brute-force revaluation on every eligible book but has not yet had that review.

Open problems remain: certification beyond one risk factor and across the whole graph, timing on a guaranteed-idle machine, and validation on real desk groupings. The typed, checkable attributions are designed to be consumed alike by risk managers, finance, validators and automated decision systems, each through the measure it needs.

%% file: appendices.tex
\section{Implementation details}\label{app:alloc}

\paragraph{ES vertices.} Let $\mathcal T$ be the set of scenarios tied at the optimal threshold of \eqref{eq:es}. The optimal face is
\[\{q:\ q=c\text{ above the threshold},\ q=0\text{ below it},\ 0\le q_{\mathcal T}\le c_{\mathcal T},\ \mathbf 1^\top q_{\mathcal T}=1-\mathbf 1^\top c_{\rm above}\}.\]
Each extreme point fills $\mathcal T$ greedily in one ordering of $\mathcal T$ and gives a reconciling allocation $w\odot L^\top q$. For single-valued output, the engine shares the residual mass over $\mathcal T$ in proportion to $p$. This proportional split is a reproducible convention, invariant to splitting a scenario into copies, but it is not the face itself. Beyond a declared bound on $|\mathcal T|$, the engine refuses to enumerate rather than silently picking one vertex.

\paragraph{Realisability tests.} At stress-charge support ties and default-quantile ties, a candidate vertex is accepted only if the strict homogeneous system $\Gamma h>0$ is feasible, where each row of $\Gamma$ is the difference between the chosen active row and a competing active row. Feasibility is decided exactly: the rows are normalised, the system becomes $\Gamma h\ge\mathbf 1$ with $h=h^+-h^-$, and a phase-one simplex with Bland's rule decides it. A positive verdict returns the witness direction. For the default quantile, rows that are numerically tied but not reachable from any open ordering cell are excluded.

\paragraph{History positions.} With $T$ dates and $N$ trades, position $(s,i)$ is stored at index $sN+i$. Each date's allocation is placed on its own block of positions and never recomputed. Averages divide by the window length. The multiplier $m_c$ applies only to the averaged internal-models charge $\overline I_{60}$; the latest-observation branch carries no multiplier.

\section{Theory and proofs}\label{app:theory}
\input{theory_appendix.tex}

\section{Per-book tables}\label{app:perbook}
\Cref{tab:tierm,tab:tiers,tab:audit} give the per-book detail behind \cref{sec:frozen,sec:baselines}: capital, point type and frozen remainder for each main book; the tie-book constructions; and one hedge library of the audit book.

\begin{table}[h]
\centering\footnotesize
\caption{Main books: capital, point type, frozen remainder and its share of capital, and hedge-library status (40 libraries of 8 candidate actions per book; fraction with a unique exact winner). On the four books at a tie the marginal has no reference answer, so no frozen remainder is recorded (--).}\label{tab:tierm}
\begin{tabular}{@{}llrlrrl@{}}
\toprule
Portfolio type & Draw & $K$ & Point & Frozen rem. & Share & Libraries / unique winner \\
\midrule
\input{tab_tierm_rows.tex} 
\bottomrule
\end{tabular}
\end{table}

\begin{table}[h]
\centering\footnotesize
\caption{Tie books by construction and number of trades $N$ (three draws per size), with the type of marginal reference. Exact sets are constructed independently of the engine.}\label{tab:tiers}
\begin{tabular}{@{}lll@{}}
\toprule
Construction & Books by size & Marginal reference \\
\midrule
\input{tab_tiers_rows.tex} 
\bottomrule
\end{tabular}
\end{table}

\begin{table}[h]
\centering\footnotesize
\caption{A hedge-choice example: library 1 of 40 on ordinary main book 1 ($K=105.39$). The table gives exact capital after each of the eight candidate actions; the lowest value is unique.}\label{tab:audit}
\begin{tabular}{@{}rlr@{}}
\toprule
\# & Candidate action & Capital after action \\
\midrule
\input{tab_audit_rows.tex} 
\bottomrule
\end{tabular}
\end{table}

\FloatBarrier
\section{Aggregate quality scores}\label{app:euaq}

For each method and book, the benchmark also computes an aggregate quality score from the per-question qualities $q_t=1/(1+e_t/\tau_t)$. The materiality scales $\tau_t$ were fixed before any results were produced: marginal $0.002766$, removal $0.002729$, accounting $0.00668$, origin $0.000505$ and hedge choice $0.01$. The score is $100\,C\,F$, where $C$ is the weighted share of defined questions the method answers and $F$ is the weighted geometric mean of $q_t$ over those questions. A second version multiplies the score by a guarantee factor: 1 for the exact engine; for guaranteed methods, the certified coverage discounted by the guaranteed error bound; and 0 otherwise.

The scores are summaries; the primary evidence is the raw errors and the applicability pattern of \cref{sec:baselines}. The exact engine's score of about 100 is automatic, because its answers define the reference. On the tie books, the scales carried over from the 96-trade calibration depress $q_t$ by about $2.1\times$ at 10 trades, a limitation disclosed before any results were produced.

\begin{table}[h]
\centering\footnotesize
\caption{Sensitivity of scores to question weights. Score distribution over 10{,}000 random weightings (symmetric Dirichlet, $\alpha=1$), per method over its own eligible books. The distributions are diagnostics, \emph{not} a ranking: methods are scored on different sets of books.}\label{tab:robust}
\begin{tabular}{@{}lcrrr@{}}
\toprule
Method & Scored / total & Mean & 5th pct. & 95th pct. \\
\midrule
\input{tab_robust_rows.tex} 
\bottomrule
\end{tabular}
\end{table}

\paragraph{Exploratory cost comparison.} This comparison is not a timing claim. It uses mean CPU seconds per valid answer, a non-authoritative measure taken without a guaranteed-idle machine. The methods not dominated in score versus this cost are analytical Euler, incremental allocation, the exact engine and Li--Xing I. Li--Xing I is on the frontier only because its cost is the lowest among low-scoring methods, and incremental allocation is cheaper than the exact engine on this measure. When scores are weighted by the guarantee factor, every established method scores zero, so that comparison is degenerate.

\section{Adaptive certified approximation: review detail}\label{app:o2r}

\paragraph{Safety margin.} The repaired version adds a safety margin of $2e_v+e_g\,d_{\max}$ (with $d_{\max}$ the largest distance between any query point and any interpolation vertex in the batch) to its arithmetic bound, where $e_v$ and $e_g$ are the declared valuation and gradient errors. If the error is undeclared, the method declines. Its convexity check accounts for valuation uncertainty.

\paragraph{The fourth failure mode.} The soundness proof assumes non-negative interpolation weights, but the membership check allowed a negative slack of $10^{-9}$ with no compensating margin, and the refinement's grid snapping routes real queries through exactly that tolerance band. The failure was reproduced on two routes and fixed by requiring strictly non-negative weights. An alternative fix, adding a margin, was shown to be unsound in general.

\paragraph{Remaining findings.} With two or more factors, the strict fix makes 59--100\% of random cases decline, even at their own exact vertices; this is a coverage cost, not a soundness failure. The gradient error is measured by an elementwise maximum, whereas the proof needs a Euclidean norm, and the mismatch is latent for two or more factors. Finally, there is no rounding margin when summing trades. According to a single party's measurement, the effect is unreachable at the valuation error used.

\paragraph{Benchmark status.} No benchmark result exists for this method. Running it on the benchmark would require a new whole-graph certificate layer, which was deliberately not built.

\section{Timing history of certified repricing}\label{app:timing}

No timing was measured for the results in this paper. The timing script passed a dry run on the intended hardware, but an otherwise idle machine could not be guaranteed.

For completeness, we report an earlier pilot of the same method family (convex-anchor bracketing with selective repricing) on a different pilot benchmark. Its figures are non-authoritative and support no timing claim. The pilot concluded that the method's advantage is limited to one-off queries. The first-query time ratio was 0.395 on one pilot family. On a second it fell between two pre-registered statistics, 0.4975 and 0.5059, against a 0.5 target, and the two must be reported together. The pilot's disclosure also requires the following to be reported with its conclusion:
\begin{itemize}[leftmargin=*,itemsep=1pt]
\item \emph{Repeated queries.} At 20 queries the certified route took 1.46--1.54 times the time of a memoised full revaluation as run (1.83 on a third pilot family). After symmetric caching the ratio became about 0.84--0.87, still missing the pre-registered target of 0.7, and it was worse beyond 20 queries.
\item \emph{An unguaranteed competitor.} A spline surrogate reached a relative error of about $3\times10^{-5}$ to $4\times10^{-4}$ in the internal-models charge using 5\% of the pricings, with no guarantee. The certified route's advantage over it is exactness with proof, not speed.
\item \emph{Batch fragmentation.} Repricing batches were fragmented: the mean batch size was about 88, against a baseline chunk size of 128. The fragmentation cost roughly $1.7$--$1.8\times$ per pricing, so a $2.4$--$4.7\times$ reduction in pricing \emph{count} became about $2.1$--$2.8\times$ in pricing \emph{time} on two pilot families, and $1.49\times$ on the third.
\end{itemize}
None of this affects the exactness and guarantee results, which the pilot did not dispute.

\section{Evidence provenance}\label{app:provenance}

Benchmark numbers are read from the frozen result files below. The extraction script verifies each file's SHA-256 before reading it and aborts on any mismatch.

\begin{center}\footnotesize
\begin{tabularx}{\linewidth}{@{}lX@{}}
\toprule
Content & File and SHA-256 \\
\midrule
Tie books & \texttt{CORE\_RESULTS\_V1\_TIER\_S.json}\newline \texttt{8747184bd4eef7eb11a324a75d5d884970052ac94f4de0da5162445b82f13bdc} \\
Main books & \texttt{CORE\_RESULTS\_V1\_TIER\_M.json}\newline \texttt{717cdf51753bb683e995066874cf53e887dae7ed2783095099c2fcb0acd692b1} \\
Option books & \texttt{O3\_RESULTS\_V1.json}\newline \texttt{37d6d63eb383a235487b35783db6364b5acf5fe7c64914632bd782f549b0fb05} \\
Aggregate quality scores & \texttt{INTEGRATION\_V1.json}\newline \texttt{4474986fb0a3cab589bcd71b5caeb325431b577647dff1d8a9a0dde79ba23bb5} \\
Adaptive certified approximation & \texttt{O2R\_RESULTS.json}\newline \texttt{6b6aa88a26d25aae204db23088aeaa076a7d60c33667da0583b72207138ef983} \\
Timing (dry run) & \texttt{TIMING\_V1.json}\newline \texttt{8bad7670807a730e347490c698ba63b843e50db96fb31b4501fd5b43f05ed0c9} \\
\bottomrule
\end{tabularx}
\end{center}

The adversarial review record draws on the review file \texttt{O2R\_ADMISSION.json} (SHA-256 prefix \texttt{13be4dd2}) and the benchmark's final summary. The pilot timing history draws on the project decision log. Substrate facts come from the book generators' documentation: 96-trade main books, a 24-factor universe, and 12 lattice-priced trades per option book.

\section{Scope of claims}\label{app:claims}

\begin{center}\footnotesize
\begin{tabularx}{\linewidth}{@{}Xl@{}}
\toprule
Statement & Status \\
\midrule
Full (synthetic) FRTB-IMA capital attributed to trades for every allocation measure that is defined & shown \\
Origin and accounting ledgers reconcile to capital & theorem with proof, plus numerical confirmation \\
Ledgers re-sum to any grouping & identity (follows from the theorem) \\
The allocation measures are distinct & shown by counterexample and theorem \\
Established methods answer narrower slices; ``not applicable'' is structural & shown on the benchmark \\
Certified repricing matches the exact answer on its declared class & shown on 6 books; specialist only \\
Certified repricing is faster & not claimed \\
The adaptive certified approximation works on the whole graph & not claimed; never run on the benchmark \\
Adversarial review motivates fail-closed rules & shown (review record) \\
\bottomrule
\end{tabularx}
\end{center}

%% file: theory_appendix.tex
\subsection{Setting, notation and assumptions}\label{app:th:setting}

Capital depends on the book held on every stored date, so positions and allocations are indexed by (date, trade).

\paragraph{Positions and allocations.}
\begin{itemize}[leftmargin=*,itemsep=1pt]
\item Dates are $s\in\{0,\dots,T-1\}$. The latest date $T-1$ is the regulatory ``$t-1$'' of \eqref{eq:hist}. In ES and default blocks, $s$ also indexes scenarios, as in $c_s$, $z_s$ and $D_{s^\ast,\cdot}$.
\item Trades are $i\in\{1,\dots,N\}$.
\item The position array is $x\in\R^{T\times N}$, where $x_s$ is the book held on date $s$. Today's book is $w:=x_{T-1}$, and $x_{<T-1}$ collects the past books.
\item An allocation is an array $A\in\R^{T\times N}$, with total $\mathbf 1^\top A:=\sum_{s,i}A_{s,i}$. The symbol $\odot$ is the entrywise product.
\item Each node $b$ of the capital graph, with value $f_b$, has a homogeneity degree $\theta_b\in\{0,1\}$: $f_b(\alpha x)=\alpha^{\theta_b}f_b(x)$ for all $\alpha>0$.
\end{itemize}

\paragraph{Symbols chosen to avoid clashes.} The three stress-charge groups of \eqref{eq:ses} are $\xi,\eta,\zeta$ and their correlation is $r$, since $x$ is the position array, $z$ the ES loss vector and $\rho$ expected shortfall. The certified tail set of \cref{thm:B:cert} is $\mathcal H$, since $H$ is the amber surcharge, and a neighbourhood is $\mathcal U$, since $U$ is an amber input.

\paragraph{Assumptions.} Recursive reconciliation (\cref{thm:B:recon}) and re-summing (\cref{cor:B:resum}) use only (A1)--(A5). (A4$'$) and (A6) are additional conditions, used only by the marginal and tie statements.
\begin{description}[leftmargin=0pt,itemsep=2pt,font=\normalfont\itshape]
\item[(A1) Scenario linearity.] Each ES block on date $s$ has a finite scenario-by-trade matrix $L$ of unit-notional losses, with portfolio loss $z=Lx_s$. The default law on date $s$ is a finite, equally weighted matrix $D$. Each stress charge is a fixed stress vector ($g^\top x_s$) or a finite support function ($\max_kB_k^\top x_s$).
\item[(A2) Homogeneity.] Hence every primitive is positively homogeneous of degree~1 in $x_s$.
\item[(A3) Domain.]
  \begin{itemize}[itemsep=0pt]
  \item Every present risk class has reduced-set ES $R_c>0$.
  \item If amber desks exist, then $V>0$; if none exist, we set $k:=0$.
  \item A liquidity-horizon node with $E=0$, or a stress group with zero norm, is an explicit zero with zero allocation.
  \end{itemize}
\item[(A4) Standardised-approach inputs.] $B,C_U,Z,U,V$ are functions of $w$ that are positively homogeneous of degree~1 and piecewise $C^1$. Each is supplied with an allocation that reconciles to its value.
\item[(A4$'$) Gradient-type inputs.] At $x$, each input is differentiable and its supplied allocation is $w\odot\nabla(\cdot)$. At a tie, the supplied allocation is $w\odot\nabla$ of an essentially active selection (defined in \cref{app:th:ties}). Needed only for marginal, as opposed to accounting, statements.
\item[(A5) Stored history.] $I_s,N_s,d_s$ depend only on $x_s$, and are stored together with their allocations. In the benchmark schedule, the 60 daily observations end on date $T-1$, and so do the 12 weekly default observations; thus $d_{\rm latest}=d_{T-1}$.
\item[(A6) Nondegeneracy.] Near $x$, each liquidity-horizon value $E_{c,v}$ is either nonzero or identically zero, and each stress group norm is either nonzero or identically zero. Needed only for statements about ties.
\end{description}

Under (A1)--(A5), every monetary node is positively homogeneous of degree~1 in $x$, and $k$ has degree~0. In particular $K(\alpha x)=\alpha K(x)$ for all $\alpha>0$. Reconciliation weights each node value by its homogeneity degree, so a degree-0 node such as $k$ must receive an allocation that sums to zero.

\begin{definition}[Reconciliation; node rules]\label{def:B:recon}
\leavevmode
\begin{itemize}[leftmargin=*,itemsep=1pt]
\item An allocation $A$ of node $b$ \emph{reconciles} if $\mathbf 1^\top A=\theta_bf_b(x)$.
\item A \emph{node rule} sets $A^b=\sum_{j\in\mathrm{ch}(b)}c_{bj}A^j$, with coefficients that depend on node values but not on allocations.
\item A node rule satisfies the \emph{Euler condition} if $\sum_jc_{bj}\theta_jf_j(x)=\theta_bf_b(x)$.
\end{itemize}
\end{definition}

\subsection{Leaves and node rules}\label{app:th:rules}

\Cref{thm:B:recon} below needs two facts about the capital graph: every leaf allocation reconciles, and every node rule satisfies the Euler condition. The two lemmas establish them for each leaf type and each node type.

\begin{lemma}[Leaves reconcile]\label{lem:B:leaves}
Each leaf allocation reconciles.
\begin{enumerate}[label=(\roman*),leftmargin=*,itemsep=1pt]
\item \emph{ES block.} $x_s\odot L^\top q$, for any $q$ on the optimal dual face of \eqref{eq:es} at $z=Lx_s$.
\item \emph{Stress charge.} $x_s\odot g$, for an active stress vector $g$: the fixed vector, or a maximising row.
\item \emph{Default quantile.} $x_s\odot D_{s^\ast,\cdot}$, for any scenario $s^\ast$ whose loss equals the quantile.
\item \emph{Standardised-approach inputs.} These reconcile by (A4).
\end{enumerate}
\end{lemma}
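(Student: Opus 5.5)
The plan is to verify each of the four leaf types separately. Throughout, reconciliation means $\mathbf 1^\top A=\theta_bf_b(x)$ with $\theta_b=1$ for every leaf. An allocation supported on the block of date $s$ sums to $\sum_iA_{s,i}$, so each case reduces to the identity $x_s^\top v=f(x_s)$ for the relevant vector $v$.

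\emph{(i) ES block.} We have $\mathbf 1^\top(x_s\odot L^\top q)=x_s^\top L^\top q=q^\top z$ with $z=Lx_s$ by (A1). Since $q$ lies on the optimal dual face of \eqref{eq:es}, it attains the maximum there. Strong duality for this finite LP (it is feasible and bounded, since $\mathbf 1^\top c=1/\beta\ge1$) gives $q^\top z=\rho(z)$, the value of the ES node. This single step uses the primal--dual equality in \eqref{eq:es}. I would include a one-line justification: for fixed $z$, the maximisation is a fractional-knapsack LP, and complementary slackness with the optimal $\tau$ shows that the greedy $q$ attains the primal value. The same holds for every point of the optimal face, because all of them attain the same maximum.

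\emph{(ii) Stress charge.} For a fixed vector $g$, the charge is $g^\top x_s$ and the allocation sums to $x_s^\top g$, so reconciliation is immediate. For a support function $\max_kB_k^\top x_s$, an active row $g=B_{k^\ast}$ attains the maximum. Hence $x_s^\top g=\max_kB_k^\top x_s$, the node value, and reconciliation holds.

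\emph{(iii) Default quantile.} By (A1) the law is finite, so the infimum in \eqref{eq:drc} is attained. The quantile $d$ is then one of the finitely many scenario losses $(Dx_s)_{s'}$, and a scenario $s^\ast$ with $(Dx_s)_{s^\ast}=d(x_s)$ exists. For any such $s^\ast$, $\mathbf 1^\top(x_s\odot D_{s^\ast,\cdot})=D_{s^\ast,\cdot}x_s=d$.

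\emph{(iv) Standardised-approach inputs.} These reconcile by the hypothesis in (A4), which is stated for exactly these inputs.

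The only step carrying real content is (i), namely that every point of the optimal dual face gives the ES value. I expect no genuine obstacle beyond recording LP duality for \eqref{eq:es}. The degenerate leaves, an explicit zero with zero allocation under (A3), reconcile trivially.
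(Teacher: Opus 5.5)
Your proposal is correct and follows the paper's proof: each leaf's allocation sums to $x_s^\top v$ for the chosen vector $v$, which equals the node value by dual optimality in \eqref{eq:es}, by the maximising row, by the quantile scenario, or by (A4). Your additional justifications are correct but not needed: the LP strong-duality argument is unnecessary because the paper takes the primal--dual equality of \eqref{eq:es} as given, and the attainment argument for the finite-law quantile is unnecessary because the lemma already presupposes such an $s^\ast$.
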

\begin{proof}
\begin{enumerate}[label=(\roman*),leftmargin=*,itemsep=1pt]
\item $\mathbf 1^\top(x_s\odot L^\top q)=q^\top Lx_s=\rho(z)$, because $q$ is optimal in the dual form.
\item $\mathbf 1^\top(x_s\odot g)=g^\top x_s$, which is the charge; for a support function this uses that $g$ attains the maximum.
\item $\mathbf 1^\top(x_s\odot D_{s^\ast,\cdot})=(Dx_s)_{s^\ast}=d$.
\item Immediate from (A4).
\end{enumerate}
\end{proof}

\begin{lemma}[Every node rule satisfies the Euler condition]\label{lem:B:nodes}
The rules of \cref{sec:rules} satisfy the Euler condition for each of the following node types.
\begin{enumerate}[label=(\alph*),leftmargin=*,itemsep=1pt]
\item \emph{Linear combinations with constant coefficients:} the IMCC mix, history averages, $J=C_A+C_D$, $J+H+C_U$, and the outer sum $K=m+(J-B)_+$ with $m=\min\{J+H+C_U,Z\}$.
\item \emph{Liquidity-horizon aggregation} $E=(\sum_ja_je_j^2)^{1/2}$, with $a_j>0$.
\item \emph{The ratio} $G=S\max(1,F/R)$ with $R>0$, on all three branches and for any tie weight $\lambda\in[0,1]$.
\item \emph{The stress charge} $\|\xi\|_2+\|\eta\|_2+(\zeta^\top Q\zeta)^{1/2}$, with $Q=(1-r^2)\,\mathrm{Id}+r^2\mathbf 1\mathbf 1^\top$ and $|r|\le1$.
\item \emph{Maxima, minima and positive parts:} weight $\omega\in\{0,1\}$ off a tie, and a declared $\omega\in[0,1]$ at a tie.
\item \emph{The amber coefficient} $k=U/(2V)$ (degree~0) and the surcharge $H=k(B-J)_+$.
\end{enumerate}
\end{lemma}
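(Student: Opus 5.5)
The plan is to check the Euler condition $\sum_j c_{bj}\theta_j f_j(x)=\theta_b f_b(x)$ separately for each node type (a)--(f). In every case the child allocations enter only through the sums $\mathbf 1^\top A^j=\theta_j f_j$, so the check reduces to an algebraic identity in the node values. The organising principle is Euler's theorem for positively homogeneous functions: if $f_b=\phi(f_{j_1},\dots)$ with $\phi$ homogeneous of degree one in its arguments, then $\sum_j \partial_j\phi\, f_j=\phi$. Each rule of \cref{sec:rules} uses the coefficients $c_{bj}=\partial_j\phi$, or a convex combination of one-sided derivatives at a kink.

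Cases (a), (b) and (d) are direct. For (a), the coefficients are constants, all children have degree one, and the condition is just linearity. For (b), $\sum_j (a_je_j/E)\,e_j=E^2/E=E$; the case $E=0$ is the explicit zero permitted by (A3). For (d), each norm term is treated the same way: $\sum_a(\xi_a/\|\xi\|)\xi_a=\|\xi\|$, and $\sum_\kappa (Q\zeta)_\kappa\zeta_\kappa/\sqrt{\zeta^\top Q\zeta}=\sqrt{\zeta^\top Q\zeta}$, where $Q\succeq0$ for $|r|\le1$ guarantees the square root is real.

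For (c), I split into the three branches.
\begin{itemize}
\item If $F<R$, the coefficient is $1$ on $S$, giving $S=G$.
\item If $F>R$, the sum is $\tfrac FR S+\tfrac SR F-\tfrac{SF}{R^2}R=\tfrac{SF}{R}=G$.
\item At $F=R$, the $\lambda$-term has total $\lambda\tfrac SR(F-R)=0$, so the sum is $S=G$ for every $\lambda\in[0,1]$.
\end{itemize}
For (e), the rules for $\max$, $\min$ and $(\cdot)_+$ are $\omega$-combinations of the arguments. Off a tie, the active argument equals the value. At a tie, all tied arguments equal the value, so any convex weight reproduces it. For the positive part, the inactive branch is the constant $0$, and at a tie both branches equal $0$.

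Case (f) is the only place where degree zero enters, and I expect it to be the main point needing care. For $k=U/(2V)$, the children $U$ and $V$ have degree one, so $\tfrac1{2V}U-\tfrac{U}{2V^2}V=0=\theta_k k$; if no amber desks exist, $k:=0$ by (A3). For $H=k(B-J)_+$ with $B>J$, rule \eqref{eq:amber} takes coefficients $k$ on $B$, $-k$ on $J$ and $(B-J)$ on $k$. Since $\theta_k=0$, the condition becomes $k(B-J)+(B-J)\cdot0=H$. When $B\le J$, the positive-part rule of (e) gives zero with coefficient zero on everything, and $H=0$ is consistent. At $B=J$, either weight works because $B-J=0$.

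Collecting cases (a)--(f) completes the lemma.
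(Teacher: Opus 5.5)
Your proposal is correct and follows the paper's proof essentially step for step: a direct case-by-case verification of $\sum_j c_{bj}\theta_j f_j=\theta_b f_b$, with the same coefficient computations for the three ratio branches and the tie weights, and for the degree-zero coefficient $k$, whose contribution to $H$ vanishes because $\theta_k=0$. You should add the zero-norm stress group in (d), which the paper handles exactly as you handled $E=0$ in (b): zero value and zero allocation under (A3). Your motivating ``degree-one'' Euler principle should also be stated in the weighted form $\sum_j\partial_j\phi\,\theta_j f_j=\theta_b\phi$ to cover $k$ and $H$, though your direct checks already settle those cases.
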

\begin{proof}
All children have degree~1 except $k$, which has degree~0.
\begin{enumerate}[label=(\alph*),leftmargin=*,itemsep=2pt]
\item $\sum_jc_jf_j=f$.
\item Take $c_j=a_je_j/E$. Then $\sum_ja_je_j^2/E=E$. If $E=0$, every $e_j=0$, and the zero allocation reconciles.
\item Three cases.
  \begin{itemize}[itemsep=0pt]
  \item $F<R$: $c=(1,0,0)$ on $(S,F,R)$, giving $S=G$.
  \item $F>R$: $c=(F/R,\,S/R,\,-SF/R^2)$, giving $SF/R=G$.
  \item $F=R$: $c=(1,\lambda S/R,-\lambda S/R)$, giving $S+\lambda\tfrac SR(F-R)=S=G$.
  \end{itemize}
\item $Q\succeq0$ for $|r|\le1$. With $h=(\zeta^\top Q\zeta)^{1/2}$, the outer coefficients are $\xi_a/\|\xi\|$, $\eta_b/\|\eta\|$ and $(Q\zeta)_\kappa/h$. Then $\sum_a\xi_a^2/\|\xi\|+\sum_b\eta_b^2/\|\eta\|+\zeta^\top Q\zeta/h$ equals the charge. Zero groups have zero value and zero allocation.
\item For $\max(f_1,f_2)$ with $c=(\omega,1-\omega)$: off a tie, the active branch equals $f$. At a tie, $f_1=f_2=f$. Minima are handled identically, and $(J-B)_+=\max(J-B,0)$ is a special case.
\item For $k$: the coefficients on $(U,V)$ are $\bigl(1/(2V),\,-U/(2V^2)\bigr)$, which sum to $0=0\cdot k$ against $(U,V)$.
  For $H$ on $B>J$: the coefficients on $(B,J,k)$ are $(k,-k,B-J)$, giving $k(B-J)+(B-J)\cdot0\cdot k=H$.
  On $B<J$, $H=0$ with zero allocation. At a tie, the weighted rule sums to $\omega k(B-J)=0=H$.
\end{enumerate}
\end{proof}

\subsection{Reconciliation and re-summing}\label{app:th:recon}

With both lemmas in hand, reconciliation propagates from the leaves to capital by induction over the graph. Re-summing follows because a declared remap or grouping preserves totals.

\begin{theorem}[Recursive reconciliation]\label{thm:B:recon}
On a finite directed acyclic graph, suppose every leaf allocation reconciles and every internal rule satisfies the Euler condition. Then every node allocation reconciles. Under (A1)--(A5), the rules of \cref{sec:rules} therefore give $\mathbf 1^\top A^K=K(x)$, including at ties and for any declared tie weights.
\end{theorem}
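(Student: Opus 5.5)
The plan is strong induction over a topological order of the finite directed acyclic graph, processing nodes from the leaves upward. I would first fix a topological ordering $b_1,\dots,b_M$ in which every child precedes its parent, so that $b_M$ is the capital node $K$. The induction hypothesis at step $m$ is that every node $b_\ell$ with $\ell<m$ satisfies $\mathbf 1^\top A^{b_\ell}=\theta_{b_\ell}f_{b_\ell}(x)$. If $b_m$ is a leaf, it reconciles by hypothesis, which for the capital graph is \cref{lem:B:leaves}.

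If $b_m$ is internal, the node rule gives $A^{b_m}=\sum_{j}c_{b_mj}A^j$, with coefficients depending only on node values. Summing entries is linear, so
\[
\mathbf 1^\top A^{b_m}=\sum_jc_{b_mj}\,\mathbf 1^\top A^j=\sum_jc_{b_mj}\,\theta_jf_j(x)=\theta_{b_m}f_{b_m}(x).
\]
The second equality is the induction hypothesis applied to the children, which all precede $b_m$. The third is the Euler condition of \cref{def:B:recon}. This closes the induction, and taking $b_M=K$ with $\theta_K=1$ yields $\mathbf 1^\top A^K=K(x)$.

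For the second sentence of the theorem, I would check that under (A1)--(A5) the capital graph of \cref{sec:graph} meets both hypotheses. It is finite and acyclic: there are finitely many dates, blocks and nodes, and history enters only through stored per-date values by (A5). Leaves reconcile by \cref{lem:B:leaves}, using (A4) for the standardised-approach inputs. Every internal node type appearing in \cref{sec:rules} satisfies the Euler condition by \cref{lem:B:nodes}; this covers the ES leaves feeding the horizon aggregation, the ratio, the IMCC mix, the stress charge, the history averages and maxima, the amber coefficient and surcharge, and the final min and positive part.

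The main obstacle is making sure the degrees $\theta$ are assigned consistently and that the case splits in \cref{lem:B:nodes} use coefficients fixed by node values at $x$ alone. Degrees are $1$ for monetary nodes and $0$ for $k$. Coefficients fixed at $x$ hold in particular at ties, where a declared weight $\lambda$ or $\omega$ is a constant not depending on allocations, and the Euler identity holds for every such weight because the tied branches share the common value. Domain issues ($R_c>0$, $V>0$, explicit zeros for $E=0$ or zero groups) are handled by (A3), so every coefficient is well defined. Once these checks are in place, the induction applies verbatim, including at ties and for any declared tie weights.
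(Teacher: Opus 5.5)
Your proof is correct and is essentially the paper's own argument. Both induct in topological order, with the internal-node step given by linearity of the total, the induction hypothesis on the children, and the Euler condition; both then obtain the FRTB statement from \cref{lem:B:leaves,lem:B:nodes}. One small wording point: the ES blocks are leaves covered by \cref{lem:B:leaves}; \cref{lem:B:nodes} covers only the horizon aggregation node above them.
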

\begin{proof}
Induct in topological order. Leaves reconcile by hypothesis. For an internal node $b$ whose children reconcile,
\[\mathbf 1^\top A^b=\sum_jc_{bj}\mathbf 1^\top A^j=\sum_jc_{bj}\theta_jf_j=\theta_bf_b.\]
The FRTB statement follows from \cref{lem:B:leaves,lem:B:nodes}.
\end{proof}

\begin{remark}
\Cref{thm:B:recon} is an accounting statement. \Cref{prop:B:smooth,prop:B:set} say when the reconciling allocation is also a marginal.
\end{remark}

\begin{corollary}[Re-summing]\label{cor:B:resum}
Let $a=A^K$, or $a=PA^K$ for a declared remap with $\mathbf 1^\top P=\mathbf 1^\top$. The engine additionally requires $P\ge0$. Then $\mathbf 1^\top a=K$, and $\mathbf 1^\top(P_Ga)=K$ for every grouping matrix with $\mathbf 1^\top P_G=\mathbf 1^\top$.
\end{corollary}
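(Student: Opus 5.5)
The argument combines \cref{thm:B:recon} with linearity of the total functional $a\mapsto\mathbf 1^\top a$. The plan has three steps.

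\emph{Step 1: the origin ledger.} For $a=A^K$, I will invoke \cref{thm:B:recon}. Its hypotheses are supplied by \cref{lem:B:leaves} (leaves reconcile) and \cref{lem:B:nodes} (every node rule satisfies the Euler condition). Capital has degree $\theta_K=1$, so the theorem gives $\mathbf 1^\top A^K=\theta_KK(x)=K(x)$. This holds at ties and for any declared tie weights.

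\emph{Step 2: a declared remap.} Let $a=PA^K$. I will view the allocation as a vector in $\R^{TN}$, using the indexing $sN+i$ of \cref{app:alloc}, so that $P$ is a matrix acting on it. The target index set may differ, for example today's trades only. Then
\[
\mathbf 1^\top(PA^K)=(\mathbf 1^\top P)A^K=\mathbf 1^\top A^K=K ,
\]
using the column-sum condition $\mathbf 1^\top P=\mathbf 1^\top$, where the left $\mathbf 1$ has the length of the target space. The engine's extra requirement $P\ge0$ plays no role in the identity. I will remark that it only ensures that nonnegative entries stay nonnegative, and is therefore a convention, not part of the proof.

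\emph{Step 3: groupings.} For a grouping matrix $P_G$ with $\mathbf 1^\top P_G=\mathbf 1^\top$, for instance a $0/1$ matrix assigning each position or trade to exactly one desk, product or risk class, the same computation gives $\mathbf 1^\top(P_Ga)=\mathbf 1^\top a=K$. This applies to either choice of $a$ from Steps 1 and 2.

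The only point needing care is dimensional bookkeeping. The remap and grouping matrices may change the index set, and the two all-ones vectors must be read in their respective spaces. The column-stochastic condition is exactly what makes the total invariant. There is no analytic obstacle: all the substantive work already sits in \cref{thm:B:recon}.
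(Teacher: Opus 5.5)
Your proposal is correct and matches the paper's proof: you obtain $\mathbf 1^\top A^K=K$ from \cref{thm:B:recon}, and the remap and grouping cases follow from the single identity $\mathbf 1^\top(PA^K)=(\mathbf 1^\top P)A^K=\mathbf 1^\top A^K$, applied again with $P_G$. You also note that $P\ge0$ is not used in the identity, which agrees with the paper; there it is only an engine requirement.
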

\begin{proof}
$\mathbf 1^\top(PA^K)=(\mathbf 1^\top P)A^K=\mathbf 1^\top A^K=K$, and the same identity applies to $P_G$.
\end{proof}

\noindent\emph{Remark on groupings.} \Cref{cor:B:resum} applies to every column-stochastic map, that is, every nonnegative map whose columns sum to one. A one-hot desk or product map is column-stochastic. A risk-class grouping of a trade-level ledger is column-stochastic only if each trade is assigned to one class, or is split across classes with weights summing to one.

\subsection{When the ledger is the marginal, and what does not add up}\label{app:th:marginal}

The origin ledger adds up by \cref{thm:B:recon}, and at smooth points it is also the gradient allocation. Today's-book sensitivities and removal effects need not add up to capital; for the sensitivities, the frozen-remainder identity gives the exact gap.

\begin{proposition}[Smooth points: the ledger is the gradient allocation]\label{prop:B:smooth}
Assume (A1)--(A5), (A4$'$), and the following conditions at $x$:
\begin{enumerate}[label=(\roman*),leftmargin=*,itemsep=1pt]
\item strict inequality at every maximum, minimum, positive part and ratio floor;
\item a unique optimal dual $q$ for every ES block;
\item a unique maximising row for every support-function charge;
\item a quantile scenario whose loss differs from every other scenario's loss;
\item $E_{c,v}>0$ or $E_{c,v}\equiv0$ near $x$; nonzero stress-group norms; and $V>0$ if amber desks exist.
\end{enumerate}
Then $K$ is differentiable at $x$ and $A^K=x\odot\nabla K(x)$. Today's block of $A^K$ is $w\odot\nabla_wK(x)$, with history held fixed.
\end{proposition}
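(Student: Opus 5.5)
The plan is to show, by induction over the capital graph in topological order, a strengthened statement: every node value $f_b$ is differentiable at $x$, and its allocation satisfies $A^b=\theta_b$-weighted gradient form. Concretely, for degree-one nodes the claim is $A^b=x\odot\nabla f_b(x)$. For the degree-zero coefficient $k$ it is also $a^k=x\odot\nabla k(x)$, which sums to $0$ by Euler's theorem for degree-zero functions. This mirrors the proof of \cref{thm:B:recon}. There the induction carried only totals; here it carries the whole gradient array, and the node rules $A^b=\sum_jc_{bj}A^j$ are to be read as the chain rule.

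The base cases are the leaves. For an ES block with a unique dual optimiser $q$ (condition (ii)), Danskin's theorem applied to the dual form \eqref{eq:es} shows that $\rho$ is differentiable at $z=Lx_s$ with gradient $q$. Hence $\nabla_{x_s}\rho(Lx_s)=L^\top q$, and the leaf allocation $x_s\odot L^\top q$ of \cref{lem:B:leaves} is $x\odot\nabla$, with zero on the other dates. A support-function charge with a unique maximising row is locally linear by (iii), since the finitely many rows are strictly separated on a neighbourhood. The default quantile is handled the same way under (iv): the order statistics of the finite law $D$ keep the same scenario on a neighbourhood, so $d$ equals $(Dx_s)_{s^\ast}$ locally. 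The standardised-approach inputs are covered directly by (A4$'$). By (A5), each stored charge $I_s,N_s,d_s$ depends only on $x_s$, so its gradient lives in the $s$-th block.

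The inductive step verifies that each rule's coefficients are exactly the partial derivatives $c_{bj}=\partial f_b/\partial f_j$ at the current node values. If so, the chain rule gives $x\odot\nabla f_b=\sum_jc_{bj}\,x\odot\nabla f_j=\sum_jc_{bj}A^j=A^b$. The checks go node by node:
\begin{itemize}[leftmargin=*,itemsep=1pt]
\item Linear combinations have constant coefficients, so they are immediate.
\item For liquidity horizons, $\partial E/\partial e_j=a_je_j/E$ when $E>0$. An identically zero $E$ has zero gradient, by (v).
\item For the ratio with $F\ne R$ strictly (condition (i)), the active branch is locally fixed. The floor branch gives $G=S$, and the other gives $G=SF/R$, whose partials are $(F/R,S/R,-SF/R^2)$.
\item For the stress charge with nonzero group norms (v), the norms are smooth, with partials $\xi_a/\|\xi\|$, $\eta_b/\|\eta\|$ and $(Q\zeta)_\kappa/h$.
\item Strict maxima, minima and positive parts are locally equal to their active branch, so the weights $\omega\in\{0,1\}$ are the derivatives.
\item For $k=U/(2V)$ with $V>0$, the partials are $1/(2V)$ and $-U/(2V^2)$. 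For $H$ with $B>J$ strictly, the product rule gives the coefficients $(k,-k,B-J)$. With $B<J$, $H\equiv0$ locally.
\end{itemize}
The composition of functions differentiable at the relevant points is differentiable. So $K$ is differentiable at $x$ and $A^K=x\odot\nabla K(x)$. Restricting to the block $s=T-1$ yields $w\odot\nabla_wK$. This is the partial gradient with $x_{<T-1}$ fixed, because the full gradient's $w$-block is that partial gradient.

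I expect the main obstacle to be the \emph{local-constancy} claims: that each strict condition at $x$ persists on a neighbourhood, so that a branch choice or active index stays fixed there. This requires continuity of every intermediate node, which I would establish first by the same topological induction. All primitives are continuous: ES is Lipschitz, quantiles of a finite law are continuous in $x_s$, and the rest are compositions of continuous maps on the domain $R_c>0$, $V>0$. The ES leaf also needs care. Uniqueness of the dual optimiser alone gives differentiability of a convex function, since its subdifferential is the set of optimal duals. I would invoke that convex-analysis fact rather than argue via threshold stability. Finally, the history max and average nodes are treated as ordinary max and linear nodes. Condition (i) covers the history maxima, so no separate argument is needed.
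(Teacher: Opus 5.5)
Your proposal is correct and follows the paper's proof essentially step for step. The leaves are differentiable by the singleton-optimal-face fact for ES (the paper cites Rockafellar's Theorem 25.1 where you invoke Danskin) and by local constancy of the active row or quantile scenario; the induction then reads each rule's coefficients as partial derivatives of a local map that is smooth at the children's values, so the chain rule gives $A^b=x\odot\nabla f_b$ up to $K$. The separate continuity induction you anticipate is unnecessary: differentiability of the children at $x$ and of each local map at their values already delivers the chain rule.
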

\begin{proof}
Under (i)--(v), each leaf is differentiable in its date's positions.
\begin{itemize}[leftmargin=*,itemsep=1pt]
\item \emph{ES.} $\rho$ is the support function of the dual polytope, and it is differentiable at $z$ with gradient $q$ whenever the optimal face is the singleton $\{q\}$ \citep[Thm.~25.1]{rockafellar1970convex}. Hence $\nabla_{x_s}\rho(Lx_s)=L^\top q$.
\item \emph{Default quantile.} If the quantile scenario's loss is strictly separated from the others, its rank is locally constant, so $d=(Dx_s)_{s^\ast}$ near $x$ and $\nabla d=D_{s^\ast,\cdot}$.
\item \emph{Support charges.} A unique maximiser is locally constant.
\end{itemize}
Each local map $\phi_b$, which computes node $b$ from its children's values, is $C^1$ at those values under (i) and (v). Its partial derivatives are exactly the coefficients of \cref{lem:B:nodes}:
\begin{itemize}[leftmargin=*,itemsep=0pt]
\item $\partial E/\partial e_j=a_je_j/E$;
\item $\partial(SF/R)=(F/R,S/R,-SF/R^2)$;
\item the norm coefficients;
\item $\partial H/\partial(B,J,k)=(k,-k,B-J)$;
\item $\nabla k=\nabla U/(2V)-U\nabla V/(2V^2)$.
\end{itemize}
Suppose every child allocation equals $x\odot\nabla f_j$; this holds for the leaves, and for the standardised-approach inputs by (A4$'$). The chain rule then gives $A^b=\sum_j(\partial\phi_b/\partial f_j)\,x\odot\nabla f_j=x\odot\nabla f_b$. Induction completes the proof.
\end{proof}

\begin{proposition}[Frozen-remainder identity]\label{prop:B:frozen}
If $K$ is differentiable at $x$, then
\[
w^\top\nabla_wK(x)=K(x)-\underbrace{\textstyle\sum_{s<T-1}x_s^\top\nabla_{x_s}K(x)}_{=:\ \text{frozen remainder}}.
\]
Hence today's-book sensitivities sum to $K$ if and only if the frozen remainder is zero. A sufficient condition is that $K$ is locally independent of past books. This holds, for example, when both history maxima are strictly on their latest branch (with $d_{\rm latest}=d_{T-1}$), or when $\partial K/\partial J=0$.
\end{proposition}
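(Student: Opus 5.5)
The identity follows from Euler's theorem for positively homogeneous functions, applied to $K$ as a function of the whole position array $x\in\R^{T\times N}$ rather than of today's book alone. Under (A1)--(A5), $K(\alpha x)=\alpha K(x)$ for all $\alpha>0$, as recorded after the assumptions. Since $K$ is differentiable at $x$, I will differentiate the map $\alpha\mapsto K(\alpha x)$ at $\alpha=1$. The left side gives $\sum_{s=0}^{T-1}x_s^\top\nabla_{x_s}K(x)$ by the chain rule; this needs only differentiability at the single point $x$, since $\alpha\mapsto\alpha x$ is a smooth curve through $x$. The right side gives $K(x)$. Splitting off the term $s=T-1$, where $x_{T-1}=w$, and moving the past-date terms to the other side yields the displayed identity. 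The ``if and only if'' is then immediate: $w^\top\nabla_wK=K$ exactly when the frozen remainder vanishes.

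For the sufficient condition, the plan is as follows. If $K$ is locally independent of past books, meaning it is constant in $x_{<T-1}$ on a neighbourhood of $x$, then each partial gradient $\nabla_{x_s}K(x)$ with $s<T-1$ is zero, so the remainder is zero. Two examples need to be checked.

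\emph{History maxima on their latest branch.} Suppose both maxima in \eqref{eq:hist} are strictly on their latest branch. By continuity of the stored charges $I_s,N_s,d_s$ in $x_s$ (they are finite maxima, norms and quantiles of linear maps, hence continuous), the strict inequalities persist on a neighbourhood. There $C_A=I_{T-1}+N_{T-1}$ and $C_D=d_{T-1}$, the latter using $d_{\rm latest}=d_{T-1}$ from (A5). Both depend only on $x_{T-1}$, and so does $J$. The standardised-approach inputs depend only on $w$ by (A4), so $K$ depends locally only on $w$.

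\emph{The case $\partial K/\partial J=0$.} I will read this as: $K$ depends on past books only through $J$. That holds because every history term enters only via $C_A$, $C_D$ and hence $J$. The chain rule then gives $\nabla_{x_s}K=(\partial K/\partial J)\nabla_{x_s}J=0$ for $s<T-1$. Strictly, this needs the local map from $(J,B,C_U,Z,U,V)$ to $K$ to be differentiable in $J$ at the point. I will note that, under differentiability of $K$ at $x$, the partial in $J$ is taken along the outer composition.

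The main obstacle is this last step. It requires $J$ to be differentiable in $x_{<T-1}$ so that the chain rule applies, yet differentiability of $K$ does not force differentiability of $J$. If that gap is a problem, I will instead phrase the condition as ``$K$ is locally constant in $J$''. The final layer makes this concrete: when the minimum is on the $Z$ branch and $(J-B)_+$ is strictly inactive, $K=Z$ near $x$ by continuity. Local independence of past books then follows directly, without any chain rule through $J$. The Euler step itself is routine.
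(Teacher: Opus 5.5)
Your proof is correct and follows the paper's: you obtain Euler's identity by differentiating $K(\alpha x)=\alpha K(x)$ at $\alpha=1$, split off today's block, and check the two examples by showing that $K$ depends locally on $w$ alone and by applying the chain rule through $J$. The worry you raise in the last case does not require weakening the condition. When only $x_{<T-1}$ moves, the standardised-approach inputs stay fixed, and $J$ is locally Lipschitz in the past books. A zero partial of the outer map in $J$ therefore already gives $K(x_{<T-1}+h,w)-K(x)=o(\|h\|)$, so the past gradients vanish without $J$ being differentiable.
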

\begin{proof}
Differentiating $K(\alpha x)=\alpha K(x)$ at $\alpha=1$ gives $x^\top\nabla K(x)=K(x)$. Split the sum into today's block and past blocks. If both maxima are strictly on their latest branch, then near $x$ we have $C_A=I_{T-1}+N_{T-1}$ and $C_D=d_{T-1}$, which depend only on $w$. If $\partial K/\partial J=0$, the past positions enter only through $J$, so their gradients vanish.
\end{proof}

\begin{proposition}[Removal effects need not add up]\label{prop:B:removal}
The removal effects $\Delta_i=K(w)-K(w-w_ie_i)$ do not in general sum to $K(w)$. This already happens for a single ES leaf.
\end{proposition}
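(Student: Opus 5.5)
The claim is an existence statement, so one explicit counterexample suffices, and I would build it from a single ES leaf as the statement suggests. Since the negation must hold "in general", it is enough to show that the identity $\sum_i\Delta_i=K(w)$ fails for at least one admissible book. Following \cref{ex:max}, I would use the expected shortfall \eqref{eq:es} with two equally weighted scenarios ($p_1=p_2=\tfrac12$) at level $\alpha=0.975$. Then $\beta=0.025$ and $c_s=p_s/\beta=20\ge1$, so the dual feasible set $\{0\le q\le c,\ \mathbf 1^\top q=1\}$ is the whole probability simplex. Hence $\rho(z)=\max(z_1,z_2)$.

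Next I choose two trades with loss matrix $L=\mathrm{Id}$, so that $z=w$, and notionals $w=(1,1)$. This satisfies (A1)--(A2). I regard the ES leaf as the whole capital, $K=\rho(Lw)$, which is the "single ES leaf" case of the statement. Then $K(w)=1$. Removing trade $i$ gives $w-w_ie_i$, equal to $(0,1)$ or $(1,0)$, and in both cases $K=1$. Therefore $\Delta_1=\Delta_2=0$ and $\sum_i\Delta_i=0\ne1=K(w)$.

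To show the failure is not an artefact of the tie at $w=(1,1)$, I would add a smooth instance. Take $w=(1,2)$: then $K=2$, $\Delta_1=2-2=0$ and $\Delta_2=2-1=1$, so the sum is $1\ne2$. Optionally I would embed the leaf in the full graph, for example through a positive linear branch of \cref{lem:B:nodes}(a), to note that the failure propagates to $K$. Doing so requires checking that the other primitives can be set to zero without violating (A3), since $R_c>0$ is required. I would avoid that complication by stating the claim at the level of a single ES leaf.

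The only mildly delicate step is confirming that the caps $c_s$ do not bind, so that ES collapses exactly to the maximum. This follows from $c_s\ge1$.
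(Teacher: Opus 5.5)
Your proposal is correct and is the paper's proof: a single ES block with $L=\mathrm{Id}$, $p=(\tfrac12,\tfrac12)$ and $\beta=0.025$, so that $c_s=20\ge1$ and $\rho(Lw)=\max(w_1,w_2)$, which gives $K=1$ and $\Delta_1=\Delta_2=0$ at $w=(1,1)$. Your extra instance at the smooth point $w=(1,2)$, where the removal effects sum to $1\neq2$, is correct and is a useful addition the paper does not give: it shows the failure does not depend on the tie.
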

\begin{proof}
Take one ES block with $L=I_2$, $p=(\tfrac12,\tfrac12)$ and $\beta=0.025$. Then $c_s=20\ge1$, so $\rho(Lw)=\max(w_1,w_2)$. At $w=(1,1)$, $K=1$ and $\Delta_1=\Delta_2=0$.
\end{proof}

\subsection{Ties: set-valued marginals}\label{app:th:ties}

At a tie, capital can be nondifferentiable, and the marginal becomes a set built from the Clarke generalized gradient $\partial_C$. Euler's identity passes to limits of gradients, so every element of that set reconciles.

\begin{lemma}[Every generalized gradient reconciles]\label{lem:B:clarke}
Let $f$ be locally Lipschitz and positively homogeneous of degree $\theta$. Then every $g\in\partial_Cf(x)$ satisfies $g^\top x=\theta f(x)$.
\end{lemma}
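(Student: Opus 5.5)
The plan is to pass Euler's identity from differentiable points to the generalized gradient by a limiting argument, using Clarke's gradient formula. By Rademacher's theorem, $f$ is differentiable almost everywhere near $x$. Clarke's characterisation \citep{clarke1983optimization} gives
\[\partial_Cf(x)=\operatorname{conv}\{\lim_k\nabla f(y_k):\ y_k\to x,\ y_k\notin\Omega\},\]
where $\Omega$ is any null set containing the points of nondifferentiability. It therefore suffices to show two things: the identity $g^\top x=\theta f(x)$ holds for every such limit $g$, and it survives convex combinations.

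The first step is Euler's identity at differentiable points. Fix a point $y$ near $x$ where $f$ is differentiable and set $\phi(\alpha)=f(\alpha y)$. Positive homogeneity gives $\phi(\alpha)=\alpha^\theta f(y)$ for all $\alpha>0$. Differentiability of $f$ at $y$ makes $\phi$ differentiable at $\alpha=1$, with $\phi'(1)=\nabla f(y)^\top y$ by the chain rule. Equating the two expressions for $\phi'(1)$ gives $\nabla f(y)^\top y=\theta f(y)$. This is the same computation used in the proof of \cref{prop:B:frozen}.

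The second step passes to the limit. Take $y_k\to x$ with $f$ differentiable at $y_k$ and $\nabla f(y_k)\to g$. Local Lipschitzness bounds the gradients uniformly near $x$, and $f$ is continuous. Hence $\nabla f(y_k)^\top y_k\to g^\top x$ and $\theta f(y_k)\to\theta f(x)$, so $g^\top x=\theta f(x)$. The map $g\mapsto g^\top x$ is linear, so the affine constraint $g^\top x=\theta f(x)$ is preserved under convex combinations. By the hull formula it then holds on all of $\partial_Cf(x)$.

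The main obstacle is making sure the hull formula applies. Clarke's definition is via the generalized directional derivative, and the gradient-limit representation needs $f$ locally Lipschitz in finite dimension; here that comes from the hypothesis on $f$ itself. If I wanted to avoid the hull theorem, I would use an alternative route. Homogeneity gives $f^\circ(x;x)=\theta f(x)$ and $f^\circ(x;-x)=-\theta f(x)$, computing the generalized directional derivative along the ray through scaled points $y+tx=(1+t/\|y\|)y$ approximately. Then every $g\in\partial_Cf(x)$ satisfies $g^\top x\le\theta f(x)$ and $-g^\top x\le-\theta f(x)$. That route needs a more careful limsup estimate, so I would take the Rademacher route as primary.
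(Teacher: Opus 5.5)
Your proof is correct and matches the paper's argument step for step: Clarke's gradient formula, Euler's identity $\nabla f(y_k)^\top y_k=\theta f(y_k)$ at differentiability points, passage to the limit using continuity of $f$, and linearity to extend to the convex hull. Your fallback route via $f^\circ(x;\pm x)=\pm\theta f(x)$ also works, but the right comparison is $y+tx$ against $(1+t)y$, with error at most $Lt\|x-y\|=o(t)$, not the scaling $(1+t/\|y\|)y$ you wrote.
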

\begin{proof}
By Clarke's gradient formula \citep[Thm.~2.5.1]{clarke1983optimization}, $\partial_Cf(x)$ is the convex hull of limits $\lim_k\nabla f(y_k)$ over sequences $y_k\to x$ of differentiability points. At each such point, homogeneity gives $\nabla f(y_k)^\top y_k=\theta f(y_k)$. The gradients are bounded because $f$ is Lipschitz near $x$. Passing to the limit using continuity of $f$ gives $g^\top x=\theta f(x)$ for every limiting gradient $g$. The identity is linear, so it holds on the convex hull.
\end{proof}

\Cref{prop:B:set} shows that, under its assumptions, capital is piecewise smooth near $x$ in the following standard sense. A function $f$ is piecewise-$C^1$ ($PC^1$) near $x$ if there are a neighbourhood $\mathcal U$ of $x$ and finitely many $C^1$ selections $f_\ell$ with $f(y)\in\{f_\ell(y)\}$ on $\mathcal U$. Selection $\ell$ is \emph{essentially active} at $x$ if $x$ lies in the closure of the interior of $\{y\in\mathcal U:f(y)=f_\ell(y)\}$, that is, if $f$ agrees with $f_\ell$ on an open subset of $\mathcal U$ whose closure contains $x$. For continuous $PC^1$ functions, $\partial_Cf(x)=\operatorname{conv}\{\nabla f_\ell(x):\ell\ \text{essentially active}\}$ \citep{kuntz1994structural} and \citep[Prop.~4.3.1]{scholtes2012introduction}.

When the engine reports the marginal at a tie, it keeps only branch combinations that one small move of the whole book activates together. The next proposition makes this test precise and shows that each such combination yields an element of the marginal set.

\begin{proposition}[Set-valued marginal]\label{prop:B:set}
Assume (A1)--(A6) and (A4$'$). Then $K$ is $PC^1$ near $x$, and the following hold.
\begin{enumerate}[label=(\roman*),leftmargin=*,itemsep=1pt]
\item \emph{Reconciliation.} Every $x\odot g$ with $g\in\partial_CK(x)$ reconciles to $K$.
\item \emph{Realisable combinations are marginal vertices.} Fix a combination of branch choices at the tied nodes. For a default-quantile tie, a choice includes the above/below partition of the tied scenarios. Let $\gamma_r$ be the $C^1$ functions whose positivity keeps each chosen branch active against each competitor tied at $x$. Call the combination \emph{realisable} if some $h$ has $\nabla\gamma_r(x)^\top h>0$ for all $r$. A realisable combination defines an essentially active selection $K_\ell$. The engine's vertex allocation for it equals $x\odot\nabla K_\ell(x)$, which lies in $x\odot\partial_CK(x)$.
\item \emph{The engine's set.} The convex hull of the engine's realisable vertices is contained in $x\odot\partial_CK(x)$. It equals that set when every essentially active selection passes the strict first-order test.
\end{enumerate}
\end{proposition}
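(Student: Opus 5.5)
The plan has three steps. First show that $K$ is $PC^1$ near $x$. Then obtain (i) from \cref{lem:B:clarke}. Finally prove (ii) by producing, for each realisable combination, an open cone of directions on which $K$ coincides with a single $C^1$ selection; (iii) then follows from (ii).

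\emph{$PC^1$ structure.} Every primitive is $PC^1$ in its date's positions by (A1).
\begin{itemize}[leftmargin=*,itemsep=1pt]
\item An ES block is a maximum of finitely many linear functions $q^\top Lx_s$ over the vertices of the dual polytope.
\item A support charge is a maximum of the rows $B_k^\top x_s$.
\item The default quantile is an order statistic of the linear functions $(Dx_s)_s$, hence a max--min of finitely many linear maps.
\item The standardised-approach inputs are $PC^1$ by (A4).
\end{itemize}
The internal local maps are $C^1$ near the relevant values wherever their arguments stay in the domain: the square root with $E>0$ (or $E\equiv0$ by (A6)), the norms with nonzero groups, the ratio with $R>0$ by (A3), and $U/(2V)$ with $V>0$. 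The remaining maps are finite max/min/positive-part operations. Compositions and finite max/min of $PC^1$ functions are $PC^1$ (Scholtes), so $K$ is $PC^1$, hence continuous and locally Lipschitz near $x$. It is also positively homogeneous of degree one in $x$. Part (i) is then \cref{lem:B:clarke} with $\theta=1$, applied to $g\in\partial_CK(x)$ and rewritten as $\mathbf 1^\top(x\odot g)=g^\top x=K(x)$.

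\emph{Realisable combinations.} Fix a realisable combination with witness $h$, so that $\nabla\gamma_r(x)^\top h>0$ for all $r$. Each $\gamma_r$ is $C^1$ and vanishes at $x$ because the branches are tied. A first-order Taylor expansion gives $\gamma_r(x+th+tv)>0$ for all small $t>0$ and all $\|v\|<\delta$, for some $\delta>0$, and this holds uniformly in the finitely many $r$. Nodes that are not tied at $x$ keep their strict branch on a neighbourhood by continuity. The set $\mathcal O=\{x+t(h+v):0<t<t_0,\ \|v\|<\delta\}$ is therefore open and its closure contains $x$. On $\mathcal O$ every node follows its chosen branch, so $K$ coincides there with the composition $K_\ell$ of the chosen $C^1$ pieces. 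This makes $K_\ell$ essentially active.

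For default-quantile ties, the chosen above/below partition fixes the ordering of the tied rows on $\mathcal O$, so the quantile row is constant there. For ES ties, the chosen vertex $q$ is the unique dual maximiser on $\mathcal O$; here I use the convention that a branch choice includes the ordering constraints $\gamma_r$ that make $q$ strictly optimal.

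On the open set $\mathcal O$ all coefficients are the partial derivatives of the active pieces, so the chain-rule induction of \cref{prop:B:smooth} applies to the selection $K_\ell$. It uses (A4$'$) for the standardised-approach inputs evaluated on the essentially active selection, and gives engine vertex allocation $=x\odot\nabla K_\ell(x)$. Continuity of the derivatives of $C^1$ pieces lets me evaluate at $x$ itself. Since $\nabla K_\ell(x)$ belongs to $\operatorname{conv}\{\nabla f_\ell(x):\ell\text{ essentially active}\}=\partial_CK(x)$ by the Kuntz--Scholtes formula, (ii) follows.

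\emph{The engine's set.} By (ii), the engine's vertices lie in the convex set $x\odot\partial_CK(x)$, so their convex hull does too. For the equality in (iii), the Kuntz--Scholtes formula expresses $\partial_CK(x)$ as the convex hull over all essentially active selections. If each of them passes the strict test, each appears among the engine's vertices, and the two hulls coincide.

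\emph{Main obstacle.} The delicate step is the correspondence in (ii) between a branch combination together with its $\gamma_r$ and a single $C^1$ selection. At composite ties, for instance an ES tie feeding a tied ratio floor, the $\gamma_r$ for an outer node must be built from the inner chosen pieces rather than from $K$ itself. The natural way to handle this is to define the $\gamma_r$ recursively along the graph in topological order, using the already-selected $C^1$ pieces of the children, so that each $\gamma_r$ is $C^1$. Only then does the open-cone argument run node by node. The ES greedy vertices also require care, because strict optimality of a vertex within the tied set has to be encoded as finitely many strict linear inequalities.
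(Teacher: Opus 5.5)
Your proposal is correct and follows essentially the same route as the paper. You get the $PC^1$ structure from piecewise-linear primitives and $C^1$ local maps, part (i) from \cref{lem:B:clarke} with $\theta=1$, part (ii) from an open cone around the witness $h$ on which every chosen branch is active (non-tied nodes by continuity), so that $K=K_\ell$ there and the chain-rule argument of \cref{prop:B:smooth} with (A4$'$) gives $x\odot\nabla K_\ell(x)$, and part (iii) from the Kuntz--Scholtes characterisation. The recursive, topological-order construction of the $\gamma_r$ from already-selected child pieces, which you flag as the main obstacle, is exactly what the paper's ``inducting up the graph'' step does implicitly.
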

\begin{proof}
\emph{$PC^1$ property.} Each node is $PC^1$ near $x$:
\begin{itemize}[leftmargin=*,itemsep=0pt]
\item ES, quantiles and support functions are piecewise linear;
\item under (A3) and (A6), the horizon root, the ratio, the norms and $k$ are $C^1$ or identically zero;
\item maxima, minima, sums and products (composition with $C^1$ multiplication) of $PC^1$ functions are $PC^1$ \citep{scholtes2012introduction}.
\end{itemize}

\emph{(i)} $PC^1$ functions are locally Lipschitz, so \cref{lem:B:clarke} with $\theta=1$ applies. Note $\mathbf 1^\top(x\odot g)=g^\top x$.

\emph{(ii)} First, competitors that are not tied at $x$ remain strictly dominated on a neighbourhood, by continuity. Next, suppose $\Gamma h>0$, where the rows of $\Gamma$ are the vectors $\nabla\gamma_r(x)$. By continuity of the $\nabla\gamma_r$ there are $\varepsilon_0>0$ and an open ball $\mathcal B\ni h$ such that $\gamma_r(x+\varepsilon h')>0$ for all $r$, all $0<\varepsilon<\varepsilon_0$ and all $h'\in\mathcal B$.

Inducting up the graph, every chosen branch is the active branch on this open set. So $K$ equals the $C^1$ composite $K_\ell$ there, and $x$ lies in the closure of that set. Hence $K_\ell$ is essentially active at $x$.

On the same set every node is differentiable, and the engine's rule applies the partial derivatives of the chosen branches. The chain-rule argument of \cref{prop:B:smooth}, applied to the $C^1$ selections and using (A4$'$) for the standardised-approach inputs, gives the vertex allocation $x\odot\nabla K_\ell(x)$.

\emph{(iii)} This follows from (ii) and the $PC^1$ characterisation of $\partial_C$.
\end{proof}

\begin{remark}
The strict test is sufficient but not necessary. Some essentially active selections may fail it without a nondegeneracy condition. Declared tie weights at several nodes produce convex combinations of possibly non-realisable combinations. These reconcile (\cref{thm:B:recon}), but they are marginal only when every combination involved is realisable. Where no independent construction of the full set exists, the benchmark records the marginal question as having no reference answer.
\end{remark}

\begin{proposition}[Independent child selections can over-cover]\label{prop:B:overcover}
There are graphs in which choosing a tied child's branch independently in each of its parents produces an allocation that reconciles but is not in $x\odot\partial_Cf(x)$.
\end{proposition}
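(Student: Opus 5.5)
The statement is existential, so I will prove it with one explicit small graph in which a single tied child feeds two parents. Take two trades with positions $w=(w_1,w_2)$ and a leaf $u(w)=\max(w_1,w_2)$, which is an ES block with $L=\mathrm{Id}_2$, $p=(\tfrac12,\tfrac12)$, exactly as in the proof of \cref{prop:B:removal}. Evaluate at $w=(1,1)$, where $u$ is tied. Let $u$ feed two parents: $f_1=u$ and $f_2=\max(u,\,w_1+w_2-u)$. The second parent is a degree-one maximum whose competing branch $w_1+w_2-u=\min(w_1,w_2)$ is also tied at $w$. For simplicity I would instead take the top node $f=u+v$, where $v$ is a second copy of the same tied maximum. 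The two parents then read the child $\max(w_1,w_2)$ through the sum node of \cref{lem:B:nodes}(a), so $f=2\max(w_1,w_2)$ and $f(1,1)=2$.

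The key step is to compute both sets.

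\emph{The true marginal set.} Here $f$ is $PC^1$ with essentially active selections $2w_1$ and $2w_2$. By the characterisation recalled before \cref{prop:B:set}, $\partial_Cf(1,1)=\operatorname{conv}\{(2,0),(0,2)\}$, so $x\odot\partial_Cf(x)$ is the segment between $(2,0)$ and $(0,2)$.

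\emph{Independent selections.} Choosing the branch of the tied child independently in each of its two parents gives four combinations. Taking branch $1$ in one parent and branch $2$ in the other yields the allocation $(1,0)+(0,1)=(1,1)$. This point reconciles, since $1+1=2=f$; this follows either directly or from \cref{thm:B:recon}, because each parent rule satisfies the Euler condition. It does lie on the segment, however, so this example does not separate the two sets.

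The main obstacle is therefore that with linear parents, mixed selections are still convex combinations of the realisable vertices. To obtain genuine over-cover, the two parents must weight the child differently and nonlinearly. I would therefore switch to a parent in which the two reads of the child enter with opposite signs, for example $f=u-v$ with $u$ and $v$ both the tied maximum. Then $f\equiv0$, so $\partial_Cf=\{0\}$ and the true marginal set is $\{(0,0)\}$. A mixed independent choice gives the allocation $(1,0)-(0,1)=(1,-1)$, which sums to $0=f$ and so reconciles, but it is not $(0,0)$.

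This graph must satisfy the paper's positivity and homogeneity assumptions. A difference node is linear with constant coefficients, so it is covered by \cref{lem:B:nodes}(a) and \cref{thm:B:recon} gives reconciliation. The realisability test of \cref{prop:B:set}(ii) fails here, because $\gamma=w_1-w_2$ and $w_2-w_1$ cannot both be positive in one direction $h$. This is exactly the mechanism the proposition targets.

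Finally, I would state the construction cleanly. The graph has leaf $u$, two parent edges reading it with coefficients $+1$ and $-1$, and evaluation point $w=(1,1)$. I would then verify the three claims:
\begin{enumerate}
\item the allocation $(1,-1)$ reconciles;
\item $x\odot\partial_Cf(x)=\{0\}$, because $f$ is identically zero;
\item $(1,-1)\neq0$.
\end{enumerate}
If a degree-one monetary example with $f\neq0$ is preferred, I would add an untied leaf $w_1+w_2$ to $f$. This shifts both the allocation and the true set by $(1,1)$, so the conclusion is unchanged.
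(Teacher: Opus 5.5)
Your final construction is correct and is essentially the paper's own: the paper reads the tied $g=\max(x_1,x_2)$ through two parents $p_1=2g$ and $p_2=g$ and takes $f=p_1-p_2=g$, so the mixed selection $(2,-1)$ reconciles but lies outside the segment $\operatorname{conv}\{e_1,e_2\}$. Your coefficients $(+1,-1)$ instead give the degenerate but valid case $f\equiv0$ and $\partial_Cf=\{0\}$, with mixed selection $(1,-1)$; you should make the two parents explicit nodes, say $p_1=u$, $p_2=u$ and $f=p_1-p_2$, so that the branch really is chosen ``in each of its parents'' rather than by one node reading $u$ twice, and you can drop the abandoned false starts.
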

\begin{proof}
Let $g=\max(x_1,x_2)$, with parents $p_1=2g$ and $p_2=g$, and set $f=p_1-p_2=g$. At $x=(1,1)$, realisable choices use one branch of $g$ in both parents and give gradients $e_1$ or $e_2$. Their hull is $\partial_Cf(x)$. Independent choices also allow $2e_1-e_2$. The allocation $x\odot(2,-1)=(2,-1)$ sums to $1=f(x)$, yet it is not in $\operatorname{conv}\{e_1,e_2\}$.
\end{proof}

\noindent The FRTB capital graph contains this structure: the history node $J$, the sum of the two history maxima, enters $J+H+C_U$ positively, $H=k(B-J)_+$ negatively and $(J-B)_+$ positively.

\subsection{Reconciliation is not faithfulness}\label{app:th:amber}

Reconciliation constrains totals only. The gradient allocation of the amber surcharge contains a term that sums to zero, so the allocation without that term passes every reconciliation test; \cref{prop:B:amber} shows what dropping the term changes.

\begin{proposition}[The zero-sum surcharge term]\label{prop:B:amber}
Under the hypotheses of \cref{prop:B:smooth} with $B>J$, let $F^H:=k(A^B-A^J)$. Then:
\begin{itemize}[leftmargin=*,itemsep=1pt]
\item $F^H$ reconciles to $H$;
\item the gradient allocation is $A^H=F^H+(B-J)a^k$, where $a^k=\bigl(A^U-(U/V)A^V\bigr)/(2V)=w\odot\nabla k$;
\item the two coincide if and only if $a^k=0$.
\end{itemize}
Every reconciliation test, i.e.\ any check that a ledger or any grouping of it sums to the corresponding capital total, gives the same result for both. Their group subtotals, however, generally differ. If the inner sum is active in the minimum ($J+H+C_U<Z$), the same difference $(B-J)a^k$ appears in $A^K$.
\end{proposition}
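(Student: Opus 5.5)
The plan is to work inside the smooth setting of \cref{prop:B:smooth}. Its chain-rule induction identifies every node allocation with $x\odot\nabla f_b$. In particular, the amber rule \eqref{eq:amber} with $B>J$ is the gradient allocation $A^H=x\odot\nabla H$, since $\partial H/\partial(B,J,k)=(k,-k,B-J)$ and $a^k=x\odot\nabla k$. For the identity $a^k=(A^U-(U/V)A^V)/(2V)$, I would differentiate $k=U/(2V)$ and use (A4$'$), which gives $A^U=w\odot\nabla U$ and $A^V=w\odot\nabla V$. Factoring $1/(2V)$ then yields the stated form. This settles the second bullet.

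For the first bullet, I would sum $F^H$ directly. By reconciliation of the children (\cref{thm:B:recon}), $\mathbf 1^\top A^B=B$ and $\mathbf 1^\top A^J=J$, so $\mathbf 1^\top F^H=k(B-J)=H$. I would also note that $\mathbf 1^\top a^k=0$, either from the degree-0 Euler identity $x^\top\nabla k=0$ or directly as $(U-(U/V)V)/(2V)=0$. The third bullet is then immediate: $A^H-F^H=(B-J)a^k$ and $B-J>0$, so the two allocations coincide exactly when $a^k=0$.

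For the reconciliation-test claim, I would fix any grouping matrix $P_G$ with $\mathbf 1^\top P_G=\mathbf 1^\top$. The test compares $\mathbf 1^\top P_G a$ with the capital total, and $\mathbf 1^\top P_G\bigl((B-J)a^k\bigr)=(B-J)\mathbf 1^\top a^k=0$, so both allocations pass or fail identically. The same holds after propagation to $K$ by linearity of the rules. For the claim that subtotals generally differ, I would exhibit the difference $P_G(B-J)a^k$ per group, which is nonzero for any grouping that does not lump together the positive and negative entries of $a^k$. A one-line example suffices, such as a trade grouping with $U$ and $V$ loaded on different trades so that $a^k$ has a nonzero entry.

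Finally, I would propagate the difference to $A^K$. When $J+H+C_U<Z$, the minimum's active branch gives $A^m=A^J+A^H+A^{C_U}$, and the positive part contributes zero because $B>J$. Hence $A^K$ contains $A^H$ with coefficient one, and replacing $A^H$ by $F^H$ shifts $A^K$ by exactly $(B-J)a^k$. The only step needing care is the ``generally differ'' statement, since it is an existence/typical claim rather than a universal one. I would handle it with an explicit small instance rather than a general argument.
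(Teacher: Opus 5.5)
Your proposal is correct and follows essentially the same route as the paper: $\mathbf 1^\top F^H=k(B-J)=H$ from the children's reconciliation, $A^H$ obtained by applying the coefficients $(k,-k,B-J)$ to $(A^B,A^J,a^k)$ via \cref{prop:B:smooth}, all totals agreeing because $\mathbf 1^\top a^k=0$, and $\partial K/\partial H=1$ when the inner sum is strictly active in the minimum. For the ``generally differ'' claim the paper uses the kind of explicit instance you sketch, namely $U=w_1$, $V=w_1+w_2$ at $w=(1,1)$, which gives $a^k=(\tfrac18,-\tfrac18)$ and desk-total differences of $\pm(B-J)/8$; your ``$U$ and $V$ loaded on different trades'' is best read this way, since $V$ should contain the amber exposure.
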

\begin{proof}
$\mathbf 1^\top F^H=k(B-J)=H$. By \cref{prop:B:smooth}, $A^H$ applies $(k,-k,B-J)$ to $(A^B,A^J,a^k)$. The difference is $(B-J)a^k$, and $\mathbf 1^\top a^k=0$, so all capital totals agree. Subtotals differ in general.

For example, let trade~1 sit on an amber desk and trade~2 on a green desk, with $U=w_1$ and $V=w_1+w_2$. At $w=(1,1)$, $a^k=w\odot\nabla k=(\tfrac18,-\tfrac18)$. Each desk's total under $A^H$ then differs from its total under $F^H$ by $\pm(B-J)/8$.

Finally, $\partial K/\partial H=1$ when the inner sum is strictly active in the minimum.
\end{proof}

\subsection{Certified selective repricing}\label{app:th:cert}

On its declared option class, certified selective repricing brackets prices from a few exact anchors and reprices only the tail rows needed to certify the ES optimiser. \Cref{lem:B:anchor} gives the brackets, and \cref{thm:B:cert} shows that the certified answer is exact.

\begin{lemma}[Convex-anchor bounds]\label{lem:B:anchor}
Let $v$ be convex on $[a_1,a_m]$, with exact values at anchors $a_1<\dots<a_m$, and let $\sigma\in[a_k,a_{k+1}]$.
\begin{itemize}[leftmargin=*,itemsep=1pt]
\item \emph{Upper bound.} $v(\sigma)$ is at most the chord through the two bracketing anchors.
\item \emph{Lower bound.} For every pair $i<j$ with $\sigma\notin(a_i,a_j)$, $v(\sigma)$ is at least the line through $(a_i,v(a_i))$ and $(a_j,v(a_j))$ evaluated at $\sigma$.
\end{itemize}
The bounds apply to the lattice price of the option class used, because that price is convex in spot and scenario moves act on spot only.
\end{lemma}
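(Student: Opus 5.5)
The plan is to prove both bounds from a single property of convex functions: the slope of a secant is nondecreasing in each endpoint. For $u<v$ write $s(u,v)=(v(v)-v(u))/(v-u)$; to avoid clashing with the function $v$, call its arguments $p<r$. Convexity of $v$ on $[a_1,a_m]$ is equivalent to the three-slope inequality
\[
s(p,r)\le s(p,t)\le s(r,t)\qquad\text{for } p<r<t \text{ in } [a_1,a_m].
\]
I will derive each bound by placing $\sigma$ at the appropriate position in such a triple.

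\emph{Upper bound.} Since $\sigma\in[a_k,a_{k+1}]$, write $\sigma=\lambda a_k+(1-\lambda)a_{k+1}$ with $\lambda\in[0,1]$. The definition of convexity then gives
\[
v(\sigma)\le\lambda v(a_k)+(1-\lambda)v(a_{k+1}),
\]
and the right-hand side is exactly the chord through the two bracketing anchors evaluated at $\sigma$. This step needs nothing further.

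\emph{Lower bound.} Let $\ell_{ij}$ be the line through $(a_i,v(a_i))$ and $(a_j,v(a_j))$. The hypothesis $\sigma\notin(a_i,a_j)$ leaves three cases.
\begin{itemize}[leftmargin=*,itemsep=1pt]
\item If $\sigma=a_i$ or $\sigma=a_j$, then $\ell_{ij}(\sigma)=v(\sigma)$ and there is nothing to prove.
\item If $\sigma>a_j$, apply the three-slope inequality to $a_i<a_j<\sigma$ to get $s(a_i,a_j)\le s(a_j,\sigma)$. Multiplying by $\sigma-a_j>0$ gives $v(\sigma)\ge v(a_j)+s(a_i,a_j)(\sigma-a_j)=\ell_{ij}(\sigma)$.
\item If $\sigma<a_i$, apply the three-slope inequality to $\sigma<a_i<a_j$ to get $s(\sigma,a_i)\le s(a_i,a_j)$. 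Multiplying by $a_i-\sigma>0$ and rearranging gives $v(\sigma)\ge v(a_i)-s(a_i,a_j)(a_i-\sigma)=\ell_{ij}(\sigma)$.
\end{itemize}
Every point used lies in $[a_1,a_m]$, where convexity is assumed, so no extrapolation outside the domain occurs. Geometrically, this says a convex function lies above the extension of any of its secants outside that secant's interval.

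\emph{Applicability to the pricer.} The final sentence of the statement is not a separate mathematical claim. It transfers the hypothesis through the declared option class: the benchmark's lattice price is assumed convex in spot, and each scenario changes only the spot. The price at a scenario is therefore $v(\sigma)$ for one fixed convex function $v$, and the two bounds apply directly. I expect this to be the only real obstacle, because convexity of the lattice price in spot is an assumption about the pricer (in the paper's scope, single-underlying options such as American options) and is not proved here. The proof should state it explicitly as a hypothesis of the class. It should also note that scenario spots falling outside $[a_1,a_m]$ are not covered by the lemma and must be repriced exactly.
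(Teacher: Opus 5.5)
Your proof is correct and follows the paper's argument: the upper bound is convexity on $[a_k,a_{k+1}]$, and the lower bound comes from the three-slope inequality on $a_i<a_j<\sigma$, with $\sigma\in\{a_i,a_j\}$ trivial and $\sigma<a_i$ handled symmetrically. You write out the $\sigma<a_i$ case explicitly where the paper calls it symmetric, and you correctly treat convexity of the lattice price in spot as an assumption of the declared option class rather than something the lemma proves; both are consistent with the paper.
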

\begin{proof}
The upper bound is convexity on $[a_k,a_{k+1}]$. For the lower bound with $\sigma\ge a_j$, the three-slope inequality gives
\[\frac{v(a_j)-v(a_i)}{a_j-a_i}\le\frac{v(\sigma)-v(a_j)}{\sigma-a_j},\]
which rearranges to the claim. The case $\sigma=a_j$ is trivial, and $\sigma\le a_i$ is symmetric.
\end{proof}

\noindent The implementation widens these bounds by a floating-point pad but includes no separate declared term for pricer discretisation error, a limitation stated in the main text.

\begin{theorem}[Certified selective repricing]\label{thm:B:cert}
Consider one ES block with weights $p$, caps $c=p/\beta$ and true losses $z=Lw$. Suppose valid intervals $\ell_s\le z_s\le u_s$ are known, together with a set $\mathcal H$ and a boundary scenario $b\notin\mathcal H$, satisfying:
\begin{itemize}[leftmargin=*,itemsep=1pt]
\item $\sum_{\mathcal H}c_s<1\le\sum_{\mathcal H}c_s+c_b$;
\item $z_b$ is known exactly;
\item $\min_{s\in\mathcal H}\ell_s\ge z_b\ge\max_{r\notin\mathcal H\cup\{b\}}u_r$.
\end{itemize}
Let $q^\ast=c$ on $\mathcal H$, $q^\ast_b=1-\sum_{\mathcal H}c_s$, and $q^\ast=0$ elsewhere. Then:
\begin{enumerate}[label=(\roman*),leftmargin=*,itemsep=1pt]
\item $q^\ast$ is an optimal dual vertex. Hence $\rho(z)=q^{\ast\top}z$, and $w\odot L^\top q^\ast$ is a valid ES allocation. Both need exact prices only on $\mathcal H\cup\{b\}$.
\item If both inequalities are strict, $q^\ast$ is the unique optimum. Then $w\odot L^\top q^\ast$ equals the allocation obtained with full pricing.
\item If no certificate is found, repricing every row gives the exact answer, so the worst case is full pricing.
\end{enumerate}
\end{theorem}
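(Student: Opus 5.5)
The argument is LP duality applied to the dual form \eqref{eq:es}. For part (i), I first check that $q^\ast$ is feasible. On $\mathcal H$ it equals $c$, elsewhere it is $0$, and $q^\ast_b=1-\sum_{\mathcal H}c_s$, which lies in $(0,c_b]$ because $\sum_{\mathcal H}c_s<1\le\sum_{\mathcal H}c_s+c_b$. The entries therefore respect $0\le q\le c$ and sum to one.

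For optimality I will exhibit the threshold $\tau=z_b$ as a primal certificate rather than compare against arbitrary feasible $q$. Take any feasible $q$. Since $\mathbf 1^\top q=1$,
\[q^\top z=z_b+\textstyle\sum_sq_s(z_s-z_b)\le z_b+\sum_sc_s(z_s-z_b)_+,\]
using $0\le q_s\le c_s$. The separation condition $\min_{\mathcal H}\ell_s\ge z_b\ge\max_{r\notin\mathcal H\cup\{b\}}u_r$, together with the validity of the intervals, gives $z_s\ge z_b$ on $\mathcal H$ and $z_r\le z_b$ off $\mathcal H\cup\{b\}$. Hence the bound equals $z_b+\sum_{\mathcal H}c_s(z_s-z_b)$, which is exactly $q^{\ast\top}z$ once the mass constraint is expanded. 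So $q^\ast$ attains the dual maximum, and by weak duality $\rho(z)=q^{\ast\top}z$.

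\cref{lem:B:leaves}(i) then shows that $w\odot L^\top q^\ast$ reconciles and is a valid ES allocation. The certificate needs exact prices only on $\mathcal H\cup\{b\}$: $q^\ast$ vanishes elsewhere, so only those rows of $L$ enter $L^\top q^\ast$ and $q^{\ast\top}z$. The other rows are touched only through their interval bounds, which were used solely to verify the inequalities.

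For part (ii), with strict separation I will show that every optimal $q$ equals $q^\ast$. Equality in the chain above forces $q_s(z_s-z_b)=c_s(z_s-z_b)_+$ for every $s$. Strictness gives $z_s>z_b$ on $\mathcal H$, which forces $q_s=c_s$ there, and $z_r<z_b$ off $\mathcal H\cup\{b\}$, which forces $q_r=0$. The mass constraint then fixes $q_b$. The optimal face is the singleton $\{q^\ast\}$, so the full-pricing engine, whatever tie convention it uses, returns the same $q$ and hence the same allocation.

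Part (iii) is immediate: once every row is repriced, the intervals collapse to points and the ordinary exact engine applies.

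I expect the main obstacle to be \emph{precisely which inequalities must be strict} in (ii). Strictness is needed in the separations $z_s>z_b$ and $z_r<z_b$. Strictness of the mass inequality alone does not remove ties. I will read ``both inequalities'' as the two separation inequalities. If the mass condition holds with equality, $q^\ast_b=0$, which is still the unique optimum, so that case causes no trouble.
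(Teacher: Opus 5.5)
Your proof is correct and follows the paper's argument. Both versions pivot every loss around $z_b$ using $\mathbf 1^\top q=1$. Both then read off optimality, and under strict separation uniqueness, from the sign of each term. The only difference is packaging: you phrase the comparison as weak duality against the primal objective at $\tau=z_b$, which also certifies $z_b$ as an optimal threshold, while the paper writes $q^{\ast\top}z-q^\top z$ directly as a sum of nonnegative terms.

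One harmless slip is in your closing aside. The hypothesis $\sum_{\mathcal H}c_s<1$ makes $q^\ast_b>0$. Equality in $1\le\sum_{\mathcal H}c_s+c_b$ gives $q^\ast_b=c_b$, not $0$. Uniqueness under strict separation is unaffected.
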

\begin{proof}
$q^\ast$ is feasible, and the intervals give $z_s\ge z_b\ge z_r$ for $s\in\mathcal H$ and $r\notin\mathcal H\cup\{b\}$. For any feasible $q$, since $\mathbf 1^\top(q^\ast-q)=0$,
\[
q^{\ast\top}z-q^\top z=\sum_{s\in\mathcal H}(c_s-q_s)(z_s-z_b)+\sum_{r\notin\mathcal H\cup\{b\}}(-q_r)(z_r-z_b)\ \ge0 .
\]
This proves (i). Only rows in the support of $q^\ast$ enter the value and the allocation.

For (ii), suppose the separations are strict. If $q\neq q^\ast$ is feasible, then either $q_s<c_s$ for some $s\in\mathcal H$ or $q_r>0$ for some $r\notin\mathcal H\cup\{b\}$. The mass constraint forces one of these whenever $q_b\neq q^\ast_b$. Either way the corresponding term is strictly positive, so $q$ is not optimal.

(iii) is trivial.
\end{proof}

\begin{proposition}[Tail rows cannot be attributed from their totals]\label{prop:B:lower}
Take a scenario row with positive dual weight $q_s$ and $N_+$ nonzero entries $v_i=L_{s,i}w_i$. Assume no cross-trade structure, so every $v$ consistent with the observations is possible. Then no procedure, adaptive or not, that observes the total and fewer than $N_+-1$ entries can determine every trade's contribution $q_sv_i$.
\end{proposition}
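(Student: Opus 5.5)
An adversary argument shows this: I will construct, against any procedure, two entry vectors that agree on everything the procedure sees but give different trade-level splits.

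\emph{Setup.} Fix the row $s$ and its positive dual weight $q_s$. Let $P=\{i:v_i\neq0\}$, so $|P|=N_+$. Since there is no cross-trade structure, the procedure's information is the total $\tau=\sum_{i\in P}v_i$ and the values of the entries it queries. Its queries may be chosen adaptively from earlier answers.

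\emph{Adversary.} I answer every query with the true value $v_i$ of a fixed reference vector $v$. Because those answers are fixed, the adaptive procedure follows a single deterministic path. On that path it observes a set $O$ of indices with $|O|\le N_+-2$.

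\emph{Two unobserved nonzero entries.} The set $P\setminus O$ then contains at least two indices, say $j\neq k$. Queries of indices outside $P$ only reveal zeros, so they do not reduce $|P\setminus O|$ below this count.

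\emph{Perturbation.} Pick $\delta\neq0$ with $\delta\neq -v_j$ and $\delta\neq v_k$, and set
\[v'=v+\delta(e_j-e_k).\]
The perturbed vector has the same total $\tau$ and the same entries on $O$. Both perturbed entries remain nonzero, so $v'$ still has exactly $N_+$ nonzero entries and is a consistent input under the no-structure assumption.

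\emph{Indistinguishability.} On input $v'$ the procedure receives identical answers and follows the same path. It must therefore output the same attribution for $v$ and $v'$.

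\emph{Different contributions.} The true contributions nevertheless differ: $q_sv'_j-q_sv_j=q_s\delta\neq0$, because $q_s>0$. So the procedure is wrong on at least one of the two inputs.

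\emph{Main care point.} The step that needs the most care is the adaptivity. If the adversary answers truthfully according to one fixed reference $v$, the queried set $O$ is determined before $v'$ is chosen, and the argument goes through.

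\emph{Sharpness.} The bound $N_+-1$ is sharp. Once $N_+-1$ nonzero entries are observed, the last one equals the total minus their sum. This explains the threshold in the statement and connects it to the "essentially irreducible" remark in \cref{sec:cert}.
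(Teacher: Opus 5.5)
Your proof is correct and is essentially the paper's argument: shift two unobserved nonzero entries by $\pm\delta$, which preserves the row total and every observation while changing two contributions by $\pm q_s\delta$. Two details in your version are left implicit in the paper's one-line proof: you fix the reference answers so that the adaptive query path is determined before $v'$ is chosen, and you choose $\delta$ so that both perturbed entries stay nonzero.
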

\begin{proof}
At least two entries $i\neq j$ are unobserved. Replacing $(v_i,v_j)$ by $(v_i+\delta,v_j-\delta)$ preserves the total and every observation, but changes two contributions by $\pm q_s\delta$.
\end{proof}

\subsection{Post-hoc attribution methods on capital}\label{app:th:xai}

Integrated Gradients and SHAP are defined on capital only once their features and baseline are fixed. \Cref{prop:B:ig} computes Integrated Gradients for two choices of features, and \cref{rem:B:shap} identifies where SHAP over trades is undefined.

\begin{proposition}[Integrated Gradients on capital]\label{prop:B:ig}
Let $\mathrm{IG}_i(x)=x_i\int_0^1\partial_iK(\alpha x)\,d\alpha$ (baseline zero).
\begin{enumerate}[label=(\roman*),leftmargin=*,itemsep=1pt]
\item \emph{All (date, trade) positions as features.} If $K$ is differentiable at $x$, then $\mathrm{IG}(x)=x\odot\nabla K(x)$, and it sums to $K$. Under the hypotheses of \cref{prop:B:smooth} it equals the origin ledger $A^K$.
\item \emph{Only today's trades as features, history held fixed.} Suppose $K(x_{<T-1},\cdot)$ is differentiable at $\alpha w$ for almost every $\alpha\in(0,1]$. Then
\[\sum_i\mathrm{IG}_i=K(x)-\lim_{\alpha\downarrow0}K(x_{<T-1},\alpha w),\]
and the limit exists. In general this differs both from $K$ and from $w^\top\nabla_wK$.
\end{enumerate}
\end{proposition}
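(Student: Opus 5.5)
Part (i) reduces to positive homogeneity. By (A1)--(A5), $K(\alpha x)=\alpha K(x)$ for every $\alpha>0$. If $K$ is differentiable at $x$, then for $\alpha\in(0,1]$ differentiability at $\alpha x$ follows from homogeneity, and degree-zero homogeneity of the gradient gives $\nabla K(\alpha x)=\nabla K(x)$. This holds because
\[
K(\alpha x+\alpha h)=\alpha K(x+h),
\]
so differentiability transfers along the ray and the gradient is constant on it. The integrand in $\mathrm{IG}$ is therefore constant, and $\mathrm{IG}(x)=x\odot\nabla K(x)$. Euler's identity, obtained by differentiating $K(\alpha x)=\alpha K(x)$ at $\alpha=1$ as in the proof of \cref{prop:B:frozen}, gives $\mathbf 1^\top\mathrm{IG}=x^\top\nabla K(x)=K(x)$. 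Under the hypotheses of \cref{prop:B:smooth}, that proposition identifies $x\odot\nabla K(x)$ with $A^K$, which finishes (i).

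For (ii), set $\phi(\alpha)=K(x_{<T-1},\alpha w)$ on $(0,1]$. The past books are held fixed, so homogeneity no longer helps, and I would argue through the fundamental theorem of calculus. The first step is to show that $K$ is locally Lipschitz. Every node is a composition of piecewise-linear leaves (ES, quantile, support functions), the locally Lipschitz maps of (A4), and $C^1$ or Lipschitz outer maps: the horizon root, norms, max/min, the ratio with $R>0$ by (A3), and $k$ with $V>0$. Hence $\phi$ is Lipschitz, and in particular absolutely continuous, on each $[\epsilon,1]$. The assumed differentiability of $K(x_{<T-1},\cdot)$ at $\alpha w$ for almost every $\alpha$ gives $\phi'(\alpha)=w^\top\nabla_wK(x_{<T-1},\alpha w)$ almost everywhere. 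Summing the IG components then gives $\int_\epsilon^1\phi'=K(x)-\phi(\epsilon)$, and letting $\epsilon\downarrow0$ yields the formula.

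The main obstacle is the limit $\alpha\downarrow0$, together with the Lipschitz bound near the point $(x_{<T-1},0)$. Today's ES blocks, default, stress and SA inputs tend to zero as $\alpha\to0$ by continuity of degree-one functions at $0$. The delicate nodes are the ratio $F_c/R_c$, whose denominator $R_c\to0$, and $k=U/(2V)$, whose denominator $V\to0$. I would handle them through homogeneity in $w$ alone. Each of $F_c/R_c$ and $k$ depends only on today's book and is degree zero, so it is constant in $\alpha$. The products $S_c\max(1,F_c/R_c)=\alpha G_c(w)$ and $H$ are then $O(\alpha)$ and stay Lipschitz in $\alpha$. The history averages contain the fixed past terms plus $O(\alpha)$ contributions. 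Consequently $\phi$ is Lipschitz on all of $(0,1]$ and the limit exists, equal to the capital of the history-only configuration with today's terms set to zero. If domain issues with $R_c$ at $w$ itself arise, I would note that (A3) is assumed at $x$, which covers $\alpha w$ for $\alpha>0$.

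Finally, to show that the sum generally differs from both $K$ and $w^\top\nabla_wK$, I would reuse \cref{ex:hist}. With $C=\max\{I_1,1.5\cdot\tfrac12(3+I_1)\}$ and $I_1=\alpha$, the average branch stays active for all $\alpha\in(0,1]$. Thus $\phi(\alpha)=2.25+0.75\alpha$, the limit is $2.25$, and $\sum\mathrm{IG}=0.75\neq3=K$. To separate the sum from $w^\top\nabla_wK$, I would modify the example so that a branch switch occurs along the ray: choose $I_0$ small enough that the latest branch is active at $\alpha=1$ but the average branch is active near $0$. The integral then picks up two slopes, while the gradient at $\alpha=1$ sees only one.
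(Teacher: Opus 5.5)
Your proposal is correct and follows the paper's own proof. Part (i) uses the same homogeneity argument: the gradient is constant along the ray, then Euler's identity and \cref{prop:B:smooth} finish it. Part (ii) rests on the same key observation: along the ray, today's quantities scale by $\alpha$, the history averages are affine, and $F_c/R_c$ and $k$ are constant. The paper phrases this as $\phi$ being piecewise affine, which gives the Lipschitz property on all of $(0,1]$ directly, so your separate local-Lipschitz step on $[\epsilon,1]$ is unnecessary. Your branch-switching example (which needs $0<I_0<1/3$) is the paper's, which takes $I_0=0.2$ and gets $\sum_i\mathrm{IG}_i=0.85$ against $K=w^\top\nabla_wK=1$.
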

\begin{proof}
(i) Homogeneity gives $K(y)=\alpha K(y/\alpha)$. So differentiability at $x$ implies differentiability at $\alpha x$, with $\nabla K(\alpha x)=\nabla K(x)$, for every $\alpha>0$.

(ii) Set $\phi(\alpha)=K(x_{<T-1},\alpha w)$. Along the path:
\begin{itemize}[leftmargin=*,itemsep=0pt]
\item every quantity computed on today's book scales by $\alpha$;
\item averages that include today are affine in $\alpha$;
\item $k$ and the ratios $F_c/R_c$ are constant.
\end{itemize}
So $\phi$ is a finite composition of maxima and minima of affine functions of $\alpha$, i.e.\ piecewise affine on $(0,1]$. It is therefore Lipschitz, with a limit $\phi(0+)$. Where $K(x_{<T-1},\cdot)$ is differentiable, $\phi'(\alpha)=\sum_iw_i\partial_iK(x_{<T-1},\alpha w)$. Integrating the absolutely continuous $\phi$ gives the formula.

\emph{Example.} Take $C=\max\{I_t,\,1.5\cdot\tfrac12(0.2+I_t)\}$ at $I_t=1$. Then $C=1$ on the latest branch, and today's sensitivity is $1$. Along the path, the average branch is active for $\alpha<0.6$ (slope $0.75$) and the latest branch after (slope $1$). So IG $=0.6\cdot0.75+0.4\cdot1=0.85=C-0.15$.
\end{proof}

\begin{remark}[Shapley values and SHAP over trades]\label{rem:B:shap}
Interventional SHAP with baseline zero over today's trades is the Shapley value of $v(S)=K(x_{<T-1},w_S)$. It needs $v(S)$ for every sub-book $S$. Under (A3), $v(S)$ is undefined in two cases:
\begin{itemize}[leftmargin=*,itemsep=1pt]
\item a present class of $S$ has zero reduced-set ES, so the ratio does not exist;
\item $S$ has amber positions but $V(w_S)=0$.
\end{itemize}
Any extension to such sub-books would be a modelling choice outside the regulatory formula. Where such sub-books exist, SHAP over trades is undefined as a mathematical object, not merely hard to estimate.
\end{remark}

%% file: tab_tierm_rows.tex
ordinary & 1 & 105.39 & smooth & 75.98 & 72.1\% & 40 / 1.0 \\
ordinary & 2 & 127.56 & smooth & 72.50 & 56.8\% & 40 / 1.0 \\
ordinary & 3 & 76.42 & smooth & 37.44 & 49.0\% & 40 / 1.0 \\
nonlinear & 1 & 95.42 & smooth & 46.61 & 48.8\% & 40 / 1.0 \\
nonlinear & 2 & 94.27 & smooth & 54.43 & 57.7\% & 40 / 1.0 \\
nonlinear & 3 & 79.59 & smooth & 44.56 & 56.0\% & 40 / 1.0 \\
tail-difficult & 1 & 454.17 & smooth & 417.20 & 91.9\% & 40 / 1.0 \\
tail-difficult & 2 & 352.53 & smooth & 226.73 & 64.3\% & 40 / 1.0 \\
tail-difficult & 3 & 311.01 & smooth & 233.69 & 75.1\% & 40 / 1.0 \\
regulatory near-tie & 1 & 66.57 & tie; no answer & -- & -- & 40 / 1.0 \\
regulatory near-tie & 2 & 84.15 & tie; no answer & -- & -- & 40 / 1.0 \\
regulatory near-tie & 3 & 138.16 & tie; no answer & -- & -- & 40 / 1.0 \\
history-heavy & 1 & 67.88 & smooth & 66.92 & 98.6\% & 40 / 1.0 \\
history-heavy & 2 & 84.37 & smooth & 80.44 & 95.3\% & 40 / 1.0 \\
history-heavy & 3 & 107.54 & smooth & 101.10 & 94.0\% & 40 / 1.0 \\
held-out & 1 & 97.29 & smooth & 46.20 & 47.5\% & 40 / 1.0 \\
held-out & 2 & 93.62 & smooth & 19.69 & 21.0\% & 40 / 1.0 \\
held-out & 3 & 99.76 & tie; no answer & -- & -- & 40 / 1.0 \\

%% file: tab_tiers_rows.tex
none (smooth control) & $N{=}10$: 3, $N{=}12$: 3 & smooth point 6 \\
ES tail tie & $N{=}10$: 3, $N{=}12$: 3 & exact set 6 \\
ratio-floor tie & $N{=}10$: 3, $N{=}12$: 3, $N{=}14$: 3 & exact set 9 \\
history-maximum tie & $N{=}10$: 3, $N{=}12$: 3 & exact set 6 \\
default-quantile tie & $N{=}10$: 3, $N{=}12$: 3 & exact set 6 \\
final min/max tie & $N{=}10$: 3, $N{=}12$: 3, $N{=}14$: 3 & exact set 9 \\

%% file: tab_audit_rows.tex
1 & keep current book & 105.39 \\
2 & halve the largest position & 105.38 \\
3 & remove a random subset of trades & 96.37 \\
4 & remove a random subset of trades & 103.53 \\
5 & remove a random subset of trades (lowest) & 92.48 \\
6 & scale a random subset by 1.5 & 110.69 \\
7 & scale a random subset by 1.5 & 115.83 \\
8 & scale the current book by 1.02 & 107.05 \\

%% file: tab_robust_rows.tex
Analytical Euler & 20/60 & 21.7 & 1.6 & 55.1 \\
Numerical Euler & 56/60 & 24.0 & 1.8 & 60.1 \\
Incremental & 60/60 & 23.8 & 1.7 & 60.1 \\
Activity-based & 0/60 & \multicolumn{3}{l}{no valid cells (not applicable by construction)} \\
Li--Xing I & 10/60 & 0.6 & 0.2 & 1.1 \\
Li--Xing II & 10/60 & 0.6 & 0.2 & 1.1 \\
Marginal measures & 56/60 & 0.5 & 0.0 & 1.3 \\
Exact engine & 72/72 & 100.0 & 100.0 & 100.0 \\
Certified repricing & 6/12 & 75.5 & 37.6 & 98.3 \\

%% file: main.bbl
\begin{thebibliography}{36}
\providecommand{\natexlab}[1]{#1}
\providecommand{\url}[1]{\texttt{#1}}
\expandafter\ifx\csname urlstyle\endcsname\relax
  \providecommand{\doi}[1]{doi: #1}\else
  \providecommand{\doi}{doi: \begingroup \urlstyle{rm}\Url}\fi

\bibitem[Acerbi and Tasche(2002)]{acerbi2002coherence}
Carlo Acerbi and Dirk Tasche.
\newblock On the coherence of expected shortfall.
\newblock \emph{Journal of Banking \& Finance}, 26\penalty0 (7):\penalty0
  1487--1503, 2002.
\newblock \doi{10.1016/S0378-4266(02)00283-2}.

\bibitem[Adebayo et~al.(2018)Adebayo, Gilmer, Muelly, Goodfellow, Hardt, and
  Kim]{adebayo2018sanity}
Julius Adebayo, Justin Gilmer, Michael Muelly, Ian Goodfellow, Moritz Hardt,
  and Been Kim.
\newblock Sanity checks for saliency maps.
\newblock In \emph{Advances in Neural Information Processing Systems 31}, pages
  9525--9536, 2018.

\bibitem[Athalye et~al.(2018)Athalye, Carlini, and
  Wagner]{athalye2018obfuscated}
Anish Athalye, Nicholas Carlini, and David Wagner.
\newblock Obfuscated gradients give a false sense of security: Circumventing
  defenses to adversarial examples.
\newblock In \emph{Proceedings of the 35th International Conference on Machine
  Learning}, volume~80 of \emph{PMLR}, pages 274--283, 2018.

\bibitem[Balog et~al.(2017)Balog, B{\'a}tyi, Cs{\'o}ka, and
  Pint{\'e}r]{balog2017properties}
D{\'o}ra Balog, Tam{\'a}s~L{\'a}szl{\'o} B{\'a}tyi, P{\'e}ter Cs{\'o}ka, and
  Mikl{\'o}s Pint{\'e}r.
\newblock Properties and comparison of risk capital allocation methods.
\newblock \emph{European Journal of Operational Research}, 259\penalty0
  (2):\penalty0 614--625, 2017.
\newblock \doi{10.1016/j.ejor.2016.10.052}.

\bibitem[{Basel Committee on Banking Supervision}(2013)]{bcbs2013riskdata}
{Basel Committee on Banking Supervision}.
\newblock Principles for effective risk data aggregation and risk reporting.
\newblock Technical Report BCBS 239, Bank for International Settlements, Basel,
  January 2013.
\newblock URL \url{https://www.bis.org/publ/bcbs239.htm}.

\bibitem[{Basel Committee on Banking Supervision}(2019)]{bcbs2019mar}
{Basel Committee on Banking Supervision}.
\newblock Minimum capital requirements for market risk.
\newblock Technical Report d457, Bank for International Settlements, Basel,
  January 2019.
\newblock URL \url{https://www.bis.org/bcbs/publ/d457.htm}.

\bibitem[{Basel Committee on Banking Supervision}(2020)]{bcbs2020mar33}
{Basel Committee on Banking Supervision}.
\newblock {MAR33}: Internal models approach: capital requirements calculation.
\newblock The Basel Framework, Bank for International Settlements, 2020.
\newblock URL
  \url{https://www.bis.org/committees/bcbs/basel-framework/standard/mar/33/inforce/2023-01-01/published/2020-06-05}.
\newblock Version published 5 June 2020; in force from 1 January 2023.

\bibitem[{Board of Governors of the Federal Reserve System} and {Office of the
  Comptroller of the Currency}(2011)]{frb2011sr117}
{Board of Governors of the Federal Reserve System} and {Office of the
  Comptroller of the Currency}.
\newblock Supervisory guidance on model risk management.
\newblock SR Letter 11-7, Board of Governors of the Federal Reserve System,
  April 2011.
\newblock URL
  \url{https://www.federalreserve.gov/boarddocs/srletters/2011/sr1107.htm}.

\bibitem[Buch and Dorfleitner(2008)]{buch2008coherent}
A.~Buch and G.~Dorfleitner.
\newblock Coherent risk measures, coherent capital allocations and the gradient
  allocation principle.
\newblock \emph{Insurance: Mathematics and Economics}, 42\penalty0
  (1):\penalty0 235--242, 2008.
\newblock \doi{10.1016/j.insmatheco.2007.02.006}.

\bibitem[Chen(2025)]{chen2025explaining}
Dangxing Chen.
\newblock Explaining risks: axiomatic risk attributions for financial models.
\newblock \emph{Quantitative Finance}, 25\penalty0 (6):\penalty0 1007--1014,
  2025.
\newblock \doi{10.1080/14697688.2025.2519837}.

\bibitem[Clarke(1983)]{clarke1983optimization}
Frank~H. Clarke.
\newblock \emph{Optimization and Nonsmooth Analysis}.
\newblock John Wiley \& Sons, New York, 1983.

\bibitem[Cox et~al.(1979)Cox, Ross, and Rubinstein]{cox1979option}
John~C. Cox, Stephen~A. Ross, and Mark Rubinstein.
\newblock Option pricing: A simplified approach.
\newblock \emph{Journal of Financial Economics}, 7\penalty0 (3):\penalty0
  229--263, 1979.
\newblock \doi{10.1016/0304-405X(79)90015-1}.

\bibitem[Denault(2001)]{denault2001coherent}
Michel Denault.
\newblock Coherent allocation of risk capital.
\newblock \emph{Journal of Risk}, 4\penalty0 (1):\penalty0 1--34, 2001.
\newblock \doi{10.21314/JOR.2001.053}.

\bibitem[Dong et~al.(2025)Dong, Wu, Zhang, Dai, Zhang, Ye, Chen, and
  Cheng]{dong2025llm}
Yifei Dong, Fengyi Wu, Kunlin Zhang, Yilong Dai, Sanjian Zhang, Wanghao Ye,
  Sihan Chen, and Zhi-Qi Cheng.
\newblock Large language model agents in finance: A survey bridging research,
  practice, and real-world deployment.
\newblock In \emph{Findings of the Association for Computational Linguistics:
  EMNLP 2025}, pages 17889--17907, 2025.
\newblock \doi{10.18653/v1/2025.findings-emnlp.972}.

\bibitem[Hagan et~al.(2025)Hagan, Lesniewski, Skoufis, and
  Woodward]{hagan2025portfolio}
Patrick~S. Hagan, Andrew Lesniewski, Georgios~E. Skoufis, and Diana~E.
  Woodward.
\newblock Portfolio risk allocation through {Shapley} value.
\newblock \emph{International Journal of Financial Engineering}, 12\penalty0
  (02):\penalty0 2350004, 2025.
\newblock \doi{10.1142/S2424786323500044}.

\bibitem[Jacovi and Goldberg(2020)]{jacovi2020towards}
Alon Jacovi and Yoav Goldberg.
\newblock Towards faithfully interpretable {NLP} systems: How should we define
  and evaluate faithfulness?
\newblock In \emph{Proceedings of the 58th Annual Meeting of the Association
  for Computational Linguistics}, pages 4198--4205, 2020.
\newblock \doi{10.18653/v1/2020.acl-main.386}.

\bibitem[Janzing et~al.(2020)Janzing, Minorics, and
  Bloebaum]{janzing2020feature}
Dominik Janzing, Lenon Minorics, and Patrick Bloebaum.
\newblock Feature relevance quantification in explainable {AI}: A causal
  problem.
\newblock In \emph{Proceedings of the 23rd International Conference on
  Artificial Intelligence and Statistics}, volume 108 of \emph{PMLR}, pages
  2907--2916, 2020.

\bibitem[Kalkbrener(2005)]{kalkbrener2005axiomatic}
Michael Kalkbrener.
\newblock An axiomatic approach to capital allocation.
\newblock \emph{Mathematical Finance}, 15\penalty0 (3):\penalty0 425--437,
  2005.
\newblock \doi{10.1111/j.1467-9965.2005.00227.x}.

\bibitem[Koh and Liang(2017)]{koh2017understanding}
Pang~Wei Koh and Percy Liang.
\newblock Understanding black-box predictions via influence functions.
\newblock In \emph{Proceedings of the 34th International Conference on Machine
  Learning}, volume~70 of \emph{PMLR}, pages 1885--1894, 2017.

\bibitem[Koike et~al.(2025)Koike, Chen, and Lin]{koike2025forecasting}
Takaaki Koike, Cathy W.~S. Chen, and Edward M.~H. Lin.
\newblock Forecasting and backtesting gradient allocations of expected
  shortfall.
\newblock \emph{Insurance: Mathematics and Economics}, 124:\penalty0 103130,
  2025.
\newblock \doi{10.1016/j.insmatheco.2025.103130}.

\bibitem[Kumar et~al.(2020)Kumar, Venkatasubramanian, Scheidegger, and
  Friedler]{kumar2020problems}
I.~Elizabeth Kumar, Suresh Venkatasubramanian, Carlos Scheidegger, and Sorelle
  Friedler.
\newblock Problems with {Shapley}-value-based explanations as feature
  importance measures.
\newblock In \emph{Proceedings of the 37th International Conference on Machine
  Learning}, volume 119 of \emph{PMLR}, pages 5491--5500, 2020.

\bibitem[Kuntz and Scholtes(1994)]{kuntz1994structural}
Ludwig Kuntz and Stefan Scholtes.
\newblock Structural analysis of nonsmooth mappings, inverse functions, and
  metric projections.
\newblock \emph{Journal of Mathematical Analysis and Applications},
  188\penalty0 (2):\penalty0 346--386, 1994.

\bibitem[Li and Xing(2018)]{li2018capital}
Luting Li and Hao Xing.
\newblock Capital allocation under the {Fundamental Review of Trading Book},
  2018.
\newblock URL \url{https://arxiv.org/abs/1801.07358}.

\bibitem[Lundberg and Lee(2017)]{lundberg2017unified}
Scott~M. Lundberg and Su-In Lee.
\newblock A unified approach to interpreting model predictions.
\newblock In \emph{Advances in Neural Information Processing Systems 30}, pages
  4765--4774. Curran Associates, Inc., 2017.

\bibitem[{Prudential Regulation Authority}(2023)]{pra2023ss123}
{Prudential Regulation Authority}.
\newblock Model risk management principles for banks.
\newblock Supervisory Statement SS1/23, Bank of England, May 2023.
\newblock URL
  \url{https://www.bankofengland.co.uk/prudential-regulation/publication/2023/may/model-risk-management-principles-for-banks-ss}.

\bibitem[Rockafellar(1970)]{rockafellar1970convex}
R.~Tyrrell Rockafellar.
\newblock \emph{Convex Analysis}.
\newblock Princeton University Press, Princeton, NJ, 1970.

\bibitem[Rockafellar and Uryasev(2000)]{rockafellar2000optimization}
R.~Tyrrell Rockafellar and Stanislav Uryasev.
\newblock Optimization of conditional value-at-risk.
\newblock \emph{Journal of Risk}, 2\penalty0 (3):\penalty0 21--41, 2000.
\newblock \doi{10.21314/JOR.2000.038}.

\bibitem[Rockafellar and Uryasev(2002)]{rockafellar2002conditional}
R.~Tyrrell Rockafellar and Stanislav Uryasev.
\newblock Conditional value-at-risk for general loss distributions.
\newblock \emph{Journal of Banking \& Finance}, 26\penalty0 (7):\penalty0
  1443--1471, 2002.
\newblock \doi{10.1016/S0378-4266(02)00271-6}.

\bibitem[Rudin(2019)]{rudin2019stop}
Cynthia Rudin.
\newblock Stop explaining black box machine learning models for high stakes
  decisions and use interpretable models instead.
\newblock \emph{Nature Machine Intelligence}, 1\penalty0 (5):\penalty0
  206--215, 2019.
\newblock \doi{10.1038/s42256-019-0048-x}.

\bibitem[Scaringi and Bianchetti(2025)]{scaringi2025sharpening}
Marco Scaringi and Marco Bianchetti.
\newblock Sharpening {Shapley} allocation: from {Basel} 2.5 to {FRTB}, 2025.
\newblock URL \url{https://arxiv.org/abs/2511.12391}.

\bibitem[Scholtes(2012)]{scholtes2012introduction}
Stefan Scholtes.
\newblock \emph{Introduction to Piecewise Differentiable Equations}.
\newblock Springer, New York, 2012.
\newblock \doi{10.1007/978-1-4614-4340-7}.

\bibitem[Schulze(2018)]{schulze2018capital}
Robert Schulze.
\newblock Capital allocation under the {FRTB} regime via marginal measures.
\newblock SSRN preprint 3265320, 2018.

\bibitem[Shapley(1953)]{shapley1953value}
Lloyd~S. Shapley.
\newblock A value for $n$-person games.
\newblock In H.~W. Kuhn and A.~W. Tucker, editors, \emph{Contributions to the
  Theory of Games, Volume {II}}, number~28 in Annals of Mathematics Studies,
  pages 307--318. Princeton University Press, Princeton, NJ, 1953.
\newblock \doi{10.1515/9781400881970-018}.

\bibitem[Sundararajan et~al.(2017)Sundararajan, Taly, and
  Yan]{sundararajan2017axiomatic}
Mukund Sundararajan, Ankur Taly, and Qiqi Yan.
\newblock Axiomatic attribution for deep networks.
\newblock In \emph{Proceedings of the 34th International Conference on Machine
  Learning}, volume~70 of \emph{PMLR}, pages 3319--3328, 2017.

\bibitem[Tasche(2008)]{tasche2008euler}
Dirk Tasche.
\newblock Capital allocation to business units and sub-portfolios: the {Euler}
  principle, 2008.
\newblock URL \url{https://arxiv.org/abs/0708.2542}.
\newblock arXiv v3 (June 2008); v1 (2007) titled ``Euler Allocation: Theory and
  Practice''.

\bibitem[Yao et~al.(2023)Yao, Zhao, Yu, Du, Shafran, Narasimhan, and
  Cao]{yao2023react}
Shunyu Yao, Jeffrey Zhao, Dian Yu, Nan Du, Izhak Shafran, Karthik Narasimhan,
  and Yuan Cao.
\newblock {ReAct}: Synergizing reasoning and acting in language models.
\newblock In \emph{International Conference on Learning Representations}, 2023.

\end{thebibliography}
